\documentclass[10pt, two column, twoside]{IEEEtran}
\usepackage[utf8]{inputenc} 
\usepackage[T1]{fontenc}
\usepackage{url}
\usepackage{ifthen}
\usepackage[cmex10]{amsmath}
\usepackage{array} 

\interdisplaylinepenalty=2500 
\usepackage{enumerate}
\usepackage{amssymb}
\usepackage{amsmath} 
\usepackage[lined,boxed,commentsnumbered, ruled]{algorithm2e}
\usepackage{mathrsfs}
\usepackage{bm}
\usepackage{tikz}
\usepackage{algorithmic}
\usetikzlibrary{arrows}
\usepackage{subfigure}
\usepackage{graphicx,booktabs,multirow}
\usepackage{epstopdf}
\usepackage{subfigure}
\usepackage{multirow}
\usepackage{tabularx} 
\usepackage{booktabs}
\DeclareMathOperator*{\argmin}{arg\,min}

\definecolor{colorhkust}{RGB}{20,43,140}
\definecolor{colortsinghua}{RGB}{116,52,129}
\definecolor{color1}{RGB}{128,0,0}

\usepackage{amsthm}

\newtheorem{lemma}{Lemma}
\newtheorem{theorem}{Theorem}

\newtheorem{definition}{Definition}
\newtheorem{remark}{Remark}

%

\newcommand{\diagg}{\mathrm{diag}}

\hyphenation{op-tical net-works semi-conduc-tor}

\begin{document}

        \title{Nonconvex Demixing from Bilinear Measurements}

   \author{Jialin Dong, \textit{Student Member}, \textit{IEEE}, and Yuanming~Shi, \textit{Member}, \textit{IEEE}   \thanks{J. Dong and Y. Shi are with the School of Information Science and Technology, ShanghaiTech University, Shanghai 201210, China (e-mail: \{dongjl, shiym\}@shanghaitech.edu.cn).}

    }
        
        \maketitle

\begin{abstract}
        We consider the  problem of demixing a sequence of 
        source signals from the sum of noisy bilinear measurements. It is a generalized mathematical model for blind demixing with blind deconvolution, which is prevalent across the areas of dictionary learning, image processing, and communications. However, state-of-the-art convex methods for  blind demixing via semidefinite programming are computationally infeasible for large-scale problems. Although the existing nonconvex algorithms are able to address the scaling issue, they normally  require proper regularization  to establish optimality guarantees. The additional regularization  yields tedious
algorithmic parameters and pessimistic convergence rates with conservative step sizes.
To address the limitations of existing methods, we thus develop a provable nonconvex demixing procedure via Wirtinger flow, much like vanilla gradient descent, to harness the  benefits of {{regularization free, fast convergence rate with aggressive step size and computational optimality guarantees}}. This is achieved by exploiting the benign geometry of the blind demixing problem, thereby revealing that Wirtinger flow enforces the regularization-free iterates in the region of strong convexity and qualified level of smoothness where the step size can be chosen aggressively. 

\begin{IEEEkeywords}
Blind demixing, blind deconvolution, bilinear measurements, nonconvex optimization, Wirtinger flow, regularization-free, statistical and computational guarantees.    
\end{IEEEkeywords}
\end{abstract}

\section{Introduction}
Demixing a sequence of source signals from the sum of bilinear measurements provides a generalized mathematical modeling framework for blind demixing with blind deconvolution \cite{ling2015blind, jung2017blind,dong}. It spans a wide scope of applications ranging from communication \cite{wang1998blind}, imaging \cite{campisi2017blind}, and machine learning \cite{bristow2013fast}, to the recent application in the context of the Internet-of-Things for sporadic and short messages communications over unknown channels \cite{dong}.  Although blind demixing can be regarded as a variant of blind deconvolution \cite{ahmed2014blind} by extending the problem of  ``single-source'' setting to the ``multi-source'' setting, it is nontrivial to accomplish the extension. The main reason is that  the ``incoherence'' between different sources brings unique challenges to develop effective algorithms for blind demixing with theoretical guarantees \cite{ling2015blind, jung2017blind, ling2018regularized}. In addition, the bilinear measurements in the blind demixing problem hamper the extension of the results for the demixing problem with linear measurements  \cite{mccoy2014sharp}.
 Moreover, the demixing procedure often involves solving highly nonconvex optimization problems which are generally dreadful to tackle. In particular,  local stationary points bring severe challenges since  it is usually intractable to even check local optimality for a feasible point \cite{ma2017implicit}.

Despite the general intractability, recent years have seen progress on convex relaxation approach for demixing problems.
 Specifically, sharp recovery bound for convex demixing with linear measurements has been established in \cite{amelunxen2014living} based on the integral geometry technique  \cite{amelunxen2014living} for analyzing the convex optimization problems with random constraints. Moreover, by lifting the original bilinear model into the the linear model with rank-one matrix, the provable convex relaxation approach for solving the blind deconvolution problem via semidefinite programming has been developed in \cite{ahmed2014blind}. Ling {\it{et al.}} in \cite{ling2015blind} further extended the theoretical analysis for blind deconvolution with single source  \cite{ahmed2014blind} to the blind demixing problem with multiple sources. The theoretical guarantees for blind demixing have been recently improved in \cite{jung2017blind}, which are built on the concept of restricted isometry property originally introduced in \cite{candes2005decoding}. Despite attractive theoretical guarantees, such convex relaxation methods fail in the high-dimensional data setting due to the high computational and storage cost for solving large-scale semidefinite programming problems.

To address the scaling issue of the convex relaxation approaches, a recent line of works has investigated  computationally efficient methods based on nonconvex optimization paradigms with  theoretical guarantees. For  high-dimensional estimation problems via nonconvex optimization methods,
state-of-the-art results can be divided into two categories, i.e., local geometry and global geometry. In the line of works that focuses on the local geometry, one shows that iterative algorithm converges to global solution rapidly when the initialization is close to the ground truth. The list of this line of successful works includes matrix completion \cite{Luo_TIT16}, phase retrieval \cite{candes2015phase,chen2015solving, ma2017implicit}, blind deconvolution \cite{li2016rapid} and blind demixing \cite{ling2018regularized}. 
The second line of works explores the global landscape of the objective function and aims to show that all local minima are globally optimal under suitable statistical conditions while the saddle points can be escaped efficiently via nonconvex iterative procedures with random initialization. The successful examples include matrix sensing \cite{bhojanapalli2016global}, matrix completion \cite{ge2017no}, dictionary learning \cite{sun2017complete}, tensor decomposition \cite{ge2015escaping}, synchronization problem \cite{bandeira2016low} and learning shallow neural networks \cite{soltanolkotabi2017theoretical}.

The  nonconvex optimization paradigm for high-dimensional estimation has also recently been applied in the setting of  blind demixing. Specifically, 
a nonconvex Riemannian optimization algorithm was developed in \cite{dong} by exploiting the manifold geometry of fixed-rank matrices. However, due to complicated iterative strategies of in the Riemannian trust-region algorithms, it is challenging to provide high-dimensional statistical analysis for such nonconvex strategy. Ling {\it{et al.}} in \cite{ling2018regularized}  developed a    regularized gradient descent procedure to optimize the nonconvex loss function directly, in which the regularization accounts for guaranteeing incoherence.
Although the regularized nonconvex  procedure in \cite{ling2018regularized} provides appealing computational properties with optimality guarantees, it usually introduces tedious algorithmic parameters that need to be carefully tuned. Moreover, theoretical analysis in \cite{ling2018regularized} provides a pessimistic convergence rate with a severely conservative step size.

In contrast, the Wirtinger flow algorithm \cite{candes2015phase},  which consists of spectral initialization and vanilla gradient descent updates without regularization, turns out to yield theoretical guarantees for important high-dimensional statistical estimation problems. In particular, the optimality guarantee for phase retrieval  was established in \cite{candes2015phase}. However, the theoretical results in \cite{candes2015phase} only ensure that the iterates
of the Wirtinger flow algorithm remain in the $\ell_2$-ball, in which the step size is chosen conservatively, yielding slow convergence rate. The statistical and computational efficiency was further improved in \cite{chen2015solving} via the truncated Wirtinger flow by carefully controlling search directions, much like regularized gradient descent. To harness all benefits of regularization free, fast convergence rates with aggressive step size and computational optimality guarantees, Ma {\it{et al.}} \cite{ma2017implicit} has recently uncovered that the Wirtinger flow algorithm (without regularization) implicitly enforces iterates within the intersection between $\ell_2$-ball and the incoherence region, i.e., the region of incoherence and contraction, for the nonconvex estimation problems of phase retrieval, low-rank matrix completion, and blind deconvolution. By exploiting the local geometry in such a region, i.e., strong convexity and qualified level of smoothness, the step size of the iterative algorithm can be chosen more aggressively, yielding faster convergence rate.  

In the present work, we extend the knowledge of implicit regularization in the nonconvex statistical estimation problems \cite{ma2017implicit} by studying the unrevealed blind demixing problem. It turns out that, for the blind demixing problem, our theory suggests a more aggressive step size for the Wirtinger flow algorithm compared with the results in \cite{ling2018regularized}, yielding substantial computational savings for blind demixing problem. The extension turns out to be nontrivial since the ``incoherence'' between multiple sources for blind demixing leads to distortion to the statistical property in the single source scenario for blind deconvolution. The similar challenge has also been observed in \cite{ling2015blind, jung2017blind} by extending the convex relaxation approach (i.e., semidefinite programming) for blind deconvolution to the setting of blind demixing. Furthermore, the noisy measurements  also  bring additional challenges
to establish theoretical guarantees. The extra technical details involved
in this paper to address these challenges shall be demonstrated clearly during the presentation. 

\subsubsection*{Notations}Throughout this paper, $f({n}) =  O(g(n))$ or $f(n)\lesssim g(n)$ denotes that there exists a constant $c>0$ such that $|f(n)|\leq c|g(n)|$ whereas $f(n)\gtrsim g(n)$ means that there exists a constant $c>0$ such that $|f(n)|\geq c|g(n)|$. $f(n)\gg  g(n)$ denotes that there exists some sufficiently large constant $c>0$ such that $|f(n)|\geq c|g(n)|$. In addition, the notation $f(n)\asymp g(n)$ means that there exist constants $c_1, c_2>0$ such that $c_1|g(n)|\leq |f(n)|\leq c_2|g(n)|$.

\section{Problem Formulation}\label{form}
In this section, we present mathematical model of the blind demixing problem in the noisy scenario. As this problem is highly intractable without any further structural assumptions, the coupled signals are thus assumed to belong to known subspaces \cite{ling2015blind, jung2017blind,ling2018regularized}. 

Let $\bm{A}^*$ denote the conjugate transpose of matrix $\bm{A}$.
Suppose we have $m$ bilinear measurements $y_j$'s, which are represented in the frequency domain as
\setlength\arraycolsep{2pt} 
\begin{align}
y_j  = \sum_{i=1}^{ s}\bm{b}_j^*\bm{h}_i^{\natural}\bm{x}_i^{\natural\ast}\bm{a}_{ij}+{e}_j,~1\leq j \leq m,
\end{align}
where $\bm{a}_{ij}\in\mathbb{
        C}^K$ and $\bm{b}_j\in\mathbb{C}^K$ are known design vectors, ${e}_j\sim\mathcal{N}(0,\frac{\sigma^2d_0^2}{2m})+i\mathcal{N}(0,\frac{\sigma^2d_0^2}{2m})$ is the additive white complex Gaussian noise with $d_0 = \sqrt{\sum_{i=1}^{s}\|\bm{h}_i^\natural\|^2_2\|\bm{x}_i^\natural\|^2_2}$ and $1/\sigma^2$ as the measurement of noise variance \cite{ling2018regularized}. Each $\bm{a}_{ij}$ is assumed to follow an i.i.d. complex Gaussian distribution, i.e., $\bm{a}_{ij}\sim \mathcal{N}(0,\frac{1}{2}\bm{I}_K)+i\mathcal{N}(0,\frac{1}{2}\bm{I}_K)$. The first $K$ columns of the unitary discrete Fourier transform (DFT) matrix $\bm{F}\in \mathbb{C}^{m\times m}$ with $\bm{FF}^* = \bm{I}_m$ form the matrix $\bm{B}:=[\bm{b}_1,\cdots,\bm{b}_m]^*\in \mathbb{C}^{m\times K}$ \cite{ling2018regularized}. Based on the above bilinear model, our goal is to simultaneously recover the underlying signals $\bm{h}_i^{\natural}\in\mathbb{C}^K$'s and $\bm{x}_i^{\natural}\in\mathbb{C}^K$'s by solving the following blind demixing problem \cite{dong, ling2018regularized}
\begin{eqnarray}
\label{lea_squ_sdp}
\mathscr{P}: \mathop{\textrm{minimize }}_{\{\bm{h}_i\},\{\bm{x}_i\}}f(\bm{h},\bm{x}):=\sum_{j=1}^m\Big|\sum_{i=1}^{ s}\bm{b}_j^*\bm{h}_i\bm{x}_i^{\ast}\bm{a}_{ij}-{y}_j\Big|^2.
\end{eqnarray}
To simplify the presentation, we denote $f(\bm{z}) := f(\bm{h},\bm{x})$, where
$
\bm{z} = \left[
\bm{z}_1^*\cdots\bm{z}_s^*
\right]^*\in \mathbb{ C}^{2sK}~{\textrm{with}}~\bm{z}_i = \left[\bm{h}_i^*~\bm{x}_i^*\right]^*\in \mathbb{ C}^{2K}.
$
We further define the discrepancy between the estimate $\bm{z}$ and the ground truth $\bm{z}^{\natural}$ as the distance function, given as
\begin{align}\label{dist}
\mbox{dist}(\bm{z},\bm{z}^{\natural})=\left(\sum_{i=1}^s\mbox{dist}^2(\bm{z}_i,\bm{z}_i^\natural)\right)^{1/2},
\end{align}
where $\mbox{dist}^2(\bm{z}_i,\bm{z}_i^\natural) = \min\limits_{\alpha_i\in\mathbb{ C}}({{\|\frac{1}{\overline{\alpha_i}}\bm{h}_i-\bm{h}_i^{\natural} \|_2^2+\|\alpha_i \bm{x}_i - \bm{x}_i^{\natural}\|_2^2 }})/{d_i}$ for $i = 1,\cdots,s$. Here, $d_i = \|\bm{h}_i^\natural\|^2+\|\bm{x}_i^\natural\|^2$ and each $\alpha_i$ is the alignment parameter. 
\section{Main Results}
In this section, we  shall present the Wirtinger flow algorithm along with the statistical analysis for blind demixing  $\mathscr{P}$.
\subsection{Wirtinger Flow Algorithm}
The Wirtinger flow algorithm  \cite{candes2015phase} is a two-stage approach consisting of spectral initialization and vanilla gradient descent update procedure without regularization. Specifically, the gradient step in the second stage of  Wirtinger flow is characterized by the notion of Wirtinger derivatives \cite{candes2015phase}, i.e., the derivatives of real valued functions over complex variables. For each $i = 1,\cdots,s$, $\nabla_{\bm{h}_i}f(\bm{h},\bm{x})$ and $\nabla_{\bm{x}_i}f(\bm{h},\bm{x})$ denote the Wirtinger gradient of $f(\bm{z})$ with respect to $\bm{h}_i$ and $\bm{x}_i$ respectively as follows:
\begin{subequations}\label{gradient}
        \begin{align}
        \nabla_{\bm{h}_i}f(\bm{z}) &= \sum_{j=1}^m\bigg(\sum_{k=1}^{ s}\bm{b}_j^*\bm{h}_k\bm{x}_k^{\ast}\bm{a}_{kj}-{y}_j\bigg)\bm{b}_j\bm{a}_{ij}^*\bm{x}_i,\label{g1}\\
        \nabla_{\bm{x}_i}f(\bm{z}) &= \sum_{j=1}^m\overline{\bigg(\sum_{k=1}^{ s}\bm{b}_j^*\bm{h}_k\bm{x}_k^{\ast}\bm{a}_{kj}-{y}_j\bigg)}\bm{a}_{ij}\bm{b}_j^*\bm{h}_i.\label{g2}
        \end{align}
\end{subequations}

\begin{algorithm}[t]\label{algo}
        \caption{Wirtinger flow  for blind demixing $\mathscr{P}$}
        \begin{algorithmic}[1]
                \renewcommand{\algorithmicrequire}{\textbf{Given:}}
                \renewcommand{\algorithmicensure}{\textbf{Output:}}
                
                \REQUIRE $\{\bm{a}_{ij}\}_{1\leq i \leq s, 1\leq j\leq m}$, $\{\bm{b}_j\}_{1\leq j \leq m},$ and $\{y_j\}_{1\leq j\leq m}
                $.
                
                \STATE \textbf{Spectral Initialization:}\renewcommand{\algorithmicdo}
                {\textbf{do in parallel}}
                \FORALL{$i = 1,\cdots,s$}       
                \STATE  Let $\sigma_1(\bm{M}_i),~\check{\bm{h}}_i^0$ and $\check{\bm{x}}_i^0$ be the leading singular value, left singular
                vector and right singular vector of matrix $\bm{M}_i:=\sum_{j=1}^m y_j\bm{b}_j\bm{a}_{ij}^*,$
                respectively.
                \STATE Set $\bm{h}_i^0 = \sqrt{\sigma_1(\bm{M}_i)} \check{\bm{h}}_i^0$ and $\bm{x}_i^0 = \sqrt{\sigma_1(\bm{M}_i)} \check{\bm{x}}_i^0$.

                \ENDFOR   
                \renewcommand{\algorithmicdo}
                {\textbf{do}}
                \FORALL{$t = 1,\cdots, T$}
                \renewcommand{\algorithmicdo}{\textbf{do in parallel}}
                \FORALL{$i = 1,\cdots,s$}
                \STATE 
                $\left[\bm{h}_i^{t+1}\atop\bm{x}_i^{t+1} \right]
                = \left[\bm{h}_i^{t}\atop\bm{x}_i^{t}\right]-\eta \left[\frac{1}{\|\bm{x}^t_i\|_2^2}\nabla_{\bm{h}_i}f(\bm{h}^t,\bm{x}^t)\atop\frac{1}{\|\bm{h}_i^t\|_2^2}\nabla_{\bm{x}_i}f(\bm{h}^t,\bm{x}^t) \right]
                $

                \ENDFOR
                \ENDFOR 
        \end{algorithmic}
\end{algorithm}
The
Wirtinger flow for the blind demixing problem is presented in Algorithm \ref{algo}, in which $T>0$ is the maximum number of iterations and the constant $\eta>0$ is the step size. 

We now provide some numerical evidence by testing the performance of the Wirtinger flow algorithm for blind demixing problem $\mathscr{P}$ (\ref{lea_squ_sdp}). We first consider the blind demixing problem in the noiseless scenario in order to clearly demonstrate the effectiveness of the Wirtinger flow algorithm. Specifically, for each $K\in\{50,100,200,400,800\}$, $s = 10$ and $m = 50K$, we generate the design vectors $\bm{a}_{ij}$'s and $\bm{b}_j$'s for each $1\leq i\leq s, 1\leq j\leq m$, according to the descriptions in Section \ref{form}. The underlying signals $\bm{h}^\natural_i,\bm{x}^\natural_i \in\mathbb{C}^K$, $1\leq i\leq s$, are generated as random vectors with unit norm. With the chosen step size  $\eta = 0.1$ in all settings, Fig. \ref{fig1_1} shows the relative error 
$
{\sum_{i = 1}^s\|\bm{h}^t_i\bm{x}^{t*}_i-\bm{h}^\natural_i\bm{x}^{\natural*}_i\|_F}/{\sum_{i = 1}^s\|\bm{h}^\natural_i\bm{x}^{\natural*}_i\|_F},
$ versus the iteration count, where $\|\cdot\|_F$ denotes the Frobenius norm.
We observe that, in the noiseless case, Wirtinger flow with constant step size enjoys extraordinary linear convergence rate which rarely changes as the problem size varies. 
\begin{figure*}[htbp]
        \centering{
                \subfigure[]{\includegraphics[width=1.3in]{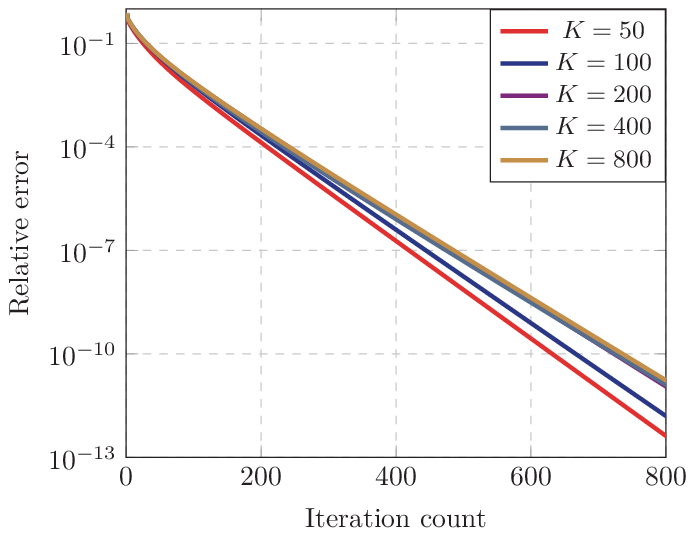}
                        \label{fig1_1}}
                \hfil
                \subfigure[]{\includegraphics[width=1.3in]{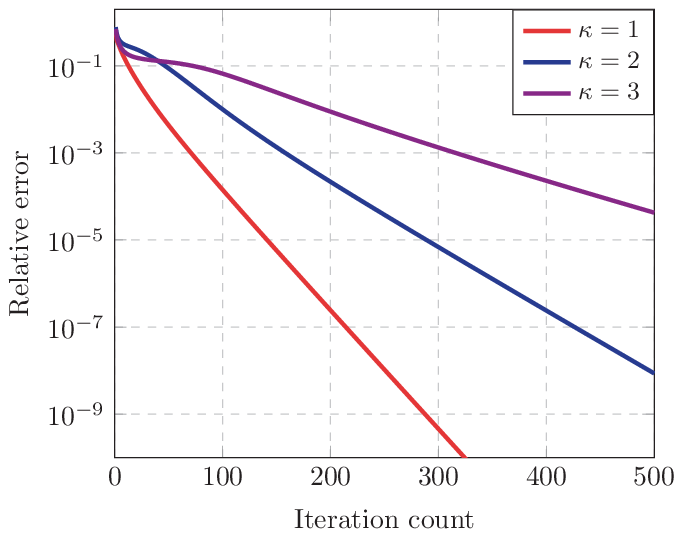}
                        \label{fig5}}
                \subfigure[]{\includegraphics[width=1.3in]{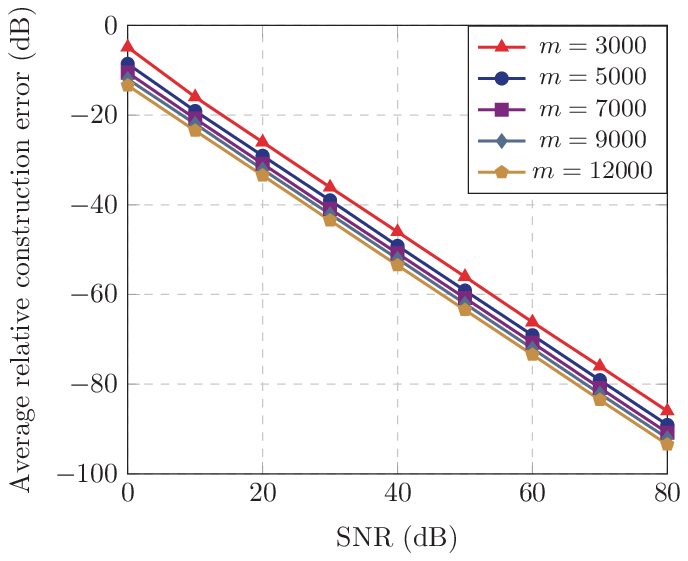}
                        \label{fig4}}      
                \subfigure[]{\includegraphics[width=1.3in]{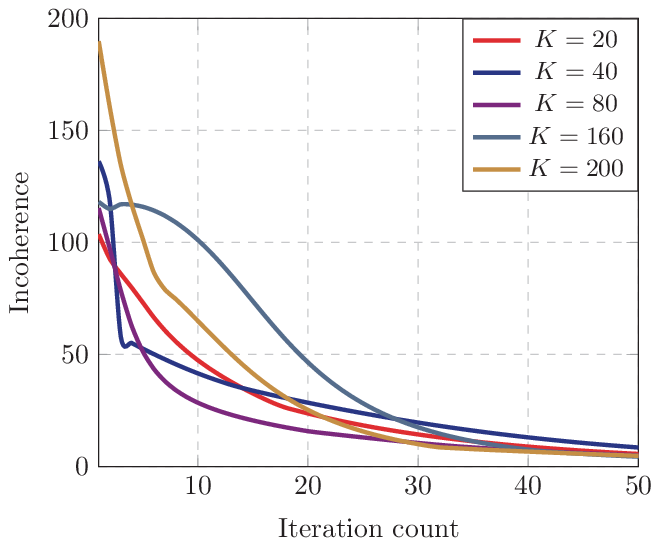}
                        \label{fig2}}
                \subfigure[]{\includegraphics[width=1.3in]{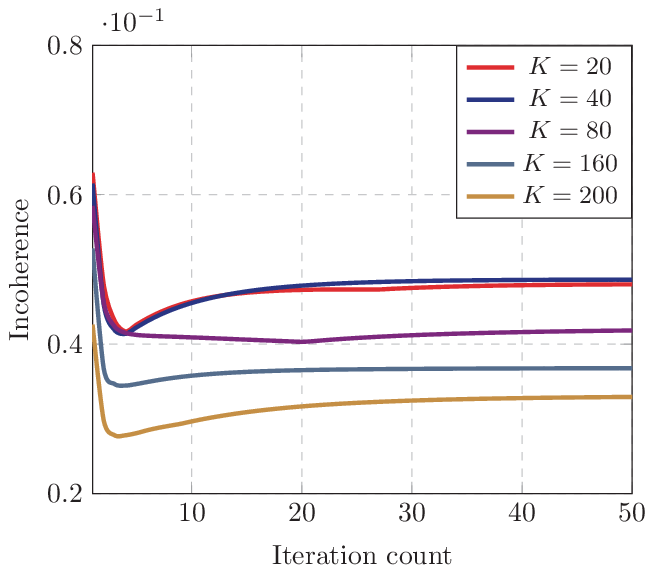}
                        \label{fig3}}

        \caption{Numerical results.}
        }       
\end{figure*}

In the noiseless scenario, we further demonstrate that the performance and convergence rate of the Wirtinger flow actually depend on the condition number, i.e., $\kappa:=\frac{\max_i \|\bm{x}_i^\natural\|_2 }{\min_i \|\bm{x}_i^\natural\|_2 }$. In this experiment, we let $K=50$, $m=800$, $s=2$, the step size be  $\eta = 0.5$ and set for the first component $\|\bm{h}_1^\natural\|_2 = \|\bm{x}_1^\natural\|_2=1$ and for the second one $\|\bm{h}_2^\natural\|_2 = \|\bm{x}_2^\natural\|_2 = \kappa$ with $\kappa\in\{1,2,3\}$. Fig. \ref{fig5} shows the relative error versus the iteration count. As we can see, the larger $\kappa$ yields slower convergence rate. This phenomenon may be caused by bad initial guess for weak components via spectral initialization \cite{ling2018regularized}. Moreover, the strong components may pollute the gradient directions for weak components, which yields slow convergence rate \cite{ling2018regularized}.
We further provide empirical results for the Wirtinger flow algorithm in the presence of noise. We set the size of source signals  $K = 50$, the sample size  $m\in\{3,5,7,9,12\}\times 10^{3}$, the user number  $s=10$, the step size  $\eta = 0.1$. The underlying signals $\bm{h}^\natural_i,\bm{x}^\natural_i \in\mathbb{C}^K$, $1\leq i\leq s$, are generated as random vectors with unit norm. Fig. \ref{fig4} shows the relative error defined above versus the signal-to-noise ratio (SNR), where the SNR is defined as
$
\text{SNR}:={\|\bm{y}\|_2}/{\|\bm{e}\|_2}
$ \cite{ling2018regularized} since it is easy to access the signal $\bm{y}$.
Both the relative error and the SNR are shown in the  dB scale. As we can see, the relative error scales linearly with the SNR, which implies that the Wirtinger flow is robust to the noise.
The main purpose of this paper is to theoretically analyze the promising empirical observations of the Wirtinger flow algorithm for  blind demixing  $\mathscr{P}$ in the  noisy scenarios. We will demonstrate that for the problem $\mathscr{P}$ the Wirtinger flow algorithm can achieve fast convergence rates with aggressive step size and computational optimality guarantees without explicit regularization.

\subsection{Theoretical Results}
Before stating the main theorem, we need to introduce the incoherence parameter \cite{ling2018regularized}, which characterizes the incoherence between $\bm{b}_j$ and $\bm{h}_i$ for $1\leq i \leq s, 1\leq j\leq m$.
\begin{definition}[Incoherence for blind demixing] Let the incoherence parameter $\mu$ be the smallest number such that
        \begin{align}\label{inco}
        \max_{1\leq i \leq s, 1\leq j\leq m}\frac{|\bm{b}^*_j\bm{h}_i^\natural|}{\|\bm{h}_i^\natural\|_2}\leq \frac{\mu}{\sqrt{m}}.
        \end{align}
\end{definition}
The incoherence  between $\bm{b}_j$ and $\bm{h}_i$ for $1\leq i \leq s, 1\leq j\leq m$ specifies the smoothness of the loss function (\ref{lea_squ_sdp}). Within the region of incoherence and contraction (defined in Section {\ref{RICss}}) that enjoys the qualified level of smoothness, the step size for iterative refinement procedure can be chosen more aggressively according to generic optimization theory \cite{ma2017implicit}. Based on the definition of incoherence,
our theory shall show that the iterates of Algorithm \ref{algo} will retain in the region of incoherence and contraction, which is endowed with strong convexity and the qualified level of smoothness.

Without loss of generality, we assume $\|\bm{h}_i^\natural\|_2 = \|\bm{x}_i^\natural\|_2$ for $i=  1,\cdots ,s$ and define the condition number $\kappa:=\frac{\max_i \|\bm{x}_i^\natural\|_2 }{\min_i \|\bm{x}_i^\natural\|_2 }\geq 1$ with $\max_i\|\bm{x}_i^\natural\|_2=1$. Define  $\mathcal{ A}_i(\bm{e}) = \sum_{j=1}^m e_j\bm{b}_j\bm{a}_{ij}^*,~i = 1,\cdots,s,
$ then the main theorem is presented in the following.

\begin{theorem}\label{mainT}
        Suppose the step size obeys $\eta>0$ and $\eta\asymp s^{-1}$, then the iterates (including the spectral initialization point) in Algorithm \ref{algo} satisfy
        \begin{subequations}
                \begin{align}
                &\mathrm{dist}(\bm{z}^t,\bm{z}^\natural)\leq C_1(1-\frac{\eta}{16\kappa})^t \Big(\frac{1}{\log^2m}-\frac{48\sqrt{s}\kappa^2}{\eta}\cdot\notag\\&\quad\max_{1\leq i \leq s}\|\mathcal{A}_i(\bm{e})\|\Big)+\frac{48C_1\sqrt{s}\kappa^2}{\eta}\max_{1\leq i \leq s}\|\mathcal{A}_i(\bm{e})\|,\label{distforT1}\\
                &\max_{1\leq i \leq s, 1\leq j\leq m}\left|\bm{a}_{ij}^*\left(\alpha_i^t\bm{x}_i^t - \bm{x}_i^{\natural}\right)\right|\cdot\|\bm{x}_i^\natural\|_2^{-1}\leq C_3\frac{1}{\sqrt{s}\log^{3/2}m},\\
                &\max_{1\leq i \leq s, 1\leq j\leq m}\left|\bm{b}_j^*\frac{1}{\overline{\alpha^t_i}} \bm{h}_i^t \right|\cdot\|\bm{h}_i^\natural\|_2^{-1}\leq C_4 \frac{\mu}{\sqrt{m}}\log^2 m,
                \end{align}
        \end{subequations}
        for all $t\geq 0$, with probability at least $1-c_1m^{-\gamma}-c_1me^{-c_2K}$ if the number of measurements
        $m\geq C(\mu^2+\sigma^2)s^2\kappa^4K\log^8 m$ for some constants $\gamma, c_1,c_2,C_1,C_3,C_4>0$ and sufficiently large constant $C>0$.
        
        Here, we denote $\alpha_i^t$ for $ i = 1,\cdots, s$ as the alignment parameter such that
        \begin{align}\label{a}
        \alpha_i^t: = \argmin_{\alpha\in\mathbb{C}}\left\| \frac{1}{\overline{\alpha}}\bm{h}_i^t-\bm{h}_i^{\natural}\right\|_2^2+ \left\|\alpha\bm{x}_i^t-\bm{x}_i^{\natural}\right\|_2^2.
        \end{align}
        In addition, with probability at least $1-O(m^{-9})$, there holds
        $
        \max_{1\leq i \leq s}\left\|\mathcal{ A}_i(\bm{e})\right\|\leq C_0\sigma \sqrt{\frac{ 10sK\log^2 m }{m}},
$
        for some absolute constant $C_0>0$ and $\sigma$ is defined in Section \ref{form}.
        
\end{theorem}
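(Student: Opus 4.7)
The plan is to follow the implicit-regularization / leave-one-out paradigm of Ma \emph{et al.}~\cite{ma2017implicit} that was developed for phase retrieval, matrix completion, and blind deconvolution, and to push it through to the multi-source blind demixing setting in the presence of noise. Concretely, I would carry out a joint induction on $t$ in which the three inequalities in the theorem statement serve as the induction hypotheses: a contraction bound on $\mathrm{dist}(\bm{z}^{t},\bm{z}^{\natural})$ together with the two incoherence bounds controlling $|\bm{a}_{ij}^{*}(\alpha_i^{t}\bm{x}_i^{t}-\bm{x}_i^{\natural})|$ and $|\bm{b}_j^{*}\bm{h}_i^{t}/\overline{\alpha_i^{t}}|$. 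The distance bound says the iterate is close to the truth; the two incoherence bounds say the iterate lies in the region of incoherence and contraction (RIC), on which the Wirtinger Hessian is restrictedly strongly convex and smooth enough that a step size $\eta\asymp s^{-1}$ is admissible. This bundling is essential: neither bound alone propagates, but the three together do.

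After setting up the induction I would handle the base case by analyzing the spectral initialization. Since $\bm{M}_i = \sum_j y_j \bm{b}_j \bm{a}_{ij}^{*}$ concentrates around $\bm{h}_i^{\natural}\bm{x}_i^{\natural*}$ up to cross-source interference of order $s\sqrt{K\log m/m}$ plus the noise contribution $\mathcal{A}_i(\bm{e})$, the sample complexity $m\gtrsim (\mu^{2}+\sigma^{2}) s^{2}\kappa^{4}K\log^{8}m$ guarantees that $(\check{\bm{h}}_i^{0},\check{\bm{x}}_i^{0})$ and their scaled versions satisfy the three hypotheses at $t=0$ with the required constants. For the inductive step I would first prove a local regularity condition (strong convexity and Lipschitz smoothness of $f$ after alignment) on the RIC, adapted from~\cite{ling2018regularized,ma2017implicit} but with the multi-source cross terms handled by a union bound over the $s$ pairs and by isolating the interference $\sum_{k\neq i}\bm{b}_j^{*}\bm{h}_k\bm{x}_k^{*}\bm{a}_{kj}$ in the residual. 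Combined with the noise contribution $\mathcal{A}_i(\bm{e})$ this yields
\begin{equation*}
\mathrm{dist}(\bm{z}^{t+1},\bm{z}^{\natural}) \le \Bigl(1-\frac{\eta}{16\kappa}\Bigr)\mathrm{dist}(\bm{z}^{t},\bm{z}^{\natural}) + 3\sqrt{s}\kappa^{2}\eta\,\max_i \|\mathcal{A}_i(\bm{e})\|,
\end{equation*}
which unrolls to the claimed geometric-plus-noise-floor bound.

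The hardest part is propagating the two incoherence bounds, since the gradient steps (\ref{g1})--(\ref{g2}) couple every $\bm{a}_{ij}$ with $\bm{h}_i^{t+1},\bm{x}_i^{t+1}$ and destroy statistical independence. I would introduce, for each pair $(i,j)$, a leave-one-out trajectory $\bm{z}^{(i,j),t}$ obtained by running Algorithm~\ref{algo} on the data with the $j$-th measurement of the $i$-th source replaced by its noiseless, independent-of-$\bm{a}_{ij}$ surrogate, and analogously a leave-one-out trajectory with $\bm{b}_j$ perturbed. Standard but delicate perturbation arguments then give (a) $\bm{z}^{(i,j),t}$ stays close to $\bm{z}^{t}$ with high probability, and (b) $\bm{z}^{(i,j),t}$ is independent of $\bm{a}_{ij}$ (respectively $\bm{b}_j$), so Gaussian concentration applies to $\bm{a}_{ij}^{*}(\alpha_i^{t}\bm{x}_i^{(i,j),t}-\bm{x}_i^{\natural})$ and to $\bm{b}_j^{*}\bm{h}_i^{(i,j),t}/\overline{\alpha_i^{t}}$. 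Combining (a) and (b) via the triangle inequality reinstates the two incoherence bounds at step $t+1$, closing the induction. The noise term $\max_i\|\mathcal{A}_i(\bm{e})\|\le C_0\sigma\sqrt{10sK\log^{2}m/m}$ is then a separate one-shot concentration statement: $\mathcal{A}_i(\bm{e})$ is a sum of $m$ independent zero-mean complex Gaussian rank-one matrices, and a matrix Bernstein inequality together with a standard $\epsilon$-net on the unit spheres of $\mathbb{C}^{K}$ yields the stated bound with failure probability $O(m^{-9})$.

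The main obstacle I anticipate is the multi-source coupling: every gradient in (\ref{g1})--(\ref{g2}) contains the interference $\sum_{k\neq i}\bm{b}_j^{*}\bm{h}_k^{t}\bm{x}_k^{t*}\bm{a}_{kj}$, which prevents the local strong-convexity / smoothness analysis and the leave-one-out perturbation bounds from being used component-wise as in the single-source blind deconvolution case of~\cite{ma2017implicit}. Controlling these cross terms under only the joint distance hypothesis, without losing more than a $\sqrt{s}\kappa$ factor in the step size (which is exactly what the choice $\eta\asymp s^{-1}$ absorbs), and simultaneously bookkeeping the $\mathcal{A}_i(\bm{e})$ terms across all three induction hypotheses, is the delicate core of the argument.
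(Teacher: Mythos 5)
Your proposal follows essentially the same route as the paper: a joint induction in $t$ with the distance bound and the two incoherence bounds as hypotheses, restricted strong convexity and smoothness on the region of incoherence and contraction to obtain the per-iteration contraction with $\eta\asymp s^{-1}$, leave-one-out auxiliary trajectories to restore statistical independence for the incoherence conditions, a Wedin-type perturbation analysis of the spectral initialization for the base case, and a separate one-shot concentration bound for $\max_{1\leq i\leq s}\|\mathcal{A}_i(\bm{e})\|$. The only minor differences are that the paper constructs one leave-one-out sequence per measurement index $l$ (dropping the entire $l$-th measurement, which is simultaneously independent of $\bm{b}_l$ and of all $\bm{a}_{il}$) rather than one per pair $(i,j)$, and in the paper's contraction lemma the noise term enters without a factor of $\eta$, which is precisely what yields the $1/\eta$ in the stated noise floor, whereas your version with $\eta$ multiplying the noise would unroll to a slightly different (in fact stronger) floor.
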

Note that the assumption of the same length of $\bm{h}_i$ and $\bm{x}_i$ only serves the purpose of simplifying the presentation. Our theoretical results can be easily extended to the scenario where $\bm{h}_i$ and $\bm{x}_i$ have different sizes. Specifically, for each $i=  1,\cdots, s,~j = 1,\cdots, m$, if $\bm{h}_i,\bm{b}_j \in\mathbb{C}^K$ and $\bm{x}_i,\bm{a}_{ij}\in\mathbb{C}^N$, the requirement of sample size turns out to be $m\geq C(\mu^2+\sigma^2)s^2\kappa^4\mathrm{max}\{K,N\}\log^8 m$.

Theorem \ref{mainT} endorses the empirical results shown in Fig. \ref{fig1_1}, Fig. \ref{fig5} and Fig. \ref{fig4}. Specifically, compared to the step size (i.e., $\eta\lesssim\frac{1}{s\kappa m}$) suggested in \cite{ling2018regularized} for regularized gradient descent, our theory yields a more aggressive step size (i.e., $\eta\asymp s^{-1}$) even without regularization. According to (\ref{distforT1}), in the noiseless scenario, the Wirtinger flow algorithm can achieve $\epsilon$-accuracy within $s\kappa\log(1/\epsilon)$ iterations, while previous theory in \cite{ling2018regularized} suggests  $ s\kappa m\log(1/\epsilon)$ iterations. In the noisy scenario, the convergence rate of the Wirtinger flow algorithm is independent of the number of measurements $m$ and related to the level of the noise.
The sample complexity, i.e., $m\geq C s^2K\mathrm{poly}\log m$ with sufficiently large constant $C>0$, is comparable to the result in \cite{ling2018regularized} which uses explicit regularization.
However, we expect to reduce the sample complexity to $m\geq C sK\mathrm{poly}\log m,$ with sufficiently large constant $C>0$ by a tighter analysis, e.g., eluding controlling terms involved $s^2/m$, which is left for future work.

  For further illustrations, we plot the incoherence measure $\max_{1\leq
        i \leq s,1\leq j \leq m}|\bm{a}_{ij}^*(\alpha_i^t{\bm{x}}_i^t-\bm{x}_i^\natural)|$ (in Fig. \ref{fig2}) and $\max_{1\leq i \leq s,1\leq j \leq m}|\bm{b}_{j}^*\frac{1}{\overline{\alpha_i^t}}{\bm{h}}_i^t|$ (in Fig. \ref{fig3}) of the gradient iterates versus iteration count, under the setting $K\in\{20,40,80,160,200\}$, $m = 50K$, $s=10$, $\eta = 0.1$, $\sigma = 10^{-1}$ with $\|\bm{h}_i^\natural\|_2 =\|\bm{x}_i^\natural\|_2 =1$ for $1\leq i\leq s$. We observe that both incoherence measures remain bounded by befitting values for all iterations.
\section{Trajectory Analysis for Blind Demixing}  
In this section, we prove the main theorem via trajectory analysis for blind demixing via the Wirtinger flow algorithm. We shall reveal that iterates of Wirtinger flow, i.e., Algorithm \ref{algo}, stay in the region of incoherence and contraction by exploiting the local geometry of blind demixing $\mathscr{P}$. The steps of proving Theorem 1 are summarized as follows.
\begin{itemize}
	\item  \textbf{Characterizing local geometry in the region of incoherence and contraction (RIC).} We first characterize a region $\mathcal{ R}$, i.e., RIC, where the objective function enjoys restricted strong convexity and smoothness near the ground truth $\bm{z}^\natural$. Moreover, any point $\bm{z}\in\mathcal{ R}$ satisfies the $\ell_2$ error contraction and the incoherence conditions. This will be established in Lemma \ref{L1}. Provided that all the iterates of Algorithm \ref{algo} are in the region $\mathcal{ R}$, the convergence rate of the algorithm can be further established, according to Lemma \ref{2}.
	\item \textbf{Constructing the auxiliary sequences via the leave-one-out approach.} To justify that the Wirtinger Flow algorithm enforces the iterates to stay within the RIC, we introduce the leave-one-out sequences. Specifically, the leave-one-out sequences are denoted by $\{\bm{h}_i^{t,(l)},\bm{x}_i^{t,(l)}\}_{t\geq 0}$ for each $1\leq i \leq s$, $1\leq l \leq m$ obtained by removing the $l$-th measurement from the objective function $f(\bm{h},\bm{x})$. Hence, $\{\bm{h}_i^{t,(l)}\}$ and $\{\bm{x}_i^{t,(l)}\}$ are independent with $\{\bm{b}_j\}$ and $\{\bm{a}_{ij}\}$, respectively.
	\item \textbf{Establishing the incoherence condition via induction}. In this step, we employ the auxiliary sequences to establish the incoherence condition via induction. That is, as long as the current iterate stays within the RIC, the next iterate remains in the RIC.
	\begin{itemize}
		\item \textbf{Concentration between original and auxiliary sequences.} The gap between $\{\bm{z}^t\}$ and $\{\bm{z}^{t,(l)}\}$ is established in Lemma \ref{3} via employing the restricted strong convexity of the objective function in RIC. 
		\item \textbf{Incoherence condition of auxiliary sequences.}Based on the fact that $\{\bm{z}^t\}$ and $\{\bm{z}^{t,(l)}\}$ are sufficiently close, we can instead bound the incoherence of $\bm{h}_i^{t,(l)}$ (resp. $\bm{x}_i^{t,(l)}$) with respect to $\{\bm{b}_{j}\}$ (resp. $\{\bm{a}_{ij}\}$), which turns out to be much easier due to the statistical independence between   $\{\bm{h}_i^{t,(l)}\}$ (resp. $\{\bm{x}_i^{t,(l)}\}$) and $\{\bm{b}_j\}$ (resp.$\{\bm{a}_{ij}\}$).
		\item \textbf{Establishing iterates in RIC.} By combining the above bounds together, we arrive at
		$|\bm{a}_{ij}^*(\bm{x}_i^{t}-\bm{x}_i^\natural)|\leq \|\bm{a}_{ij}\|_2\cdot\|\bm{x}_i^{t} - \bm{x}_i^{t,(l)}\|_2+\|\bm{a}_{ij}^*(\bm{x}_i^{t,(l)}-\bm{x}_i^\natural)\|$
		via the triangle inequality. Based on the similar arguments, the other incoherence condition will be established in Lemma \ref{4}.
		\item \textbf{Establishing initial point in RIC.} Lemma \ref{L6}, Lemma \ref{L7} and Lemma \ref{L8} are integrated to justify that the spectral initialization point is in RIC.
	\end{itemize}
	
\end{itemize}

\subsection{Characterizing Local Geometry in the Region of Incoherence and Contraction }\label{RICss}

We first introduce the notation of Wirtinger Hessian.    Specifically,   let $\overline{\bm{A}}$ denote the entry-wise conjugate of matrix $\bm{A}$ and $f_{\text{clean}}$ denote the objective function of noiseless case. 
The Wirtinger Hessian of $f_{\text{clean}}(\bm{z})$ with respect to $\bm{z}_i$ can be written as
\begin{align}\label{chess}
\nabla^2_{\bm{z}_i}f_{\text{clean}}:=\left[~{\bm{C} \atop\bm{E}^*} ~{\bm{E} \atop\overline{\bm{C}}}~\right],
\end{align} 
where
$
\bm{C}: = \frac{\partial }{\partial \bm{z}_i}\bigg( \frac{\partial f_{\text{clean}} }{\partial \bm{z}_i}\bigg)^*$ and $\bm{E}: = \frac{\partial }{\partial \bm{\overline{z}}_i}\bigg( \frac{\partial f_{\text{clean} }}{\partial \bm{z}_i}\bigg)^*.
$
The Wirtinger Hessian of $f_{\text{clean}}(\bm{z})$ with respect to $\bm{z}$ is thus represented as
$
\nabla^2f_{\text{clean}}(\bm{z}):=\diagg(\{\nabla^2_{\bm{z}_i}f_{\text{clean}}\}_{i=1}^s),
$
where the operation $\diagg(\{\bm{A}_i\}_{i=1}^s)$ generates a block diagonal matrix with the diagonal elements as the matrices $\bm{A}_1,\cdots,\bm{A}_s$.
Please refer to Appendix \ref{appda} for more details on the Wirtinger Hessian. In addition, we say
$(\bm{h}_i,\bm{x}_i)$ is aligned with $(\bm{h}_i^{\prime},\bm{x}_i^{\prime})$, if the following condition is satisfied
\begin{align}
\left\|\bm{h}_i-\bm{h}_i^{\prime}\right\|_2^2 &+ \left\|\bm{x}_i-\bm{x}_i^{\prime}\right\|_2^2=\notag\\ &
\min_{\alpha\in\mathbb{ C}}\left\{\left\|\frac{1}{\overline{\alpha}}\bm{h}_i-\bm{h}_i^{\prime}\right\|_2^2+ \left\|\alpha\bm{x}_i-\bm{x}_i^{\prime}\right\|_2^2\right\}.
\end{align}

Let $\|\bm{A}\|$ denote the spectral norm of matrix $\bm{A}$. We have the following lemma.

\begin{lemma}\label{L1}
        (Restricted strong convexity and smoothness for blind demixing problem $\mathscr{P}$). Let $\delta>0$ be a sufficiently small constant. If the number of measurements satisfies $m\gg \mu^2s^2\kappa^2K\log^5m $, then with probability at least $1-O(m^{-10})$, the Wirtinger Hessian $\nabla^2f_{\mathrm{clean}}(\bm{z})$ obeys
        \begin{align}
        \bm{u}^*\left[\bm{D}\nabla^2f_{\mathrm{clean}}(\bm{z})  + \nabla^2f_{\mathrm{clean}}(\bm{z})\bm{D}\right]\bm{u}&\geq \frac{1}{4\kappa}\|\bm{u}\|_2^2~~{\text{and}}\notag\\
        ~~\left\|\nabla^2f_{\mathrm{clean}}(\bm{z})\right\|&\leq 2+s
        \end{align}
        simultaneously for all 
        \[\begin{split}
        \bm{u} = \left[~\begin{matrix}
        \bm{u}_1\\\vdots\\\bm{u}_s
        \end{matrix}~\right]~\textrm{with}~\bm{u}_i = \left[~\begin{matrix}
        \bm{h}_i-\bm{h}_i^\prime\\
        \bm{x}_i-\bm{x}_i^\prime\\
        \overline{\bm{h}_i-\bm{h}_i^\prime}\\
        \overline{\bm{x}_i-\bm{x}_i^\prime}
        \end{matrix}~\right],
        \end{split}\]
        \[\begin{split}
        & {\text{and}}   ~\bm{D} = \diagg\left(\{\bm{W}_i\}_{i=1}^s\right)~\\
        & \mbox{with}~\bm{W}_i = \diagg\left(\left[
        \overline{\beta}_{i1}\bm{I}_K~
        \overline{\beta}_{i2}\bm{I}_K~
        \overline{\beta}_{i1}\bm{I}_K~
        \overline{\beta}_{i2}\bm{I}_K
        \right]^*     \right).
        \end{split}\]
        Here $\bm{z}$ satisfies
        \begin{subequations}\label{zcondition}
                \begin{align}
                &\max_{1\leq i \leq s}\max\left\{\|\bm{h}_i-\bm{h}_i^\natural\|_2,\|\bm{x}_i-\bm{x}_i^\natural\|_2\right\}\leq \frac{\delta}{\kappa\sqrt{s}},\label{con1}\\
                &\max_{1\leq i \leq s,1\leq j \leq m}\left|\bm{a}_{ij}^*\left(\bm{x}_i-\bm{x}_i^\natural\right)\right|\cdot\|\bm{x}_i^\natural\|_2^{-1}\leq \frac{2C_3}{\sqrt{s}\log^{3/2}m},\label{con2}\\
                &\max_{1\leq i \leq s,1\leq j \leq m}|\bm{b}_{j}^*\bm{h}_i|\cdot\|\bm{h}_i^\natural\|_2^{-1}\leq \frac{2C_4\mu}{\sqrt{m}}\log^{2}m,\label{con3}
                \end{align}
        \end{subequations}
        where $(\bm{h}_i,\bm{x}_i)$ is aligned with $(\bm{h}_i^\prime,\bm{x}_i^\prime)$, and one has
      $
        \max \{\|\bm{h}_i-\bm{h}_i^\natural\|_2,\|\bm{h}_i^\prime-\bm{h}_i^\natural\|_2,\|\bm{x}_i-\bm{x}_i^\natural\|_2,
        \|\bm{x}_i^\prime-\bm{x}_i^\natural\|_2\} \leq{\delta}/({\kappa\sqrt{s}}),
     $
        for $i=  1,\cdots,s$ and $\bm{W}_i$'s satisfy that for $\beta_{i1},\beta_{i2}\in\mathbb{ R}$, for $i=  1,\cdots,s$
$
        \max_{1\leq i \leq s}\max\left\{|\beta_{i1}-\frac{1}{\kappa}|,|\beta_{i2}-\frac{1}{\kappa}|\right\}\leq \frac{\delta}{\kappa\sqrt{s}}.
$
        Therein, $C_3,C_4\geq 0$ are numerical constants.
\end{lemma}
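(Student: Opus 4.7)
The plan is to decompose $\nabla^2 f_{\mathrm{clean}}(\bm{z})$ into a population piece $\mathbb{E}[\nabla^2 f_{\mathrm{clean}}(\bm{z})]$ (taking expectation over the Gaussian vectors $\{\bm{a}_{ij}\}$) plus a random fluctuation, and attack the upper and lower bounds separately. Differentiating (\ref{g1})--(\ref{g2}) once more shows that each diagonal block $\nabla^2_{\bm{z}_i}f_{\mathrm{clean}}$ is a sum of sample-wise rank-one matrices built from $\bm{b}_j \bm{a}_{ij}^*$, and taking the Gaussian expectation collapses these into closed-form expressions involving $\|\bm{x}_i\|_2^2\bm{I}_K$, $\bm{h}_i\bm{h}_i^*$, and $\sum_{j}\bm{b}_j\bm{b}_j^* = \bm{B}^*\bm{B}=\bm{I}_K$ (the last equality uses that $\bm{B}$ has orthonormal columns).

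With this decomposition in hand, the upper bound $\|\nabla^2 f_{\mathrm{clean}}(\bm{z})\|\leq 2+s$ should follow from three observations. First, because $\nabla^2 f_{\mathrm{clean}}(\bm{z})$ is block-diagonal across $i=1,\dots,s$, one only needs to control each $\nabla^2_{\bm{z}_i}f_{\mathrm{clean}}$, whose population spectral norm is bounded by a constant (near $2$) via (\ref{con1}) and the normalization $\|\bm{h}_i^\natural\|_2=\|\bm{x}_i^\natural\|_2\leq 1$. Second, the cross-coupling contributions from indices $k\neq i$ (which are responsible for the ``$s$'' term) involve inner products of the form $\bm{b}_j^*\bm{h}_k$ whose magnitudes are tamed by the incoherence condition (\ref{con3}). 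Third, the deviation of $\nabla^2 f_{\mathrm{clean}}(\bm{z})$ from its expectation will be handled by matrix Bernstein for sums of sub-exponential rank-one matrices, together with an $\varepsilon$-net argument to make the bound uniform over $\bm{z}$; the hypothesis $m\gg \mu^2 s^2\kappa^2 K\log^5 m$ enters exactly at this step to defeat the union bound.

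For the lower bound $\bm{u}^*[\bm{D}\nabla^2 f_{\mathrm{clean}}(\bm{z})+\nabla^2 f_{\mathrm{clean}}(\bm{z})\bm{D}]\bm{u}\geq \frac{1}{4\kappa}\|\bm{u}\|_2^2$, I would adapt the blueprint used for single-source blind deconvolution. The weights $\beta_{i1},\beta_{i2}\approx 1/\kappa$ inside $\bm{D}$ are specifically engineered to cancel the scale ambiguity $(\bm{h}_i,\bm{x}_i)\mapsto(\tfrac{1}{\overline{\alpha}}\bm{h}_i,\alpha\bm{x}_i)$, along which the unpreconditioned Hessian is only positive semidefinite. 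Exploiting the alignment of $(\bm{h}_i,\bm{x}_i)$ with $(\bm{h}_i',\bm{x}_i')$, a direct computation at the population level shows that the symmetrized quadratic form $\bm{u}_i^*[\bm{W}_i\nabla^2_{\bm{z}_i}\mathbb{E}f_{\mathrm{clean}}+\nabla^2_{\bm{z}_i}\mathbb{E}f_{\mathrm{clean}}\bm{W}_i]\bm{u}_i$ is bounded below by a constant multiple of $\|\bm{u}_i\|_2^2/\kappa$ once the perturbation is as small as (\ref{con1}). Inter-block coupling between distinct indices $i\neq k$ is then dominated by Cauchy--Schwarz combined with the entrywise incoherence bounds (\ref{con2})--(\ref{con3}) and the spectral identity $\|\sum_j\bm{b}_j\bm{b}_j^*\|=1$.

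The main obstacle will be making these bounds simultaneously uniform in $\bm{z}$ satisfying (\ref{zcondition}), in the weights $\beta_{i1},\beta_{i2}$, and in the aligned reference pair $(\bm{h}_i',\bm{x}_i')$. A naive $\varepsilon$-net on the whole $\ell_2$-ball around $\bm{z}^\natural$ is too crude; instead I would combine Lipschitzness of $\nabla^2 f_{\mathrm{clean}}(\bm{z})$ in $\bm{z}$ to propagate pointwise bounds to a neighborhood with the entrywise conditions (\ref{con2})--(\ref{con3}) to control the Orlicz norms of the random rank-one summands that appear when expanding $\bm{u}^*\nabla^2 f_{\mathrm{clean}}(\bm{z})\bm{u}$. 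The factor $s^2$ in the sample complexity originates precisely from union-bounding over the $s$ sources while controlling these cross-terms; sharpening it to a linear dependence on $s$ would require the tighter analysis alluded to in the remark following Theorem \ref{mainT}.
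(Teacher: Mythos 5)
Your proposal follows essentially the same route as the paper: the paper proves Lemma~\ref{L1} by first analyzing the population Hessian at the ground truth (Lemma~\ref{L1.2}, which gives the $1+s$ norm bound and the $1/\kappa$ lower bound with the preconditioner $\bm{D}$ absorbing the scaling ambiguity) and then establishing the uniform concentration $\sup_{\bm{z}\in\mathcal{S}}\|\nabla^2 f_{\mathrm{clean}}(\bm{z})-\nabla^2F(\bm{z}^\natural)\|\leq 1/4$ over the incoherence region (Lemma~\ref{L1.1}), which is precisely your population-plus-fluctuation plan with the sample complexity $m\gg\mu^2s^2\kappa^2K\log^5m$ entering in the concentration step. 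The one caveat is that the multi-source cross terms (the $\alpha_{i3}$-type contribution) cannot be dominated by deterministic Cauchy--Schwarz plus incoherence as your lower-bound paragraph suggests; the paper controls them via the dedicated uniform bound of Lemma~\ref{newset} (extending \cite[Lemma 59]{ma2017implicit}), which exploits $\mathbb{E}[\bm{a}_{kj}\bm{a}_{ij}^*]=\bm{0}$ for $k\neq i$ together with a covering of the incoherence-constrained region---that is, exactly the matrix-Bernstein-plus-constrained-net argument you describe elsewhere in your sketch.
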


\begin{proof}
  Please refer to Appendix \ref{proof_L1} for details.   
\end{proof}
Conditions (\ref{con1})-(\ref{con3}) identify the local geometry of  blind demixing in the noiseless scenario. Specifically, (\ref{con1}) identifies a neighborhood that is close to the ground truth in $\ell_2$-norm. In addition, (\ref{con2}) and (\ref{con3}) specify the incoherence region with respect to the vectors $\bm{a}_{ij}$ and $\bm{b}_{j}$ for $1\leq i\leq s, 1\leq j\leq m$, respectively. This lemma paves the way to the proof of Lemma \ref{2} and Lemma \ref{3}. Specifically, the quantities of interest in these lemmas are decomposed into the part with respect to $f_{\text{clean}}$ and the part with respect to the noise $\bm{e}$ such that Lemma \ref{L1.1} can be exploited to bound the first part.

Based on the local geometry in the region of incoherence and contraction, we further establish contraction of the error measured by the distance function (\ref{dist}). 
\begin{lemma}\label{2}
        Suppose the number of measurements satisfies $m\gg\mu^2s^2\kappa^2K\log^5m$ and the step size obeys $\eta>0$ and $\eta\asymp s^{-1}$. Then with probability at least $1-O(m^{-10})$,
        $\mathrm{dist}(\bm{z}^{t+1},\bm{z}^\natural)\leq (1-\eta/(16\kappa))\mathrm{dist}(\bm{z}^t,\bm{z}^\natural)+3\kappa\sqrt{s}\max_{1\leq k \leq s}\left\|\mathcal{A}_k(\bm{e})\right\|,$
        provided that
        \begin{subequations}\label{con85}
                \begin{align}
                &\quad\quad\mathrm{dist}(\bm{z}^{t},\bm{z}^\natural)\leq \xi,\label{16a}\\
        &       \max_{1\leq i \leq s,1\leq j \leq m}\left|\bm{a}_{ij}^*\left(\widetilde{\bm{x}}_i^t-\bm{x}_i^\natural\right)\right|\cdot\|\bm{x}_i^\natural\|_2^{-1}\leq \frac{2C_3}{\sqrt{s}\log^{3/2}m},\\
                &\max_{1\leq i \leq s,1\leq j \leq m}\left|\bm{b}_{j}^*\widetilde{\bm{h}}_i^t\right|\cdot\|\bm{h}_i^\natural\|_2^{-1}\leq \frac{2C_4\mu}{\sqrt{m}}\log^{2}m,\label{con85c}
                \end{align}
                for some constants $C_3,C_4>0$ and a sufficiently small constant $\xi>0$. Here, $\widetilde{\bm{h}}^t_i$ and $\widetilde{\bm{x}} ^t_i
                $ are defined as $\widetilde{\bm{h}}^t_i = \frac{1}{\overline{\alpha_i^t}}\bm{h}_i^t$ and $\widetilde{\bm{x}} ^t_i= \alpha_i^t\bm{x}_i^t$ for $i = 1,\cdots, s$.
        \end{subequations}
\end{lemma}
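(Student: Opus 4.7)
The plan is to exploit the restricted strong convexity and smoothness of $f_{\text{clean}}$ established in Lemma \ref{L1}. Starting from the observation that $\alpha_i^{t+1}$ is the minimizer in the definition of $\mathrm{dist}^2(\bm{z}_i^{t+1},\bm{z}_i^\natural)$, I would first upper bound
\begin{align*}
\mathrm{dist}^2(\bm{z}^{t+1},\bm{z}^\natural)\leq \sum_{i=1}^{s}\frac{1}{d_i}\Big(\big\|\tfrac{1}{\overline{\alpha_i^t}}\bm{h}_i^{t+1}-\bm{h}_i^\natural\big\|_2^2 + \big\|\alpha_i^t\bm{x}_i^{t+1}-\bm{x}_i^\natural\big\|_2^2\Big),
\end{align*}
i.e., plug in the previous alignment parameter instead of the optimal one. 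Substituting the Wirtinger flow update and using $\widetilde{\bm{h}}_i^t=\frac{1}{\overline{\alpha_i^t}}\bm{h}_i^t$, $\widetilde{\bm{x}}_i^t=\alpha_i^t\bm{x}_i^t$ expresses the bound in terms of $(\widetilde{\bm{h}}_i^t-\bm{h}_i^\natural,\widetilde{\bm{x}}_i^t-\bm{x}_i^\natural)$ and a weighted gradient, where the scalar factors $1/\|\bm{x}_i^t\|_2^2$ and $1/\|\bm{h}_i^t\|_2^2$ play the role of the diagonal entries $\beta_{i1},\beta_{i2}$ of $\bm{W}_i$ in Lemma \ref{L1} (condition (\ref{con1}) guarantees $\|\bm{x}_i^t\|_2^2,\|\bm{h}_i^t\|_2^2$ concentrate around $1/\kappa$ within the required tolerance $\delta/(\kappa\sqrt{s})$).

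Next I would split $\nabla f(\bm{z}^t) = \nabla f_{\text{clean}}(\bm{z}^t) + \nabla f_{\text{noise}}(\bm{z}^t)$, where the noise component reduces to $\mathcal{A}_i(\bm{e})\bm{x}_i^t$ and $\mathcal{A}_i(\bm{e})^*\bm{h}_i^t$ (up to conjugation). For the clean part I invoke the fundamental theorem of calculus along the segment joining $\bm{z}^t$ to the aligned ground truth, which rewrites $\nabla f_{\text{clean}}(\bm{z}^t)$ as $\int_0^1\nabla^2 f_{\text{clean}}(\bm{z}(\tau))\,d\tau$ applied to the error vector. Expanding the squared norm, the leading cross term becomes a quadratic form
\begin{align*}
\bm{u}^*\big[\bm{D}\nabla^2 f_{\text{clean}}+\nabla^2 f_{\text{clean}}\bm{D}\big]\bm{u}
\end{align*}
with $\bm{u}$ the stacked real-plus-conjugate error and $\bm{D}$ the diagonal weighting described above. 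Hypotheses (\ref{16a})--(\ref{con85c}) are exactly conditions (\ref{con1})--(\ref{con3}) of Lemma \ref{L1}, so its RSC bound delivers a lower bound of $\tfrac{1}{4\kappa}\|\bm{u}\|_2^2\asymp\tfrac{1}{\kappa}\mathrm{dist}^2(\bm{z}^t,\bm{z}^\natural)$, while the smoothness bound $\|\nabla^2 f_{\text{clean}}\|\lesssim s$ controls the $O(\eta^2)$ quadratic remainder. Choosing $\eta\asymp s^{-1}$ forces the smoothness term to be absorbed into the RSC gain, yielding a quadratic recursion
\begin{align*}
\mathrm{dist}^2(\bm{z}^{t+1},\bm{z}^\natural)\leq \big(1-\tfrac{\eta}{8\kappa}\big)\mathrm{dist}^2(\bm{z}^t,\bm{z}^\natural) + R_{\text{noise}},
\end{align*}
where $R_{\text{noise}}$ is controlled by $\eta\cdot\max_k\|\mathcal{A}_k(\bm{e})\|\cdot\mathrm{dist}(\bm{z}^t,\bm{z}^\natural)$ and $\eta^2\cdot s\cdot\max_k\|\mathcal{A}_k(\bm{e})\|^2$ after using $\|\bm{x}_i^t\|_2,\|\bm{h}_i^t\|_2\asymp 1$.

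Taking square roots via the elementary inequality $\sqrt{(1-\rho)a^2+b^2}\leq(1-\rho/2)a+b/\sqrt{\rho}$ and tracking the $\kappa$-dependence of the noise weight gives the claimed linear recursion with contraction factor $1-\eta/(16\kappa)$ and additive perturbation $3\kappa\sqrt{s}\max_k\|\mathcal{A}_k(\bm{e})\|$, where the $\sqrt{s}$ arises from summing $s$ independent per-component noise contributions and $\kappa$ arises from rescaling by the $d_i$'s in the distance function.

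The main obstacle I expect is the bookkeeping around the alignment and the diagonal weighting. Specifically, the per-component alignment parameters $\alpha_i^t$ need not coincide across $i$, and the weighting matrix $\bm{D}$ must be simultaneously consistent with Lemma \ref{L1}'s requirement $|\beta_{ik}-1/\kappa|\leq\delta/(\kappa\sqrt{s})$ for every $i$, which relies on using (\ref{con1}) to replace $\|\bm{x}_i^t\|_2^{-2}$ by $1/\|\bm{x}_i^\natural\|_2^2\cdot(1+O(\delta))$. Second, to convert the noise gradient into $\max_k\|\mathcal{A}_k(\bm{e})\|$ rather than a worse sum over $k$, I need both incoherence conditions (\ref{con2})--(\ref{con85c}) so that $\langle\mathcal{A}_i(\bm{e})\bm{x}_i^t,\widetilde{\bm{h}}_i^t-\bm{h}_i^\natural\rangle$-type terms can be estimated by operator-norm-times-$\ell_2$-norm rather than by crude component-wise bounds. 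These two ingredients are the essential technical work; once they are in hand, the rest is algebraic combination of Lemma \ref{L1}'s conclusions with the gradient-descent identity.
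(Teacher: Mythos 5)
Your proposal is correct and follows essentially the same route as the paper's proof: plug in the previous alignment parameters $\alpha_i^t$, split the gradient into $\nabla f_{\mathrm{clean}}$ plus the $\mathcal{A}_i(\bm{e})$-terms, rewrite the clean gradient via the fundamental theorem of calculus as an integrated Hessian acting on the error, and invoke Lemma \ref{L1}'s restricted strong convexity/smoothness with the diagonal weighting to get the $(1-\eta/(16\kappa))$ contraction, with the $\kappa\sqrt{s}$ factor coming from the $d_i$-rescaling and the block sum. The only cosmetic differences are that the paper bounds $\|\bm{\varphi}_k^t+\bm{\psi}_k^t\|_2$ by the triangle inequality before squaring (so no cross terms or Young-type inequality are needed), and that the noise blocks are controlled purely by $\|\mathcal{A}_k(\bm{e})\|\,\|\bm{x}_k^t\|_2$ without any use of the incoherence conditions, which are needed only to place $\bm{z}(\tau)$ in the region where Lemma \ref{L1} applies.
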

\begin{proof}
        Please refer to Appendix \ref{proof2} for details.
\end{proof}
\begin{remark}\label{R2}
        The key idea of proving Lemma \ref{2} is to decompose the gradient (\ref{gradient}) to the the part of pure gradient $\nabla_{\bm{h}_i}f_{\mathrm{clean}}(\bm{z}) $ (resp. $\nabla_{\bm{x}_i}f_{\mathrm{clean}}(\bm{z}) $) and the part relative to the noise, i.e., $\mathcal{A}_i(\bm{e})\bm{x}_i$ (resp. $\mathcal{A}_i^*(\bm{e})\bm{h}_i$). The pure gradient $\nabla_{\bm{h}_i}f_{\mathrm{clean}}(\bm{z}) $ (resp. $\nabla_{\bm{x}_i}f_{\mathrm{clean}}(\bm{z}) $) is required in Lemma \ref{L1}.
\end{remark}
As a result, if $\bm{z}^t$ satisfies condition (\ref{con85}) for all $0\leq t\leq T = O(m^\gamma)$ for some arbitrary constant $\gamma>0$, then there is 
\begin{align}
&\mbox{dist}(\bm{z}^t,\bm{z}^\natural)-48\kappa^2\sqrt{s}/\eta\max_{1\leq k \leq s}\left\|\mathcal{A}_k(\bm{e})\right\|\notag\\
\leq& \rho^t(\mathrm{dist}(\bm{z}^0,\bm{z}^\natural)-48\kappa^2\sqrt{s}/\eta\max_{1\leq k \leq s}\left\|\mathcal{A}_k(\bm{e})\right\|),
\end{align}
with probability at least $1-O(m^{-\gamma})$ for some arbitrary constant $\gamma>0$, where $\rho:=1-\eta/(16\kappa)$.
In the absence of noise ($\bm{e} = \bm{0}$), exact recovery can be established and it yields linear convergence rate due to $\mbox{dist}(\bm{z}^t,\bm{z}^\natural)\leq \rho^t\mathrm{dist}(\bm{z}^0,\bm{z}^\natural)$. In addition, stable recovery can be achieved in the presence of noise, where the estimation error is controlled by the noise level.

\subsection{Establishing Iterates in the Region of Incoherence and Contraction}
In this subsection, we will demonstrate that the iterates of Wirtinger flow algorithm stay within the region of incoherence and contraction. In particular, the leave-one-out argument has been introduced to address the statistical dependence between $\{\bm{h}_i^{t}\}$ (resp. $\{\bm{x}_i^{t}\}$) and $\{\bm{b}_j\}$ (resp.$\{\bm{a}_{ij}\}$). Recall that $\{\bm{h}_i^{t,(l)},\bm{x}_i^{t,(l)}\}$ are defined in the recipe for proving Theorem 1.
For simplicity, we denote $\bm{z}^{t,(l)} =[\bm{z}_1^{t,(l)\ast} \cdots \bm{z}_s^{t,(l)\ast}]^\ast$     where $\bm{z}_i^{t,(l)} = [\bm{h}_i^{t,(l)\ast} ~ \bm{x}_i^{t,(l)\ast}
]^\ast$ and $f\left(\bm{z}^{t,(l)}\right):=f^{(l)}\left(\bm{h},\bm{x}\right)$.
We further define the alignment parameters $\alpha_i^{t,(l)}$, signals $\widetilde{\bm{h}} ^{t,(l)}_i$ and $\widetilde{\bm{x}} ^{t,(l)}_i$ in the context of leave-one-out sequence.

We continue the proof by induction. For brief, with $\widetilde{\bm{z}}_i^t =  [\widetilde{\bm{z}}_1^{t*},\cdots, \widetilde{\bm{z}}_s^{t*}]^*$ where $\widetilde{\bm{z}}_i^t =[\widetilde{\bm{h}}_i^{t\ast} ~ \widetilde{\bm{x}}_i^{t\ast}
]^\ast $, the set of induction hypotheses of local geometry is listed as follows:
\begin{subequations}\label{h}
        \begin{align}
        &\mathrm{dist}(\bm{z}^t,\bm{z}^\natural)\leq C_1\frac{1}{\log^2m}\label{h0},\\
        &\mathrm{dist}(\bm{z}^{t,(l)},\widetilde{\bm{z}}^t)\leq C_2\frac{s\kappa\mu}{\sqrt{m}}\sqrt{\frac{\mu^2K\log^9m}{m}},\label{h1}\\
        &\max_{1\leq i \leq s, 1\leq j\leq m}\left|\bm{a}_{ij}^*\left(\widetilde{\bm{x}}_i^t - \bm{x}_i^{\natural}\right)\right|\cdot\|\bm{x}_i\|_2^{-1}\leq C_3\frac{1}{\sqrt{s}\log^{3/2}m},\label{h2}\\
        &\max_{1\leq i \leq s, 1\leq j\leq m}\left|\bm{b}_l^* \widetilde{\bm{h}}_i^t \right|\cdot\|\bm{h}_i\|_2^{-1}\leq C_4 \frac{\mu}{\sqrt{m}}\log^2 m,\label{h3}
        \end{align}
\end{subequations}
where $C_1,C_3$ are some sufficiently small constants, while $C_2,C_4$ are some sufficiently large constants. In particular, (\ref{h0}) and (\ref{h1}) can be also represented with respect to $\bm{z}_i$:
\begin{subequations}
        \begin{align}
        &\mathrm{dist}(\bm{z}_i^t,\bm{z}_i^\natural)\leq C_1\frac{1}{\sqrt{s}\log^2m}\label{h01},\\
        &\mathrm{dist}(\bm{z}_i^{t,(l)},\widetilde{\bm{z}}_i^t)\leq C_2\frac{\kappa\mu}{\sqrt{m}}\sqrt{\frac{s\mu^2K\log^9m}{m}},\label{h11}
        \end{align}
\end{subequations} 
for $i= 1,\cdots, s$. We aim to specify that the induction hypotheses (\ref{h}) hold for $(t+1)$-th iteration with high probability, if these hypotheses hold up to the $t$-th iteration. 
Since (\ref{h0}) has been identified in (\ref{16a}) as $\delta\asymp 1/\log^2 m$, we begin with the hypothesis (\ref{h1}) in the following lemma.
\begin{lemma}\label{3}
        Suppose the number of measurements satisfies $m\gg (\mu^2+\sigma^2)s^{2}\kappa K\log^{13/2}m$ and the step size obeys $\eta>0$ and $\eta\asymp s^{-1}$. Under the hypotheses (\ref{h}) for the $t$-th iteration, one has
$
        \mathrm{dist}(\bm{z}^{t+1,(l)},\widetilde{\bm{z}}^{t+1})\leq C_2\frac{s\kappa\mu}{\sqrt{m}}\sqrt{\frac{\mu^2K\log^9m}{m}}$, $
        \max_{1\leq l\leq m}\left\|\widetilde{\bm{z}}^{t+1,(l)},\widetilde{\bm{z}}^{t+1}\right\|_2\lesssim C_2\frac{s\mu}{\sqrt{m}}\sqrt{\frac{\mu^2K\log^9m}{m}},
$
        with probability at least $1-O(m^{-9})$.
\end{lemma}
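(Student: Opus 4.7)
The plan is to compare $\widetilde{\bm z}^{t+1,(l)}$ with $\widetilde{\bm z}^{t+1}$ one step at a time, by expressing their difference as the previous difference transformed by one gradient step plus an isolated perturbation coming from the missing $l$-th measurement. Writing the Wirtinger flow updates for both sequences and adding and subtracting $\nabla f(\bm z^{t,(l)})$, I would split the per-step error into a contraction piece $\bigl[\widetilde{\bm z}^t - \widetilde{\bm z}^{t,(l)}\bigr] - \eta\bigl[\nabla f(\bm z^t) - \nabla f(\bm z^{t,(l)})\bigr]$ (after the block-wise rescaling by $\|\bm x_i^t\|_2^{-2},\|\bm h_i^t\|_2^{-2}$ and a common alignment through a single $\alpha_i$), plus a leave-one-out piece $-\eta\bigl[\nabla f(\bm z^{t,(l)}) - \nabla f^{(l)}(\bm z^{t,(l)})\bigr]$ that depends only on the single measurement indexed by $l$.

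For the contraction piece I would invoke the fundamental theorem of calculus in Wirtinger coordinates to rewrite the gradient gap as an integral of $\nabla^2 f_{\mathrm{clean}}$ along the segment from $\bm z^{t,(l)}$ to $\bm z^t$ plus a noise term absorbed into $\mathcal A_i(\bm e)$, just as in the proof of Lemma \ref{2}. Since the induction hypotheses (\ref{h}) force both endpoints to satisfy the RIC conditions (\ref{con1})--(\ref{con3}) of Lemma \ref{L1}, the Hessian integral delivers the $(1-\eta/(16\kappa))$ contraction on $\mathrm{dist}(\bm z^{t,(l)},\widetilde{\bm z}^t)$. For the leave-one-out piece, the $\bm h_i$-block equals
\begin{equation*}
\Bigl(\sum_{k=1}^{s}\bm b_l^*\bigl(\bm h_k^{t,(l)}\bm x_k^{t,(l)\ast}-\bm h_k^\natural\bm x_k^{\natural\ast}\bigr)\bm a_{kl} - e_l\Bigr)\,\bm b_l\,\bm a_{il}^*\,\bm x_i^{t,(l)},
\end{equation*}
and the $\bm x_i$-block is analogous. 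Crucially, $\bm z^{t,(l)}$ is independent of $\{\bm a_{\cdot,l},\bm b_l\}$ by construction, so each $\bm b_l^*\bm h_k^{t,(l)}$ inherits the incoherence bound (\ref{h3}) via the closeness $\bm z^{t,(l)}\approx\widetilde{\bm z}^t$; each $\bm a_{kl}^*(\bm x_k^{t,(l)}-\bm x_k^\natural)$ is a sub-Gaussian inner product bounded through (\ref{h2}) together with $\|\bm z^{t,(l)}-\widetilde{\bm z}^t\|_2$; $\bm a_{il}^*\bm x_i^{t,(l)}$ is of typical size $\sqrt{\log m}$; and $|e_l|\lesssim \sigma d_0\sqrt{\log m/m}$ by Gaussian concentration. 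Summing over $k\in[s]$, scaling by $\eta\asymp s^{-1}$, and normalizing by $\|\bm x_i^t\|_2^{-2}$, this perturbation is of order $\tfrac{\mu}{\sqrt m}\sqrt{\tfrac{\mu^2 K\log^9 m}{m}}$, which is strictly smaller than the target $C_2$-bound once $C_2$ is chosen large enough and $m\gtrsim (\mu^2+\sigma^2)s^2\kappa K\,\mathrm{polylog}(m)$. A union bound over $l\in[m]$ yields both displayed inequalities; the extra factor $\kappa$ in the distance bound (relative to the $\ell_2$ bound) enters only through the $1/\sqrt{d_i}$ normalization in (\ref{dist}) together with $\min_i d_i\asymp 1/\kappa^2$.

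The main obstacle is the treatment of the leave-one-out piece in the multi-source setting. In single-source blind deconvolution the residual contains a single term, while here the sum $\sum_{k=1}^{s}\bm b_l^*\bigl(\bm h_k^{t,(l)}\bm x_k^{t,(l)\ast}-\bm h_k^\natural\bm x_k^{\natural\ast}\bigr)\bm a_{kl}$ aggregates a cross-source interference whose naive bound carries an $s$-dependence that would spoil the sample complexity. I would split this bilinear difference as $\bm b_l^*\bm h_k^{t,(l)}\cdot\bm a_{kl}^*(\bm x_k^{t,(l)}-\bm x_k^\natural) + \bm b_l^*(\bm h_k^{t,(l)}-\bm h_k^\natural)\cdot\bm a_{kl}^*\bm x_k^\natural$ and control each factor sharply by combining the incoherence hypotheses with sub-Gaussian concentration conditioned on $\bm z^{t,(l)}$, so that only a single factor of $\sqrt s$ survives into the final bound. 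A secondary subtlety is alignment propagation: the minimizer $\alpha_i^{t,(l)}$ in (\ref{a}) for the leave-one-out sequence need not equal $\alpha_i^t$ for the true sequence, so I would pick $\alpha_i^{t,(l)}$ relative to $\bm z_i^{t,(l)}$ consistent with the alignment of $\widetilde{\bm z}^t$ (rather than re-aligning at each step) to prevent alignment mismatches from accumulating across iterations.
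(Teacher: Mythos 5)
Your proposal follows essentially the same route as the paper's proof: a one-step leave-one-out analysis that splits the aligned update difference into a contraction term controlled by the restricted strong convexity of Lemma \ref{L1} (exactly as in the proof of Lemma \ref{2}, with the noise folded into $\mathcal{A}_k(\bm{e})$) and a single-measurement perturbation term controlled via the incoherence hypotheses and the statistical independence of $\bm{z}^{t,(l)}$ from the $l$-th measurement, with mutual alignment parameters and the ratio control of Lemma \ref{L16} preventing alignment drift. The paper additionally isolates the mismatch between the normalizations $\|\widetilde{\bm{x}}_k^{t}\|_2^{-2}$ and $\|\widehat{\bm{x}}_k^{t,(l)}\|_2^{-2}$ as a separate term ($\bm{J}_{k2}$), but that is bookkeeping your decomposition absorbs without changing the argument.
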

\begin{proof}
        Please refer to Appendix \ref{proof4} for details.
\end{proof}    
\begin{remark}
        The key idea of proving Lemma \ref{3} is similar to the one in Lemma \ref{2} that decomposes the gradient (\ref{gradient}) in the update rule into the part of pure gradient and the part relative to the noise. Combining Lemma \ref{L1} and Lemma \ref{L16}, we  finish the proof.
\end{remark}
Before proceeding to the hypothesis (\ref{h2}), let us first show the incoherence of the leave-one-out iterate $\bm{x}_i^{t+1,(l)}$ with respect to $\bm{a}_{il}$ for all $1\leq i \leq s$, $1\leq l\leq m$.  Based on the triangle inequality, one has
\begin{align}\label{33}
\|\widetilde{\bm{x}}_i^{t+1,(l)} - \bm{x}_i^{\natural}\|_2&\leq \|\widetilde{\bm{x}}_i^{t+1,(l)}-\widetilde{\bm{x}}_i^{t+1}\|+\|\widetilde{\bm{x}}_i^{t+1} - \bm{x}_i^{\natural}\|_2\notag\\
&\overset{(\text{i})}{\leq}C\frac{\mu}{m}\sqrt{\frac{\mu^2sK\log^9 m}{m}} + C_1\frac{1}{\kappa\sqrt{s}\log^2 m}\notag\\
&\overset{(\text{ii})}{\leq}{2C_1}/({\kappa\sqrt{s}\log^2m}),
\end{align}
where (i) arises from Lemma \ref{2} and Lemma \ref{3} and (ii) holds as long as $m\gg (\mu^2+\sigma^2)\sqrt{sK}\kappa^{2/3}\log^{13/2}m$.
Using the  inequality (\ref{33}), the standard Gaussian concentration inequality in \cite{ma2017implicit} and the statistical independence, it follows that 
\begin{align}\label{abov}
&\max_{1\leq i \leq s, 1\leq l\leq m}\left|\bm{a}_{il}^*\left(\widetilde{\bm{x}}_i^{t+1,(l)} - \bm{x}_i^{\natural}\right)\right|\cdot\|\bm{x}_i^{\natural}\|_2^{-1}
\notag\\
\leq& 5\sqrt{\log m}\max_{1\leq i \leq s, 1\leq l\leq m}\left\|\widetilde{\bm{x}}_i^{t+1,(l)} - \bm{x}_i^{\natural}\right\|_2\cdot\|\bm{x}_i^{\natural}\|_2^{-1}\notag\\
\leq & 10C_1 \frac{1}{\sqrt{s}\log^{3/2}m}
\end{align}
with probability exceeding $1-O(m^{-9})$.
For each $1\leq i \leq s, 1\leq l\leq m$, we further obtain
\begin{align}
&\left|\bm{a}_{il}^*\left(\widetilde{\bm{x}}_i^{t+1} - \bm{x}_i^{\natural}\right)\right|\cdot\|\bm{x}_i^{\natural}\|_2^{-1}\notag\\
\overset{(\text{i})}{\leq}&\left(\|\bm{a}_{il}\|_2\|\widetilde{\bm{x}}_i^{t+1} - \widetilde{\bm{x}}_i^{t+1,(l)}\|_2+\left|\bm{a}_{il}^*\left(\widetilde{\bm{x}}_i^{t+1,(l)} - \bm{x}_i^{\natural}\right)\right|\right)\|\bm{x}_i^{\natural}\|_2^{-1}\notag\\
\overset{(\text{ii})}{\leq}&3\sqrt{K}\cdot C\frac{\kappa\mu}{m}\sqrt{\frac{\mu^2sK\log^9 m}{m}} + 10C_1\frac{1}{\sqrt{s}\log^{3/2} m}\notag\\
\overset{(\text{iii})}{\leq}&C_3\frac{1}{\sqrt{s}\log^{3/2}m},
\end{align} 
where  step (i) is based on the Cauchy-Schwarz inequality, step (ii) follows from the bound (\ref{abov}), Lemma \ref{3} and the bound  with probability at least $1-Cm\exp(-cK)$, for some constants $c,C>0$ \cite{ma2017implicit},
$
\max_{1\leq j\leq m}\|\bm{a}_j\|_2\leq 3\sqrt{K},
$
and the last step (iii) holds as long as $m\gg (\mu^2+\sigma^2)s\kappa^{2/3} K\log^6m$ and $C_3\geq 11C_1$.
It remains to justify the incoherence of $\bm{h}_i^{t+1}$ with respect to $\bm{b}_l$ for all $1\leq i \leq s, 1\leq l \leq m$. The result is summarized as follows.
\begin{lemma}\label{4}
        Suppose the induction hypotheses (\ref{h}) hold true for $t$-th iteration and the number of measurements obeys $m\gg (\mu^2+\sigma^2)s^2K\log^8m$. Then with probability at least $1-O(m^{-9})$,
$
        \max_{1\leq i \leq s, 1\leq j\leq m}\left|\bm{b}_l^* \widetilde{\bm{h}}_i^{t+1} \right|\cdot\|\bm{h}_i^{\natural}\|_2^{-1}\leq C_4 \frac{\mu}{\sqrt{m}}\log^2 m,
$
        provided that $C_4$ is sufficiently large and the step size obeys $\eta>0$ and $\eta\asymp s^{-1}$.
\end{lemma}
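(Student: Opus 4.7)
The target bound is an incoherence statement of order $\mu \log^2 m/\sqrt m$, which is much tighter than the trivial $\ell_2$ bound $\|\widetilde{\bm h}_i^{t+1}\|_2\lesssim \|\bm h_i^\natural\|_2$. The naive leave-one-out split $|\bm b_l^*\widetilde{\bm h}_i^{t+1}|\le |\bm b_l^*(\widetilde{\bm h}_i^{t+1}-\widetilde{\bm h}_i^{t+1,(l)})|+|\bm b_l^*\widetilde{\bm h}_i^{t+1,(l)}|$ does not suffice here: $\bm b_l$ is a deterministic row of the unitary DFT with $\|\bm b_l\|_2=\sqrt{K/m}$, so no Gaussian concentration is available on the second piece, and Cauchy--Schwarz on the first piece loses an extra $\sqrt K$ factor compared to what the sample size $m\gtrsim (\mu^2+\sigma^2)s^2K\log^8 m$ can absorb. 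Instead, I would track $\bm b_l^*\widetilde{\bm h}_i^{t+1}$ directly through the Wirtinger update, and use the leave-one-out iterates only inside the gradient expansion where they decouple the $\bm b_l$-dependence.

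\textbf{Recursion via the update rule.} Starting from $\bm h_i^{t+1}=\bm h_i^t-(\eta/\|\bm x_i^t\|_2^2)\nabla_{\bm h_i}f(\bm z^t)$ and using $\widetilde{\bm h}_i^{t+1}=\bm h_i^{t+1}/\overline{\alpha_i^{t+1}}$, I would first replace $\alpha_i^{t+1}$ by $\alpha_i^t$ (the difference is controlled by Lemma \ref{2} and is of lower order) to obtain
\begin{equation*}
\bm b_l^*\widetilde{\bm h}_i^{t+1}=\bm b_l^*\widetilde{\bm h}_i^{t}-\frac{\eta}{\|\widetilde{\bm x}_i^t\|_2^2}\,\bm b_l^*\!\left[\nabla_{\bm h_i}f_{\text{clean}}(\widetilde{\bm z}^t)-\mathcal A_i(\bm e)\widetilde{\bm x}_i^t\right]+(\text{alignment error}).
\end{equation*}
The noise term $\bm b_l^*\mathcal A_i(\bm e)\widetilde{\bm x}_i^t=\sum_j\overline{e_j}(\bm b_l^*\bm b_j)(\bm a_{ij}^*\widetilde{\bm x}_i^t)$ is complex Gaussian conditional on $\widetilde{\bm x}_i^t$ and the $\bm a_{ij}$'s; a Hoeffding/Bernstein inequality combined with $\sum_j|\bm b_l^*\bm b_j|^2\le \|\bm b_l\|_2^2=K/m$ and the bound $|\bm a_{ij}^*\widetilde{\bm x}_i^t|\lesssim \log m$ gives a contribution of order $\sigma\mu\sqrt K\log m/m$, comfortably below the target.

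\textbf{The clean gradient and the leave-one-out trick.} Expanding $\bm b_l^*\nabla_{\bm h_i}f_{\text{clean}}(\widetilde{\bm z}^t)=\sum_j(\bm b_l^*\bm b_j)\big[\sum_k\bm b_j^*(\widetilde{\bm h}_k^t\widetilde{\bm x}_k^{t*}-\bm h_k^\natural\bm x_k^{\natural*})\bm a_{kj}\big]\bm a_{ij}^*\widetilde{\bm x}_i^t$, I would substitute $\widetilde{\bm h}_k^t=\widetilde{\bm h}_k^{t,(l)}+(\widetilde{\bm h}_k^t-\widetilde{\bm h}_k^{t,(l)})$ and similarly for $\widetilde{\bm x}_k^t$. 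The pure leave-one-out pieces are independent of both $\bm b_l$ and $\bm a_{il}$, so after isolating the $j=l$ term (bounded directly using the induction hypotheses \eqref{h2}, \eqref{h3}) I can condition on $\{\widetilde{\bm z}^{t,(l)}\}$ and use concentration of the Gaussian $\bm a_{ij}$'s together with the operator-norm bound $\|\bm B\|=1$ and the row estimate $\|\bm b_l\|_2=\sqrt{K/m}$ to extract a $\mu/\sqrt m$ factor (instead of the $\sqrt K/\sqrt m$ factor from a crude Cauchy--Schwarz) on the $j\ne l$ sum. The residual pieces $\widetilde{\bm z}^t-\widetilde{\bm z}^{t,(l)}$ are controlled in $\ell_2$ by Lemma \ref{3}, so they contribute $\lesssim s\kappa\mu^2\sqrt{K\log^9 m}/m$, again negligible under the sample complexity assumption. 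Combining these pieces, together with the contraction factor coming from $\bm b_l^*\widetilde{\bm h}_i^{t}$ already satisfying \eqref{h3} at step $t$, yields a bound of the form $|\bm b_l^*\widetilde{\bm h}_i^{t+1}|/\|\bm h_i^\natural\|_2\le (1-c\eta/\kappa)\,C_4\mu\log^2 m/\sqrt m+(\text{new error})$, and choosing $C_4$ large relative to the absolute constants closes the induction.

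\textbf{Main obstacle.} The hardest part will be the cross-source ($k\neq i$) terms in the clean-gradient expansion: they couple the $\bm b_l^*\widetilde{\bm h}_i^{t+1}$ dynamics to all other sources through both $\bm b_j^*\widetilde{\bm h}_k^t$ and $\bm a_{kj}^*\widetilde{\bm x}_i^t$, producing $O(s)$ additional contributions and demanding that \eqref{h2}--\eqref{h3} be invoked uniformly over $k$. The key saving is the mutual independence of $\{\bm a_{kj}\}_{k=1}^s$ across $k$, which yields an extra $1/\sqrt K$ factor when estimating $\bm a_{ij}^*\widetilde{\bm x}_k^{t,(l)}$ and $\bm a_{kj}^*\widetilde{\bm x}_i^{t,(l)}$ for $k\ne i$ by Gaussian concentration; the resulting $s$-dependence is exactly what forces the $s^2$ factor in the sample complexity $m\gg (\mu^2+\sigma^2)s^2K\log^8 m$ and is absorbed there.
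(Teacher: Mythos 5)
Your high-level architecture is the same as the paper's: you track $\bm{b}_l^*\widetilde{\bm{h}}_i^{t+1}$ directly through the gradient update (after replacing $\alpha_i^{t+1}$ by $\alpha_i^t$ at the cost of a $(1+\delta)$ factor), peel off the noise contribution $\mathcal{A}_i(\bm{e})\widetilde{\bm{x}}_i^t$, and aim at a self-bounding recursion of the form $(1-c\eta/\kappa)\,C_4\mu\log^2m/\sqrt{m}+\text{(small error)}$ closed by taking $C_4$ large; the paper does exactly this, rewriting the clean gradient via $\sum_j\bm{b}_j\bm{b}_j^*=\bm{I}_K$ into the terms $\bm{v}_{i1},\dots,\bm{v}_{i4}$ and bounding them with $\sum_j|\bm{b}_l^*\bm{b}_j|\le 4\log m$, the induction hypotheses (\ref{h2})--(\ref{h3}), the incoherence condition (\ref{inco}), and the uniform bound $|\bm{a}_{ij}^*\bm{x}_i^\natural|\lesssim\sqrt{\log m}$. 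Your treatment of the noise term and of the $j=l$ summand is fine, as is the contraction bookkeeping at the end.

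The gap is in the central decoupling step. You substitute $\widetilde{\bm{h}}_k^t=\widetilde{\bm{h}}_k^{t,(l)}+(\widetilde{\bm{h}}_k^t-\widetilde{\bm{h}}_k^{t,(l)})$ with the \emph{same fixed} $l$ throughout the sum over $j$, and then propose to condition on $\{\widetilde{\bm{z}}^{t,(l)}\}$ and apply Gaussian concentration over the $\bm{a}_{ij}$'s in the $j\neq l$ sum. This is not valid: the $l$-th leave-one-out sequence is a deterministic function of all measurements except the $l$-th, so it is statistically independent only of $(\bm{b}_l,\bm{a}_{il})$ — precisely the $j=l$ summand you already handle separately — and remains fully dependent on $\bm{a}_{ij}$ (and $\bm{b}_j$) for every $j\neq l$. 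Conditionally on $\widetilde{\bm{z}}^{t,(l)}$, those vectors are neither Gaussian nor independent, so the concentration that is supposed to beat the crude Cauchy--Schwarz estimate has no justification; the same objection applies to your claimed concentration of $\bm{a}_{ij}^*\widetilde{\bm{x}}_k^{t,(l)}$ for $k\ne i$ in the cross-source terms. The device the paper's argument relies on (following the blind-deconvolution analysis it cites) where fresh concentration is genuinely needed — e.g., for the $\bm{v}_{i2}$-type term $\sum_{j,k}\bm{b}_l^*\bm{b}_j\,\bm{b}_j^*\widetilde{\bm{h}}_k^t\bigl(\bm{x}_k^{\natural*}\bm{a}_{kj}\bm{a}_{ij}^*\bm{x}_i^\natural-\|\bm{x}_k^\natural\|_2^2\bigr)$ — is the \emph{per-summand} substitution $\widetilde{\bm{h}}_k^t=\widetilde{\bm{h}}_k^{t,(j)}+(\widetilde{\bm{h}}_k^t-\widetilde{\bm{h}}_k^{t,(j)})$, i.e., the leave-one-out sequence indexed by the summation index $j$, which \emph{is} independent of $(\bm{b}_j,\bm{a}_{kj},\bm{a}_{ij})$, with the differences absorbed via hypothesis (\ref{h1}); the remaining terms are controlled purely by the uniform bounds (\ref{h2}), (\ref{h3}), (\ref{inco}) and $\sum_j|\bm{b}_l^*\bm{b}_j|\lesssim\log m$, with no conditioning on a fixed-$l$ auxiliary sequence. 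As written, your key probabilistic step fails, and the $\mu/\sqrt m$ factor you expect to "extract" from $\|\bm{b}_l\|_2=\sqrt{K/m}$ and $\|\bm{B}\|=1$ in the $j\neq l$ sum is not obtained.
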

\begin{proof}
        Please refer to Appendix \ref{proof5} for details.
\end{proof}
\begin{remark}
        Based on the claim (\ref{L526}) in Lemma \ref{L16}, it suffices to control $|\bm{b}_l^*\frac{1}{\overline{\alpha_i^t}}\bm{h}_i^{t+1}|\cdot\|\bm{h}_i^\natural\|_2$ in order to bound $\left|\bm{b}_l^* \widetilde{\bm{h}}_i^{t+1} \right|\cdot\|\bm{h}_i^{\natural}\|_2^{-1}$ in  Lemma \ref{4}. We represent $\frac{1}{\overline{\alpha_i^t}}\bm{h}_i^{t+1}$ by the gradient update rule where the gradient is decomposed as Remark \ref{R2} describes. The quantities of interest are  separated into several terms which are bounded individually.
        In addition, the random vector $\bm{a}_{ij}$ with i.i.d. plays a vital role in the proof since $\mathbb{E}(\bm{a}_{ij}\bm{a}_{kj}^*)=0$ for $k\neq i$.
\end{remark}

\subsection{Establishing Initial Point in the Region of Incoherence and Contraction}
In order to finish the induction step, we  need to further show that the spectral initializations $\bm{z}_i^0$ and $\bm{z}_i^{0,(l)}$ for $1\leq i\leq s, 1\leq l \leq m$ hold for the induction hypotheses (\ref{h}) of local geometry. The related lemmas are summarized as follows.
\begin{lemma}\label{L6}
        With probability at least $1-O(m^{-9})$, there exists some constant $C>0$ such that 
        \begin{align}
        \min_{\alpha_i\in\mathbb{C},|\alpha_i| = 1}\left\{\left\|\alpha_i\bm{h}_i^0-\bm{h}_i^\natural\right\|+\left\|\alpha_i\bm{x}_i^0-\bm{x}_i^\natural\right\|\right\}\leq \frac{\xi}{\kappa\sqrt{s}} ~\text{and}\label{26}\\
        \min_{\alpha_i\in\mathbb{C},|\alpha_i| = 1}\left\{\left\|\alpha_i\bm{h}_i^{0,(l)}-\bm{h}_i^\natural\right\|+\left\|\alpha_i\bm{x}_i^{0,(l)}-\bm{x}_i^\natural\right\|\right\}\leq \frac{\xi}{\kappa\sqrt{s}},\label{27}
        \end{align} 
        and $||\alpha_i^0|-1|< 1/4$, for each $1\leq i\leq s,1\leq l\leq m$, provided that
        $m\geq {C(\mu^2+\sigma^2)s\kappa^2K\log m}/{\xi^2} .$
\end{lemma}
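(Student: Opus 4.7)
The spectral initialization forms the matrix $\bm{M}_i = \sum_{j=1}^m y_j \bm{b}_j \bm{a}_{ij}^*$ and extracts its leading singular triple. The plan is to show $\mathbb{E}[\bm{M}_i] = \bm{h}_i^\natural \bm{x}_i^{\natural\ast}$, control the perturbation $\|\bm{M}_i - \mathbb{E}[\bm{M}_i]\|$ in spectral norm, and then invoke a Wedin/Davis--Kahan type bound on the leading singular vectors together with Weyl's inequality on the leading singular value. Computing the expectation uses $\mathbb{E}[\bm{a}_{kj}\bm{a}_{ij}^*] = \delta_{ki}\bm{I}_K$ and $\sum_{j=1}^m \bm{b}_j \bm{b}_j^* = \bm{I}_K$ (from $\bm{B}$ being formed by orthonormal columns of the DFT), so only the $k=i$ summand survives and equals $\bm{h}_i^\natural \bm{x}_i^{\natural\ast}$. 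Since this expectation is rank one, its leading singular vectors $\bm{h}_i^\natural/\|\bm{h}_i^\natural\|_2$ and $\bm{x}_i^\natural/\|\bm{x}_i^\natural\|_2$ are determined up to a common joint phase, and a Wedin-type bound yields a single unit-modulus $\alpha_i$ that aligns both $\check{\bm{h}}_i^0$ and $\check{\bm{x}}_i^0$ with the corresponding ground-truth directions. After rescaling by $\sqrt{\sigma_1(\bm{M}_i)} \approx \|\bm{h}_i^\natural\|_2$ the inequality (\ref{26}) follows. The amplitude claim $||\alpha_i^0|-1|<1/4$ is then immediate from Weyl's inequality, since $\|\bm{h}_i^0\|_2 = \|\bm{x}_i^0\|_2 = \sqrt{\sigma_1(\bm{M}_i)}$ differs from $\|\bm{h}_i^\natural\|_2$ by at most a factor dictated by $\|\bm{M}_i - \mathbb{E}[\bm{M}_i]\|/\|\bm{h}_i^\natural\|_2^2$.

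\textbf{Controlling the perturbation.} Writing $\bm{M}_i - \mathbb{E}[\bm{M}_i]$ out separates it into three pieces: the self-source fluctuation $\sum_j \bm{b}_j\bm{b}_j^* \bm{h}_i^\natural \bm{x}_i^{\natural\ast}(\bm{a}_{ij}\bm{a}_{ij}^* - \bm{I}_K)$, the cross-source interference $\sum_{k\neq i}\sum_j \bm{b}_j\bm{b}_j^* \bm{h}_k^\natural \bm{x}_k^{\natural\ast}\bm{a}_{kj}\bm{a}_{ij}^*$, and the noise contribution $\mathcal{A}_i(\bm{e})=\sum_j e_j \bm{b}_j \bm{a}_{ij}^*$. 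I would bound the first two via a matrix Bernstein (or truncated Bernstein) argument, leveraging the incoherence $|\bm{b}_j^*\bm{h}_k^\natural|\leq \mu\|\bm{h}_k^\natural\|_2/\sqrt{m}$ to control per-summand spectral norms and the independence $\bm{a}_{kj}\perp\bm{a}_{ij}$ for $k\neq i$ to keep each cross term mean-zero with a tight variance proxy. For the noise piece I would quote the bound $\max_i\|\mathcal{A}_i(\bm{e})\|\lesssim \sigma\sqrt{sK\log^2 m/m}$ stated in Theorem \ref{mainT}. Summing the three contributions yields $\|\bm{M}_i - \mathbb{E}[\bm{M}_i]\|\lesssim \sqrt{(\mu^2+\sigma^2)sK\,\mathrm{poly}\log m/m}\,\|\bm{h}_i^\natural\|_2\|\bm{x}_i^\natural\|_2$, which is smaller than the required $\xi\|\bm{h}_i^\natural\|_2\|\bm{x}_i^\natural\|_2/(\kappa\sqrt{s})$ up to an absolute constant under the stated sample complexity $m\geq C(\mu^2+\sigma^2)s\kappa^2 K\log m/\xi^2$.

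\textbf{Leave-one-out and main obstacle.} For (\ref{27}), define $\bm{M}_i^{(l)}$ by dropping the $j=l$ summand from $\bm{M}_i$, so that it depends on neither $\bm{b}_l$ nor $\{\bm{a}_{kl}\}_k$. Its expectation equals $\bm{h}_i^\natural\bm{x}_i^{\natural\ast} - \bm{b}_l\bm{b}_l^*\bm{h}_i^\natural\bm{x}_i^{\natural\ast}$, whose rank-one defect has spectral norm at most $|\bm{b}_l^*\bm{h}_i^\natural|\|\bm{x}_i^\natural\|_2\leq (\mu/\sqrt{m})\|\bm{h}_i^\natural\|_2\|\bm{x}_i^\natural\|_2$ by (\ref{inco}); the fluctuation of $\bm{M}_i^{(l)}$ around its expectation is handled by the same Bernstein argument applied to $m-1$ summands, producing an identical bound up to constants. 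A second application of Wedin then delivers (\ref{27}) with the same $\xi$. The main obstacle is the cross-source piece in the perturbation bound: unlike the analogous spectral step for phase retrieval or single-source blind deconvolution, it involves bilinear products of $s-1$ independent Gaussian vectors indexed by $k\neq i$ and $i$, and a careless matrix Bernstein would inflate the sample complexity by an extra factor of $s$. The correct handling exploits $\mathbb{E}[\bm{a}_{kj}\bm{a}_{ij}^*]=\bm{0}$ for $k\neq i$ so that each summand is mean-zero, a careful variance computation using the $\bm{b}_j$-incoherence, and a truncation step to handle the heavy tails of the bilinear statistics and recover the correct polylogarithmic factors.
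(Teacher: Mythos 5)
Your proposal follows essentially the same route as the paper's proof: compute $\mathbb{E}[\bm{M}_i]=\bm{h}_i^\natural\bm{x}_i^{\natural\ast}$, bound $\|\bm{M}_i-\mathbb{E}[\bm{M}_i]\|$, and combine a Wedin-type $\sin\Theta$ bound with Weyl's inequality before rescaling by $\sqrt{\sigma_1(\bm{M}_i)}$, with the leave-one-out matrix handled analogously. The only differences are in sourcing: the paper simply cites \cite[Lemma 6.16]{ling2018regularized} for the concentration bound that you propose to re-derive via matrix Bernstein (which is fine, but then the $s$-dependent bookkeeping needed to reach the $\xi/(\kappa\sqrt{s})$ target is on you), and it invokes \cite[Lemma 54]{ma2017implicit} for $||\alpha_i^0|-1|<1/4$, which requires the directional closeness from (\ref{26}) in addition to the Weyl-based scale argument you sketch.
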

\begin{proof}
       Please refer to Appendix \ref{proof6} for details.
\end{proof}
\begin{remark}
        The proof of Lemma 8 is based on the Wedin's sin$\Theta$ theorem \cite{dopico2000note} and the bound in \cite{ling2018regularized}, i.e., 
        for any $\xi>0$, 
        $
        \|\bm{M}_i-\mathbb{ E}[\bm{M}_i]\|\leq {\xi}/({\kappa\sqrt{s}}),
        $
        with probability at least $1-O(m^{-9})$, provided that 
        $
        m \gg {c_2(\mu^2+\sigma^2)s\kappa^2K\log m}/{\xi^2},
        $
        for some constant $c_2>0$. 
\end{remark}
From the definition of distance function (\ref{dist}) and the assumption $\xi\asymp 1/\log^2m$, we immediately imply that 
\begin{align}
&\mbox{dist}(\bm{z}^0,\bm{z}^\natural)\notag\\
\overset{(\text{i})}{\leq}&\min_{\alpha_i\in\mathbb{ C}}\sqrt{s}\kappa\left\{\left\|\frac{1}{\overline{\alpha_i}}\bm{h}_i^0-\bm{h}_i^\natural\right\|+\left\|\alpha_i\bm{x}_i^0-\bm{x}_i^\natural\right\|\right\}
\notag\\
\overset{(\text{ii})}{\leq}&\min_{\alpha_i\in\mathbb{ C},|\alpha_i| = 1}\sqrt{s}\kappa\left\{\left\|\alpha_i\bm{h}_i^0-\bm{h}_i^\natural\right\|+\left\|\alpha_i\bm{x}_i^0-\bm{x}_i^\natural\right\|\right\}
\notag\\
\overset{(\text{iii})}
{\leq}&{ C_1}\frac{1}{\log^2m},\label{dist0}
\end{align}
as long as $m\gg (\mu^2+\sigma^2)s\kappa^2K\log^6m$. Here, (i) arises from the inequality that $a^2+b^2\leq (a+b)^2$ for $a,b>0$ and the assumption that $\|\bm{h}_i^\natural\|_2 =\|\bm{x}_i^\natural\|_2$ with $\max_{1\leq i\leq s}\|\bm{x}_i^\natural\|_2=1$, (ii) occurs since the latter optimization problem has strictly smaller feasible set and (iii) derives from Lemma \ref{L6}. With similar strategy, we can get that with high probability 
\begin{align}
\mbox{dist}(\bm{z}^{0,(l)},\bm{z}^\natural)\lesssim \frac{1}{\log^2m},\quad 1\leq l\leq m.
\end{align}
This establishes the inductive hypothesis (\ref{h0}) for $t=0$. We further show the identification of (\ref{h1}) and (\ref{h3}) for $t=0$.
\begin{lemma}\label{L7}
        Suppose that $m\gg (\mu^2+\sigma^2)s^2\kappa^2K\log^3m$. Then with probability at least $1-O(m^{-9})$,
        \begin{align}
        \mathrm{dist}\left(\bm{z}^{0,(l)},\widetilde{\bm{z}}^0\right)&\leq C_2\frac{s\kappa\mu}{\sqrt{m}}\sqrt{\frac{\mu^2sK\log^5m}{{m}}}~\text{and}\label{28}\\
        \max_{1\leq i\leq m}|\bm{b}_l^*\widetilde{\bm{h}}^0_i|\cdot\|\bm{h}_i^\natural\|_2^{-1}&\leq C_4\frac{\mu\log^2m}{\sqrt{m}}.\label{29}
        \end{align}
\end{lemma}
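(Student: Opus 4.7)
The overall strategy is to treat the spectral initialization as a matrix perturbation problem: for each $i$, the matrix $\bm{M}_i$ has rank-one expectation $\bm{h}_i^\natural\bm{x}_i^{\natural*}$, and the leave-one-out matrix $\bm{M}_i^{(l)}:=\sum_{j\neq l}y_j\bm{b}_j\bm{a}_{ij}^*$ (whose top singular triple $(\sigma_1(\bm{M}_i^{(l)}),\check{\bm{h}}_i^{0,(l)},\check{\bm{x}}_i^{0,(l)})$ defines $\bm{z}_i^{0,(l)}$ just as in Algorithm \ref{algo}) decouples the statistical dependence on $\bm{a}_{il}$. The proof then splits into a matrix-perturbation argument for (\ref{28}) and a triangle-inequality / incoherence argument for (\ref{29}).

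For (\ref{28}), the first task is to bound the rank-one difference $\|\bm{M}_i-\bm{M}_i^{(l)}\|=|y_l|\cdot\|\bm{b}_l\|_2\cdot\|\bm{a}_{il}\|_2$ using the partial-DFT identity $\|\bm{b}_l\|_2=\sqrt{K/m}$, the standard bound $\|\bm{a}_{il}\|_2\lesssim\sqrt{K}$, and a bound on $|y_l|$ obtained by combining the incoherence $|\bm{b}_l^*\bm{h}_k^\natural|\leq\mu/\sqrt{m}$, Gaussian tails for $\bm{a}_{kl}^*\bm{x}_k^\natural$, and the subgaussian tail for $e_l$. The remark following Lemma \ref{L6} already supplies $\|\bm{M}_i-\mathbb{E}[\bm{M}_i]\|\leq\xi/(\kappa\sqrt{s})$, which combined with Weyl's inequality yields $\sigma_1(\bm{M}_i)$ close to $\|\bm{h}_i^\natural\|_2\|\bm{x}_i^\natural\|_2$ and a spectral gap of the same order. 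Feeding the perturbation bound into Wedin's $\sin\Theta$ theorem then controls $\|\check{\bm{h}}_i^0-\check{\bm{h}}_i^{0,(l)}\|_2$, $\|\check{\bm{x}}_i^0-\check{\bm{x}}_i^{0,(l)}\|_2$, and $|\sigma_1(\bm{M}_i)-\sigma_1(\bm{M}_i^{(l)})|$. Accounting for the rescaling by $\sqrt{\sigma_1(\cdot)}$ and the alignment parameter $\alpha_i$ (whose modulus is close to $1$ by Lemma \ref{L6}), and summing across $1\leq i\leq s$ in the distance (\ref{dist}), delivers (\ref{28}).

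For (\ref{29}), I would decompose $\bm{b}_l^*\widetilde{\bm{h}}_i^0 = \bm{b}_l^*\bm{h}_i^\natural + \bm{b}_l^*(\widetilde{\bm{h}}_i^0-\widetilde{\bm{h}}_i^{0,(l)}) + \bm{b}_l^*(\widetilde{\bm{h}}_i^{0,(l)}-\bm{h}_i^\natural)$. The first term is at most $\mu\|\bm{h}_i^\natural\|_2/\sqrt{m}$ by the incoherence definition. The second is at most $\|\bm{b}_l\|_2\cdot\|\widetilde{\bm{h}}_i^0-\widetilde{\bm{h}}_i^{0,(l)}\|_2 = O(\sqrt{K/m})$ times the distance bound from (\ref{28}), which is absorbed by the target $C_4\mu\log^2 m/\sqrt{m}$ under the sample complexity $m\gg(\mu^2+\sigma^2)s^2\kappa^2K\log^3 m$. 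For the third, more subtle, term I would expand via the power-iteration identity $\check{\bm{h}}_i^{0,(l)}\sigma_1(\bm{M}_i^{(l)})=\bm{M}_i^{(l)}\check{\bm{x}}_i^{0,(l)}$, rewrite $\bm{b}_l^*\widetilde{\bm{h}}_i^{0,(l)}$ as a sum over $j\neq l$ involving the inner products $\bm{b}_l^*\bm{b}_j$ and Gaussian quantities in $\bm{a}_{kj}$, and exploit (i) the partial-DFT near-orthogonality $|\bm{b}_l^*\bm{b}_j|\leq 1/m$ for $j\neq l$ together with (ii) the leave-one-out independence of $\check{\bm{x}}_i^{0,(l)}$ from $\bm{a}_{il}$, via a Hanson--Wright-type concentration inequality applied after conditioning on $\check{\bm{x}}_i^{0,(l)}$.

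The chief obstacle is precisely this third term: a naive Cauchy--Schwarz loses a factor of $\sqrt{K}$ that the sample complexity cannot absorb, so one must exploit both the partial-DFT near-orthogonality and the leave-one-out independence simultaneously in order to show that the deviation $\widetilde{\bm{h}}_i^{0,(l)}-\bm{h}_i^\natural$ is itself incoherent with $\bm{b}_l$ at the desired $\mu/\sqrt{m}$ scale. A final union bound over $1\leq i\leq s$ and $1\leq l\leq m$ together with all the high-probability events enumerated above then consolidates the overall failure probability at $O(m^{-9})$.
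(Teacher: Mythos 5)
There is a genuine gap, and it sits exactly where your own caveat about ``losing a factor of $\sqrt{K}$'' applies: in your treatment of (\ref{28}), not just of (\ref{29}). You propose to feed the raw spectral-norm perturbation $\|\bm{M}_i-\bm{M}_i^{(l)}\| = |y_l|\,\|\bm{b}_l\|_2\,\|\bm{a}_{il}\|_2$ into Wedin's $\sin\Theta$ theorem. With $|y_l|\lesssim s\mu\sqrt{\log m}/\sqrt{m}+|e_l|$, $\|\bm{b}_l\|_2=\sqrt{K/m}$ and $\|\bm{a}_{il}\|_2\asymp\sqrt{K}$, this gives a perturbation of order $s\mu K\sqrt{\log m}/m$, whereas the per-component target implicit in (\ref{28}) scales like $\kappa\mu^2 s\sqrt{K}\,\mathrm{poly}\log m/m$: your route is off by roughly $\sqrt{K}$ (up to $\mu$, $\kappa$, $\sqrt{s}$ and log factors), and this cannot be absorbed under the stated sample complexity $m\gg(\mu^2+\sigma^2)s^2\kappa^2K\log^3m$, nor would the resulting weaker leave-one-out bound survive the later multiplication by $\|\bm{a}_{il}\|_2\leq 3\sqrt{K}$ in establishing the incoherence hypothesis at $t=0$. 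The paper avoids this by invoking the variant of Wedin's theorem whose numerator is $\|(\bm{M}_i-\bm{M}_i^{(l)})\check{\bm{x}}_i^{0,(l)}\|_2+\|\check{\bm{h}}_i^{0,(l)*}(\bm{M}_i-\bm{M}_i^{(l)})\|_2$, i.e.\ the perturbation acting on the leave-one-out singular vectors: by independence, $|\bm{a}_{il}^*\check{\bm{x}}_i^{0,(l)}|\lesssim\sqrt{\log m}$ replaces $\|\bm{a}_{il}\|_2\asymp\sqrt{K}$, and the term carrying $\|\bm{a}_{il}\|_2$ is instead weighted by the small quantity $|\bm{b}_l^*\check{\bm{h}}_i^{0,(l)}|$ rather than by $\|\bm{b}_l\|_2$.

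A second structural difference follows from this: in the paper, (\ref{28}) and (\ref{29}) cannot be proved sequentially as you propose, because the Wedin-type estimate for the leave-one-out distance contains $\max_{i,l}|\bm{b}_l^*\check{\bm{h}}_i^0|$ on its right-hand side, while the incoherence estimate (obtained from $\bm{M}_i\check{\bm{x}}_i^0=\sigma_1(\bm{M}_i)\check{\bm{h}}_i^0$ together with the decomposition of $\bm{M}_i$ into the signal term $\sum_j\bm{b}_j\bm{b}_j^*\bm{h}_i^\natural\bm{x}_i^{\natural*}\bm{a}_{ij}\bm{a}_{ij}^*$ plus $\bm{W}_i$ with $\|\bm{W}_i\|\leq 1/(2\sqrt{\log m})$) contains the leave-one-out distance; the two inequalities are solved jointly, using the sample-size assumption to make the cross-coefficients smaller than $1/2$. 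Your sketch for (\ref{29}) is closer in spirit to this, but it rests on the claim $|\bm{b}_l^*\bm{b}_j|\leq 1/m$ for $j\neq l$, which is false for the partial DFT rows (adjacent rows give inner products as large as order $K/m$); the usable fact is the aggregate bound $\sum_{j}|\bm{b}_l^*\bm{b}_j|\leq 4\log m$, which is what the paper employs. Without the action-on-singular-vector form of Wedin and the coupled resolution of the two estimates, the argument as proposed does not reach the bounds claimed in Lemma \ref{L7}.
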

\begin{proof}
        Please refer to Appendix \ref{proof7}.
\end{proof}
\begin{remark}
        Regarding the proof of Lemma \ref{L7}, we decompose $\bm{M}_i$ into the terms $\sum_{j=1}^m\bm{b}_j\bm{b}_j^*\bm{h}_i^\natural\bm{x}_i^{\natural*}\bm{a}_{ij}\bm{a}_{ij}^*$ and $\bm{W}_i = \sum_{j=1}^m\bm{b}_j(\sum_{k\neq i}\bm{b}_j^*\bm{h}_k^\natural\bm{x}_k^{\natural*}\bm{a}_{kj}+e_j)\bm{a}_{ij}^*$. The proof is further facilitated by the Wedin's sin$\Theta$ theorem \cite{dopico2000note} and the bound that with probability $1-O(m^{-9})$ \cite{ling2018regularized}, 
        $
        \|\bm{W}_i\|\leq {(\|\bm{h}_i^\natural\|_2\cdot\|\bm{x}_i^\natural\|_2)}/({2\sqrt{\log m}}),
        $
        provided that $m\gg (\mu^2+\sigma^2)sK\log^2m$.
\end{remark}
Finally, we specify (\ref{h2}) regarding the incoherence of $\bm{x}_0$ with respect to the vector $\bm{a}_{ij}$ for each $1\leq i\leq s, 1\leq j\leq m$.
\begin{lemma}\label{L8}
        Suppose the sample complexity $m\gg (\mu^2+\sigma^2)s^{3/2}K\log^5m$. Then with probability at least $1-O(m^{-9})$,
        \begin{align}
        \max_{1\leq i \leq s, 1\leq j\leq m}\left|\bm{a}_{ij}^*\left(\widetilde{\bm{x}}_i^0 - \bm{x}_i^{\natural}\right)\right|\cdot\|\bm{x}_i^\natural\|_2^{-1}\leq C_3\frac{1}{\sqrt{s}\log^{3/2}m}.
        \end{align}
\end{lemma}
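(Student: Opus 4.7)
The plan is to mimic the leave-one-out strategy already used to control the incoherence of the gradient iterates and combine it with standard Gaussian concentration. For each pair $(i,j)$ I will invoke the auxiliary spectral initialization $\widetilde{\bm{x}}_i^{0,(j)}$ obtained by discarding the $j$-th measurement in the construction of $\bm{M}_i$ in Algorithm~\ref{algo}; by construction it is statistically independent of $\bm{a}_{ij}$. The triangle inequality then decomposes the quantity of interest as
\[
\bigl|\bm{a}_{ij}^{*}(\widetilde{\bm{x}}_i^0 - \bm{x}_i^{\natural})\bigr| \leq \|\bm{a}_{ij}\|_2\,\bigl\|\widetilde{\bm{x}}_i^0 - \widetilde{\bm{x}}_i^{0,(j)}\bigr\|_2 + \bigl|\bm{a}_{ij}^{*}(\widetilde{\bm{x}}_i^{0,(j)} - \bm{x}_i^{\natural})\bigr|,
\]
exactly in the spirit of the derivation of (\ref{abov}) for the gradient iterates.

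For the Cauchy--Schwarz term, standard chi-squared concentration yields $\max_{i,j}\|\bm{a}_{ij}\|_2 \leq 3\sqrt{K}$ with probability at least $1 - Cm\exp(-cK)$, while Lemma~\ref{L7}, after passing to the per-component form (analogous to the transition from (\ref{h1}) to (\ref{h11})), furnishes $\|\widetilde{\bm{x}}_i^0 - \widetilde{\bm{x}}_i^{0,(j)}\|_2 \lesssim \tfrac{\kappa\mu}{\sqrt{m}}\sqrt{s\mu^2 K\log^5 m / m}\cdot\|\bm{x}_i^{\natural}\|_2$. The sample-size assumption $m \gg (\mu^2+\sigma^2)s^{3/2}K\log^5 m$ is exactly what is needed so that the product of these two bounds is $o\bigl(\|\bm{x}_i^{\natural}\|_2/(\sqrt{s}\log^{3/2}m)\bigr)$. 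For the Gaussian term I exploit independence: conditionally on $\widetilde{\bm{x}}_i^{0,(j)}$, the inner product $\bm{a}_{ij}^{*}(\widetilde{\bm{x}}_i^{0,(j)} - \bm{x}_i^{\natural})$ is a centered complex Gaussian with variance $\|\widetilde{\bm{x}}_i^{0,(j)} - \bm{x}_i^{\natural}\|_2^2$, so that it does not exceed $5\sqrt{\log m}\,\|\widetilde{\bm{x}}_i^{0,(j)} - \bm{x}_i^{\natural}\|_2$ with probability $1 - O(m^{-10})$. Combining Lemma~\ref{L6} (with $\xi\asymp 1/\log^2 m$) and Lemma~\ref{L7} via a further triangle inequality gives $\|\widetilde{\bm{x}}_i^{0,(j)} - \bm{x}_i^{\natural}\|_2 \lesssim \|\bm{x}_i^{\natural}\|_2/(\kappa\sqrt{s}\log^2 m)$, whence the Gaussian term is of the desired order $\|\bm{x}_i^{\natural}\|_2/(\sqrt{s}\log^{3/2}m)$. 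A union bound over the $sm\leq m^2$ index pairs preserves the $1 - O(m^{-9})$ confidence level.

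The main obstacle is the bookkeeping around the alignment parameters: $\widetilde{\bm{x}}_i^0$ and $\widetilde{\bm{x}}_i^{0,(j)}$ are defined via their own phase minimizers $\alpha_i^0$ and $\alpha_i^{0,(j)}$, which in general do not coincide. Before invoking the triangle inequality it is essential to rotate the leave-one-out iterate by a common phase so that Lemma~\ref{L7} really controls the Euclidean distance between the two aligned vectors; I will achieve this by choosing $\alpha_i^{0,(j)}$ to minimize the distance between $\bm{z}_i^{0,(j)}$ and $\widetilde{\bm{z}}_i^0$ (rather than between $\bm{z}_i^{0,(j)}$ and $\bm{z}_i^{\natural}$) and then using the triangle inequality for the dist-metric to transfer to the $\bm{z}^{\natural}$-aligned version. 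Once this phase-matching is carried out consistently, the two bounds above combine to yield the stated incoherence estimate with $C_3$ absorbing the absolute constants.
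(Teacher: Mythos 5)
Your proposal is correct and follows essentially the same route as the paper, which simply defers to \cite[Lemma 21]{ma2017implicit}: the leave-one-out decomposition via the triangle inequality, Gaussian concentration for the independent term, the proximity bound of Lemma \ref{L7} with Cauchy--Schwarz for the difference term, and the phase-alignment bookkeeping — exactly the argument already used for the gradient iterates in (\ref{33})--(\ref{abov}). The only small slip is your per-component form of Lemma \ref{L7}, which should read $\frac{\kappa\mu}{\sqrt{m}}\sqrt{\mu^2 s^2 K\log^5 m/m}$ rather than carrying only $\sqrt{s}$ inside the root, but this is harmless since the assumed sample complexity $m\gg(\mu^2+\sigma^2)s^{3/2}K\log^5 m$ still absorbs the resulting bound.
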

\begin{proof}
The proof follows \cite[Lemma 21]{ma2017implicit}.
\end{proof}

\section{Conclusion}
In this paper, we developed a provable nonconvex demixing procedure from the sum of noisy bilinear measurements via Wirtinger flow without regularization. We demonstrated that, starting with spectral initialization, the iterates of Wirtinger flow keep staying within the region of incoherence and contraction. The restricted strong convexity and qualified level of smoothness of such a region leads to more aggressive step size for gradient descent, thereby significantly accelerating convergence rates. The provable Wirtinger flow algorithm thus can solve the blind demixing problem with regularization free, fast convergence rates with aggressive step size and computational optimality guarantees. Our theoretical analysis are by no means exhaustive, and there are diverse directions that would be of interest for future investigations. For examples, we may leverage provable regularization-free iterates for the constrained nonconvex high-dimensional estimation problems.  Establish optimality for nonconvex estimation problems solved by other regularization-free iterative methods, e.g., the Riemannian optimization algorithms, are also worth being explored.

        \appendices 
         
        \section{Technical Lemmas}\label{TL}
        The following two lemmas, i.e., Lemma \ref{L1.2} and Lemma \ref{L1.1}, are established to proof Lemma \ref{L1}.
        We denote the population Wirtinger Hessian in the noiseless case at the ground truth $\bm{z}^{\natural}$ as
        \begin{align}\label{DF}
        \nabla^2F(\bm{z}^\natural):=\diagg\left(\{\nabla^2_{\bm{z}_i}F\}_{i=1}^s\right),
        \end{align}
        where 
        \[\begin{split}
        \nabla^2_{\bm{z}_i}F:=\left[~\begin{matrix}
        \bm{I}_K &\bm{0}&\bm{0}& \bm{h}_i^\natural \bm{x}_i^{\natural \top}\\
        \bm{0} &  \bm{I}_K& \bm{x}_i^\natural \bm{h}_i^{\natural \top}&\bm{0}\\
        \bm{0}& \big(\bm{x}_i^\natural \bm{h}_i^{\natural \top}\big)^*& \bm{I}_K & \bm{0}\\
        \big(\bm{h}_i^\natural \bm{x}_i^{\natural \top}\big)^*& \bm{0} & \bm{0} & \bm{I}_K
        \end{matrix}~\right] 
        \end{split}\]
        for $i = 1,\cdots, s$.  
        \begin{lemma}\label{L1.2} Recall that $
        	\bm{z} = \left[
        	\bm{z}_1^*\cdots\bm{z}_s^*
        	\right]^*\in \mathbb{ C}^{2sK}~{\textrm{with}}~\bm{z}_i = \left[\bm{h}_i^*~\bm{x}_i^*\right]^*\in \mathbb{ C}^{2K}.$
        	Instate the notations and conditions in the Lemma 1, there are 
        	$
        	\|\nabla^2F(\bm{z}^{\natural}) \|\leq 1+s$
        	and $
        	\bm{u}^*\left[\bm{D}\nabla^2F(\bm{z}^\natural)  + \nabla^2F(\bm{z}^\natural)\bm{D}\right]\bm{u}\geq \frac{1}{\kappa}\|\bm{u}\|_2^2.
        	$
        \end{lemma}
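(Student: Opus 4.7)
The plan is to exploit the fact that both $\nabla^2 F(\bm{z}^\natural)=\mathrm{diag}(\{\nabla^2_{\bm{z}_i}F\}_{i=1}^s)$ and $\bm{D}=\mathrm{diag}(\{\bm{W}_i\}_{i=1}^s)$ are block diagonal in $i$, so that both inequalities decouple into bounds on a single $4K\times 4K$ block sandwiched by $\bm{W}_i$; maximizing (respectively summing) over $i$ then delivers the lemma. This reduction mirrors the single-source blind deconvolution setting and lets me transplant ideas from \cite{li2016rapid, ma2017implicit}.

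For the operator-norm bound I would write each block as $\nabla^2_{\bm{z}_i}F = \bm{I}_{4K} + \bm{B}_i$, where $\bm{B}_i$ collects the four off-diagonal rank-one pieces built from $\bm{h}_i^\natural\bm{x}_i^{\natural\top}$ and $\bm{x}_i^\natural\bm{h}_i^{\natural\top}$. Since the nonzero rows and columns of $\bm{B}_i$ split into two disjoint bipartite rank-one subsystems (coupling rows $\{1,4\}$ and rows $\{2,3\}$ respectively), we get $\|\bm{B}_i\|=\|\bm{h}_i^\natural\|_2\|\bm{x}_i^\natural\|_2\leq \max_i\|\bm{x}_i^\natural\|_2^2 = 1$, using $\|\bm{h}_i^\natural\|_2=\|\bm{x}_i^\natural\|_2$ and $\max_i\|\bm{x}_i^\natural\|_2=1$ from the Theorem~\ref{mainT} setup. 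The triangle inequality and block diagonality then yield $\|\nabla^2 F(\bm{z}^\natural)\|=\max_i\|\nabla^2_{\bm{z}_i}F\|\leq 2\leq 1+s$.

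For the restricted convexity bound, decompose $\bm{W}_i=\tfrac{1}{\kappa}\bm{I}_{4K}+\bm{E}_i$ with $\|\bm{E}_i\|\leq \delta/(\kappa\sqrt{s})$, straight from the hypothesis on $\beta_{i1},\beta_{i2}$. Expanding,
\[
\bm{u}_i^*\bigl[\bm{W}_i\nabla^2_{\bm{z}_i}F+\nabla^2_{\bm{z}_i}F\,\bm{W}_i\bigr]\bm{u}_i = \tfrac{2}{\kappa}\,\bm{u}_i^*\nabla^2_{\bm{z}_i}F\,\bm{u}_i + R_i,
\]
with $|R_i|\leq 2\|\bm{E}_i\|\|\nabla^2_{\bm{z}_i}F\|\|\bm{u}_i\|_2^2 \leq 4\delta/(\kappa\sqrt{s})\|\bm{u}_i\|_2^2$ by the operator-norm bound just proved. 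It thus suffices to establish the single-block estimate $\bm{u}_i^*\nabla^2_{\bm{z}_i}F\,\bm{u}_i\geq \tfrac{3}{4}\|\bm{u}_i\|_2^2$ for all $\bm{u}_i$ of the stipulated form; the residual $R_i$ is absorbed once $\delta$ is taken small enough, and summing $\sum_i\|\bm{u}_i\|_2^2=\|\bm{u}\|_2^2$ delivers the claimed $\tfrac{1}{\kappa}\|\bm{u}\|_2^2$.

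The main obstacle is precisely this single-block restricted convexity, because $\nabla^2_{\bm{z}_i}F$ is only positive \emph{semi}definite: a direct spectral analysis of each bipartite subsystem identifies a one-dimensional kernel along the bilinear scaling direction (essentially $(\bm{h}_i^\natural,-\overline{\bm{x}_i^\natural})$ and $(\bm{x}_i^\natural,-\overline{\bm{h}_i^\natural})$ when $\|\bm{h}_i^\natural\|_2\|\bm{x}_i^\natural\|_2=1$), while the remaining eigenvalues are $1\pm\|\bm{h}_i^\natural\|_2\|\bm{x}_i^\natural\|_2$ together with $1$ of multiplicity $2K-2$. Without auxiliary information $\bm{u}_i$ could project onto the kernel and the quadratic form would vanish. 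The remedy is the alignment hypothesis: since $(\bm{h}_i,\bm{x}_i)$ minimizes $\|\bar\alpha^{-1}\bm{h}_i-\bm{h}_i'\|_2^2+\|\alpha\bm{x}_i-\bm{x}_i'\|_2^2$ at $\alpha=1$, the Wirtinger first-order optimality condition supplies a complex orthogonality relation that, combined with the smallness $\|\bm{h}_i-\bm{h}_i^\natural\|_2,\|\bm{x}_i-\bm{x}_i^\natural\|_2\leq \delta/(\kappa\sqrt{s})$ from (\ref{con1}), forces the kernel projection of $\bm{u}_i$ to be quadratically small in $\delta$. Decomposing $\bm{u}_i=\bm{u}_i^\parallel+\bm{u}_i^\perp$ and combining this $O(\delta^2)$ projection estimate with the unit lower bound on the orthogonal complement then yields $\bm{u}_i^*\nabla^2_{\bm{z}_i}F\,\bm{u}_i\geq(1-O(\delta^2))\|\bm{u}_i\|_2^2$. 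This alignment-to-orthogonality step is the quantitative counterpart of the blind-deconvolution analysis in \cite{li2016rapid, ma2017implicit}, and I expect it to consume the bulk of the proof.
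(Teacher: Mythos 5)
Your proposal is correct, but for the curvature bound it takes a more self-contained route than the paper. The paper proves $\|\nabla^2F(\bm{z}^\natural)\|\leq 1+s$ by writing $\nabla^2 F(\bm{z}^\natural)=\bm{I}_{4sK}+\sum_{i=1}^s(\bm{v}_{i1}\bm{v}_{i1}^*+\bm{v}_{i2}\bm{v}_{i2}^*-\bm{v}_{i3}\bm{v}_{i3}^*-\bm{v}_{i4}\bm{v}_{i4}^*)$ with an (essentially) orthonormal family, and it obtains the lower bound by invoking \cite[Lemma 26]{ma2017implicit} block by block, i.e., $\bm{u}_i^*[\bm{W}_i\nabla^2_{\bm{z}_i}F+\nabla^2_{\bm{z}_i}F\,\bm{W}_i]\bm{u}_i\geq \kappa^{-1}\|\bm{u}_i\|_2^2$, and then sums over $i$. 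You instead re-derive the per-block statement from scratch: the split $\bm{W}_i=\kappa^{-1}\bm{I}+\bm{E}_i$ with $\|\bm{E}_i\|\leq\delta/(\kappa\sqrt{s})$, the spectral analysis of the two bipartite rank-one couplings (eigenvalues $1\pm\|\bm{h}_i^\natural\|_2\|\bm{x}_i^\natural\|_2$), and the alignment first-order optimality identity $\bm{x}_i^*(\bm{x}_i-\bm{x}_i')=\overline{\bm{h}_i^*(\bm{h}_i-\bm{h}_i')}$, which together with condition (\ref{con1}) bounds the projection of $\bm{u}_i$ onto the scaling-direction null space by $O(\delta/\sqrt{s})\|\bm{u}_i\|_2$ and hence yields $\bm{u}_i^*\nabla^2_{\bm{z}_i}F\,\bm{u}_i\geq(1-O(\delta^2/s))\|\bm{u}_i\|_2^2\geq\tfrac34\|\bm{u}_i\|_2^2$; the arithmetic $\tfrac{3}{2\kappa}-\tfrac{4\delta}{\kappa\sqrt{s}}\geq\tfrac1\kappa$ then closes the argument. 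This is exactly the mechanism inside the cited lemma, so in effect you reprove it; what your version buys is self-containedness, an explicit treatment of the $\kappa$-scaling (the paper's citation is to the unit-norm single-source setting and glosses over the rescaling), a clear identification of where the alignment hypothesis is indispensable (without it $\bm{u}_i$ could lie in the kernel and the sandwiched form would vanish), and the sharper bound $\|\nabla^2F(\bm{z}^\natural)\|\leq 2$, of which the stated $1+s$ is a weakening. Two cosmetic points: the kernel projection itself is $O(\delta)\|\bm{u}_i\|_2$ (its contribution to the quadratic form is what is $O(\delta^2)\|\bm{u}_i\|_2^2$), and your plan leaves that projection estimate as a sketch, but the sketch is the right one and fills in without difficulty.
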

        \begin{proof}
        	Please refer to Appendix \ref{appda} for details.
        \end{proof}
        \begin{lemma}\label{L1.1}
        	Suppose the sample complexity satisfies $m\gg \mu^2s^2\kappa^2K\log^5m$. Then with probability at least $1-O(m^{-10})$, one has
        	$
        	\sup_{\bm{z}\in\mathcal{S}}\|\nabla^2f_{\mathrm{clean}}(\bm{z})-\nabla^2F(\bm{z}^\natural)\|\leq\frac{1}{4},
        	$
        	where the set $\mathcal{S}$ consists of all $\bm{z}$'s satisfying the conditions (\ref{zcondition}) provided in Lemma \ref{L1}.
        \end{lemma}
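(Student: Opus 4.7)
The plan is to split the target via the triangle inequality as
\[
\nabla^2 f_{\mathrm{clean}}(\bm{z}) - \nabla^2 F(\bm{z}^\natural) = \bm{\Delta}_1(\bm{z}) + \bm{\Delta}_2,
\]
where $\bm{\Delta}_1(\bm{z}) := \nabla^2 f_{\mathrm{clean}}(\bm{z}) - \nabla^2 f_{\mathrm{clean}}(\bm{z}^\natural)$ is a deterministic perturbation of the sample Hessian over $\mathcal{S}$, while $\bm{\Delta}_2 := \nabla^2 f_{\mathrm{clean}}(\bm{z}^\natural) - \nabla^2 F(\bm{z}^\natural)$ is a pure statistical fluctuation at the ground truth. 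I would then bound each piece by $1/8$ and combine via a union bound.

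First, for the concentration term $\bm{\Delta}_2$: one checks that $\mathbb{E}[\nabla^2 f_{\mathrm{clean}}(\bm{z}^\natural)] = \nabla^2 F(\bm{z}^\natural)$ by a direct computation. Because of the block-diagonal structure (\ref{chess}), it suffices to bound each of the $s$ diagonal blocks; every such block is a sum of $m$ independent random matrices built from the Gaussian vectors $\{\bm{a}_{kj}\}$ and the deterministic $\{\bm{b}_j\}$. On the high-probability event $\max_{i,j} \|\bm{a}_{ij}\|_2 \lesssim \sqrt{K \log m}$, each summand admits a uniform sub-exponential bound, while the matrix variance statistic is controlled using $\|\bm{b}_j\|_2^2 = K/m$ and the incoherence $|\bm{b}_j^* \bm{h}_i^\natural| \le \mu/\sqrt{m}$ from (\ref{inco}). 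A truncated matrix Bernstein inequality then yields $\|\bm{\Delta}_2\| \lesssim \sqrt{\mu^2 s K \log^3 m / m}$ with probability at least $1 - O(m^{-10})$, which is at most $1/8$ under the stated sample complexity $m \gg \mu^2 s^2 \kappa^2 K \log^5 m$.

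Second, for the perturbation term $\bm{\Delta}_1(\bm{z})$, I would expand each entry of the Hessian difference as a sum over $j$ of expressions that are multilinear in the small quantities $\bm{b}_j^*(\bm{h}_i - \bm{h}_i^\natural)$, $\bm{a}_{ij}^*(\bm{x}_i - \bm{x}_i^\natural)$, $\bm{h}_i - \bm{h}_i^\natural$, and $\bm{x}_i - \bm{x}_i^\natural$. The constraints (\ref{con1})--(\ref{con3}) defining $\mathcal{S}$, combined with the derived bounds $|\bm{b}_j^* \bm{h}_i| \le |\bm{b}_j^* \bm{h}_i^\natural| + |\bm{b}_j^*(\bm{h}_i - \bm{h}_i^\natural)| \lesssim \mu \log^2 m/\sqrt{m}$ and the analogous bound for $\bm{a}_{ij}^* \bm{x}_i$, provide entrywise control of each summand. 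Aggregating across $j$ via a Cauchy--Schwarz splitting into operator norms and the spectral bound $\|\sum_j \bm{b}_j \bm{b}_j^*\| \le 1$ gives $\|\bm{\Delta}_1(\bm{z})\| \lesssim \delta \cdot \mathrm{polylog}(m)$, which is at most $1/8$ once the numerical constant $\delta$ is taken sufficiently small. To promote this pointwise bound to uniformity over $\mathcal{S}$, I would construct an $\epsilon$-net at scale $\epsilon = m^{-C}$; since $\bm{z} \mapsto \nabla^2 f_{\mathrm{clean}}(\bm{z})$ is Lipschitz on the measurement event with constant at most $\mathrm{poly}(m,K)$, a union bound over the net inflates the failure probability by only a factor absorbed into the $m^{-10}$ rate.

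The main obstacle will be bookkeeping the off-diagonal $k \neq i$ cross-terms arising from the inner sum $\sum_k \bm{b}_j^* \bm{h}_k \bm{x}_k^* \bm{a}_{kj}$ inside each block $\nabla^2_{\bm{z}_i} f_{\mathrm{clean}}$: although these vanish in expectation because $\mathbb{E}[\bm{a}_{ij} \bm{a}_{kj}^*] = \bm{0}$ for $k \ne i$, their contribution to the Bernstein variance statistic scales with $s$, which is exactly what forces the $s^2$ factor in the sample complexity. A second delicate point is to invoke the incoherence (\ref{inco}) every time $\bm{h}_k^\natural$ appears paired with $\bm{b}_j$---rather than using the crude bound $\|\bm{h}_k^\natural\|_2 \le 1$---in order to retain the correct $\mu^2$ dependence and the modest polylogarithmic exponent $\log^5 m$ rather than a worse one.
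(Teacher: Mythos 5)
Your decomposition into a fluctuation term at the ground truth and a perturbation term $\bm{\Delta}_1(\bm{z})=\nabla^2 f_{\mathrm{clean}}(\bm{z})-\nabla^2 f_{\mathrm{clean}}(\bm{z}^\natural)$ is reasonable in outline, and your treatment of $\bm{\Delta}_2$ by matrix Bernstein is fine in spirit. The genuine gap is in the treatment of $\bm{\Delta}_1$. It is not a deterministic perturbation, and the "entrywise control of each summand plus Cauchy--Schwarz/operator-norm aggregation over $j$" that you propose is exactly the lossy route that does not fit under the stated sample complexity. Concretely, consider the block entry $\sum_{j}\bigl(|\bm{b}_j^*\bm{h}_i|^2-|\bm{b}_j^*\bm{h}_i^\natural|^2\bigr)\bm{a}_{ij}\bm{a}_{ij}^*$: pulling out $\max_j\bigl||\bm{b}_j^*\bm{h}_i|^2-|\bm{b}_j^*\bm{h}_i^\natural|^2\bigr|$ and bounding $\|\sum_j\bm{a}_{ij}\bm{a}_{ij}^*\|\asymp m$ leaves a factor of order $\mu\sqrt{K}\log^2 m\,\delta/(\kappa\sqrt{s})$, which is not $O(\delta)$ under $m\gg \mu^2 s^2\kappa^2 K\log^5 m$. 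The same failure occurs, even more severely, for the cross-source terms $\sum_j \bm{b}_j\bigl(\bm{b}_j^*(\bm{h}_k\bm{x}_k^*-\bm{h}_k^\natural\bm{x}_k^{\natural*})\bm{a}_{kj}\bigr)\bm{a}_{ij}^*$ with $k\neq i$: a worst-case split costs $\sqrt{m}\max_j|\cdot|\gtrsim \mu\sqrt{\log m}/\sqrt{s}$, which does not vanish. These sums are small only because of stochastic cancellation over $j$ (note $\mathbb{E}[\bm{a}_{kj}\bm{a}_{ij}^*]=\bm{0}$ for $k\neq i$), and that cancellation must be established \emph{uniformly} over the region (\ref{con1})--(\ref{con3}). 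This is precisely what the paper does: it bounds the blocks $\alpha_{i1},\alpha_{i2},\alpha_{i4}$ by the uniform concentration results of \cite[Lemma 27]{ma2017implicit}, and handles the cross terms in $\alpha_{i3}$ via Lemma \ref{L58} and the new uniform bound of Lemma \ref{newset} (an extension of \cite[Lemma 59]{ma2017implicit} to multiple sources). Your generic $\epsilon$-net remark is the right instinct for uniformity, but it cannot rescue a pointwise bound that is already too weak.

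A second, logical problem is the closing step: you conclude $\|\bm{\Delta}_1(\bm{z})\|\lesssim \delta\cdot\mathrm{polylog}(m)\le 1/8$ "once $\delta$ is taken sufficiently small." In Lemma \ref{L1}, $\delta$ is a fixed numerical constant while $m$ is arbitrarily large, so $\delta\cdot\mathrm{polylog}(m)$ is unbounded and the inequality does not follow. The correct mechanism, as in the paper, is that the log-decaying incoherence radii in (\ref{con2})--(\ref{con3}) combine with matrix-Bernstein-type concentration over $j$ to produce bounds of the form $\sqrt{(K/m)\log m}+C_3/\log m+O(\delta)$, with no residual growing polylog factor; only then does choosing $\delta$ a small constant and $m\gg \mu^2 s^2\kappa^2 K\log^5 m$ yield the claimed $1/4$.
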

        \begin{proof}
        	Please refer to Appendix \ref{appdb} for details.
        \end{proof}
        \begin{remark}
        	For the proof of Lemma \ref{L1.2} and Lemma \ref{L1.1}, extension operations are required due to multiple sources in blind demixing. Furthermore, for the proof of Lemma \ref{L1.1}, we decompose the quantity of interest to the sum of spectral norm of random matrix. In particular, the sum of multiple ``incoherence'' signals in (\ref{g1}) and (\ref{g2}) calls for new statistical guarantees for the spectral norm of random matrices over the ``incoherence'' region, which is demonstrated in Lemma \ref{newset} 
        	(see Appendix \ref{TL}) by extending Lemma 59 in \cite{ma2017implicit} for blind deconvolution with single source. 
        \end{remark} 
        
        \begin{lemma}\label{L16}
        	Suppose that $m\gg 1$. The following two bounds hold true.
        	\begin{enumerate}
        		\item If $\left||\alpha_i^t|-1\right|<{1}/{2}$, $i =1,\cdots,s$ and $\mathrm{dist}(\bm{z}^t,\bm{z}^\natural)\leq C_1/\log^2m$, then for $i =1,\cdots,s$
        		\begin{align}
        		\left|\frac{\alpha_i^{t+1}}{\alpha^t_i} - 1\right|\leq c ~\mathrm{dist}(\bm{z}_i^t,\bm{z}_i^\natural)\leq\frac{cC_1}{\log^2m}\label{L526}
        		\end{align}
        		holds  for some absolute constant $c>0$.
        		\item If $\left||\alpha_i^0|-1\right|<{1}/{4}$, $i =1,\cdots,s$ and $\mathrm{dist}(\bm{z}^\tau,\bm{z}^\natural)$ satisfies the condition (\ref{distforT1}) for all $0\leq \tau\leq t$, then for $i =1,\cdots,s$, one has
        		$
        		\left||\alpha_i^{\tau+1}|-1\right|<\frac{1}{2},~0\leq \tau\leq t,
        		$
        		with sufficiently small $C_5>0$  .
        		\begin{proof}
        			The proof follows \cite[Lemma 16]{ma2017implicit}.
        		\end{proof}
        	\end{enumerate}
        \end{lemma}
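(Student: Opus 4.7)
The plan is to treat $\alpha_i^t$ as the unique minimizer of the smooth objective $\phi(\alpha;\bm{h}_i^t,\bm{x}_i^t) := \|\tfrac{1}{\overline{\alpha}}\bm{h}_i^t - \bm{h}_i^\natural\|_2^2 + \|\alpha\bm{x}_i^t - \bm{x}_i^\natural\|_2^2$, which in the region of incoherence and contraction has a well-conditioned Hessian in $\alpha$, so $\alpha$ is a Lipschitz function of $(\bm{h}_i,\bm{x}_i)$ there. Since the Wirtinger flow step only moves $(\bm{h}_i^t,\bm{x}_i^t)$ by an amount proportional to $\mathrm{dist}(\bm{z}_i^t,\bm{z}_i^\natural)$, the induced change in $\alpha_i^t$ is similarly small; this is precisely part (1). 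Part (2) will then follow by telescoping part (1) across iterations using (\ref{distforT1}).

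For part (1), I would first write the Wirtinger stationarity condition $\partial\phi/\partial\overline{\alpha}=0$ at iterate $t+1$, which (after multiplication by $\alpha\overline{\alpha}^2$) reduces to
\begin{align*}
|\alpha_i^{t+1}|^2\overline{\alpha_i^{t+1}}\bigl[\alpha_i^{t+1}\|\bm{x}_i^{t+1}\|_2^2 - (\bm{x}_i^\natural)^*\bm{x}_i^{t+1}\bigr]
= \|\bm{h}_i^{t+1}\|_2^2 - \alpha_i^{t+1}(\bm{h}_i^\natural)^*\bm{h}_i^{t+1},
\end{align*}
with an analogous equation at iterate $t$. Subtracting the two, collecting the unknown $\alpha_i^{t+1}-\alpha_i^t$ on one side, and using $\bm{z}_i^{t+1}-\bm{z}_i^t = -\eta\cdot(\text{preconditioned gradient})$ together with the restricted smoothness bound $\|\nabla f\|\lesssim s\,\mathrm{dist}(\bm{z}_i^t,\bm{z}_i^\natural)$ from Lemma \ref{L1} to obtain $\|\bm{z}_i^{t+1}-\bm{z}_i^t\|_2\lesssim\mathrm{dist}(\bm{z}_i^t,\bm{z}_i^\natural)$, I would extract the linear coefficient of $\alpha_i^{t+1}-\alpha_i^t$. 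Under the hypothesis $\bigl||\alpha_i^t|-1\bigr|<1/2$ and with $\|\bm{h}_i^t\|_2,\|\bm{x}_i^t\|_2=\Theta(\|\bm{x}_i^\natural\|_2)$ in the RIC, this coefficient is bounded away from zero by an absolute constant, yielding $|\alpha_i^{t+1}-\alpha_i^t|\lesssim|\alpha_i^t|\cdot\mathrm{dist}(\bm{z}_i^t,\bm{z}_i^\natural)$ and hence the stated multiplicative bound after dividing by $|\alpha_i^t|$.

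For part (2), I would induct on $\tau$, the base case $\bigl||\alpha_i^0|-1\bigr|<1/4$ being given. Assuming $\bigl||\alpha_i^\sigma|-1\bigr|<1/2$ for all $\sigma\leq\tau$, part (1) applies at each such $\sigma$ and the reverse triangle inequality gives
\begin{align*}
\bigl||\alpha_i^{\tau+1}|-|\alpha_i^0|\bigr|
\leq \sum_{\sigma=0}^{\tau}|\alpha_i^{\sigma+1}-\alpha_i^\sigma|
\leq c\sum_{\sigma=0}^{\tau}|\alpha_i^\sigma|\,\mathrm{dist}(\bm{z}_i^\sigma,\bm{z}_i^\natural).
\end{align*}
Substituting (\ref{distforT1}) splits the sum into a geometrically decaying piece, bounded by $c\kappa/\eta\cdot C_1/\log^2 m$, and a noise-floor piece, bounded by $c(\tau+1)\sqrt{s}\kappa^2/\eta\cdot\max_k\|\mathcal{A}_k(\bm{e})\|$. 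Choosing $C_5$ small enough to force both pieces below $1/4$ yields $\bigl||\alpha_i^{\tau+1}|-1\bigr|<1/2$, closing the induction.

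The main obstacle is the clean execution of the Wirtinger stationarity algebra: $\alpha$ enters $\phi$ both as $\alpha$ and as $1/\overline{\alpha}$, so the optimality equation is a quartic in $(\alpha,\overline{\alpha})$ whose linearization around $\alpha_i^t$ must be shown invertible with a quantitative lower bound uniformly over the RIC. A secondary subtlety in part (2) is that the noise floor in (\ref{distforT1}) does not decay with $\sigma$, so naive telescoping scales linearly with the iteration count; keeping this contribution below a constant over the $O(m^\gamma)$ iterations of interest is exactly what the ``sufficiently small $C_5$'' clause absorbs.
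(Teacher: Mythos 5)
The paper offers no argument of its own here---its ``proof'' is a one-line deferral to \cite[Lemma 16]{ma2017implicit}, which treats a single source without noise---so you are attempting something the paper does not actually do, and your part-1 strategy (view $\alpha_i^t$ as the minimizer of a well-conditioned scalar problem, hence Lipschitz in the iterate, then bound how far one gradient step moves the iterate) is the natural route and consistent in spirit with the cited proof. The execution, however, has a genuine gap already in part 1: the step $\|\bm{z}_i^{t+1}-\bm{z}_i^t\|_2\lesssim\mathrm{dist}(\bm{z}_i^t,\bm{z}_i^\natural)$ is not available. The block-$i$ gradient in (\ref{gradient}) is driven by the \emph{global} residual $\sum_{k}\bm{b}_j^*(\bm{h}_k\bm{x}_k^{\ast}-\bm{h}_k^\natural\bm{x}_k^{\natural\ast})\bm{a}_{kj}-e_j$, which contains cross-source terms and the noise term $\mathcal{A}_i(\bm{e})$, neither of which is controlled by the block-$i$ distance; moreover Lemma \ref{L1} concerns $f_{\mathrm{clean}}$ only. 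Block $i$ moves even when $\mathrm{dist}(\bm{z}_i^t,\bm{z}_i^\natural)=0$, so your derivation can at best deliver $|\alpha_i^{t+1}/\alpha_i^t-1|\lesssim \mathrm{dist}(\bm{z}^t,\bm{z}^\natural)+\max_k\|\mathcal{A}_k(\bm{e})\|$, which suffices for the final bound $cC_1/\log^2 m$ under the sample-size assumption but not for the per-block inequality you state as an intermediate conclusion. (Minor: the conjugations in your stationarity identity are off; the bracket should read $\alpha_i^{t+1}\|\bm{x}_i^{t+1}\|_2^2-(\bm{x}_i^{t+1})^{\ast}\bm{x}_i^\natural$. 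This does not affect the plan.)

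The more serious gap is part 2. Your telescoping leaves a term of order $(\tau+1)\sqrt{s}\kappa^2\eta^{-1}\max_k\|\mathcal{A}_k(\bm{e})\|$, and the induction must run for all $\tau\le T=O(m^\gamma)$. The noise floor $\max_k\|\mathcal{A}_k(\bm{e})\|$ is determined by $\sigma$, $s$, $K$, $m$; it is not multiplied by any tunable constant in the lemma (the constant $C_5$ appears nowhere else in the paper), so ``choosing $C_5$ sufficiently small'' cannot prevent this contribution from growing linearly in the iteration count, i.e., polynomially in $m$. Nor can smallness of the aligned distance rescue you directly: $\mathrm{dist}$ is invariant under $\bm{h}_i\mapsto c\bm{h}_i$, $\bm{x}_i\mapsto \bm{x}_i/\overline{c}$, so it does not pin $|\alpha_i^\tau|$ near one; what must be controlled is the norm balance $\|\bm{h}_i^\tau\|_2/\|\bm{x}_i^\tau\|_2$ of the iterates, and its accumulated drift at the noise floor is exactly the step your argument does not close. (Even your geometric piece carries a factor $\kappa/\eta\asymp s\kappa$ that is not obviously dominated by $\log^2 m$ under the stated sample complexity; this bookkeeping needs care as well.) In the noiseless case your telescoping is essentially the argument of \cite[Lemma 16]{ma2017implicit}; in the noisy case the proposal as written does not establish the claim---a difficulty which, to be fair, the paper's bare citation does not address either.
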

        We will present that the assumption $\left||\alpha_i^0|-1\right|<{1}/{4}$, for $i =1,\cdots, s$ can be guaranteed with high probability by Lemma \ref{L6}. Based on Lemma \ref{2} and Lemma \ref{L16}, we  conclude that the ratio of consecutive alignment parameters, i.e., $\alpha_i^{t+1}/\alpha_i^t$, $i = 1,\cdots,s$, linearly converges to $1$, and $\alpha_i^t$, $i = 1,\cdots, s$ converges to a point  near to $1$.
        \begin{lemma}\label{L58}
        	
        	Suppose that $\{\bm{A}_{kl}\}_{1\leq l \leq m }$ is a collection of fixed matrices in $\mathbb{C}^{N\times K}.$ For $k\neq i$, we have
        	\begin{align}
        	&\mathbb{P}\left(\left\|\frac{1}{m}\sum_{l=1}^{m}\bm{A}_{kl}\bm{a}_{kl}\bm{a}_{il}^*\right\|\geq 2\theta\left\|\frac{1}{m}\sum_{l=1}^m\bm{A}_{kl}\bm{A}_{kl}^*\right\|^{1/2}\right)\notag\\
        	\leq & \mathrm{exp}\left(c(N+K)-\theta^2m/C\right),~\forall\theta\in(0,1).
        	\end{align}
        	Here, $c,C>0$ are some absolute constants,
        \end{lemma}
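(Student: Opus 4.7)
The plan is to prove Lemma~\ref{L58} by an $\varepsilon$-net argument for the spectral norm combined with a Hanson--Wright type concentration inequality, exploiting that for $k\neq i$ the vectors $\{\bm{a}_{kl}\}$ and $\{\bm{a}_{il}\}$ are independent standard complex Gaussians, so each summand $\bm{A}_{kl}\bm{a}_{kl}\bm{a}_{il}^*$ is mean-zero and admits a clean bilinear structure in two independent Gaussian families.

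First, I would introduce $\tfrac{1}{4}$-nets $\mathcal{N}_1$ and $\mathcal{N}_2$ of the unit spheres of $\mathbb{C}^N$ and $\mathbb{C}^K$, with $\log|\mathcal{N}_1|\lesssim N$ and $\log|\mathcal{N}_2|\lesssim K$, and invoke the standard covering estimate $\|\bm{M}\|\leq 2\max_{\bm{u}\in\mathcal{N}_1,\bm{v}\in\mathcal{N}_2}|\bm{u}^*\bm{M}\bm{v}|$, where $\bm{M}:=\frac{1}{m}\sum_{l=1}^{m}\bm{A}_{kl}\bm{a}_{kl}\bm{a}_{il}^*$. This reduces the task to controlling $|\bm{u}^*\bm{M}\bm{v}|$ for a fixed deterministic pair $(\bm{u},\bm{v})$ and then absorbing the net via a union bound at a cost of at most $e^{c(N+K)}$.

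For fixed $(\bm{u},\bm{v})$, I would stack the variables into $\bm{\xi}:=[\bm{a}_{k1}^*,\ldots,\bm{a}_{km}^*]^*$ and $\bm{\eta}:=[\bm{a}_{i1}^*,\ldots,\bm{a}_{im}^*]^*$, which are independent standard complex Gaussian vectors in $\mathbb{C}^{mK}$ (precisely where the hypothesis $k\neq i$ is used). A direct rearrangement gives $\bm{u}^*\bm{M}\bm{v}=m^{-1}\bm{\eta}^*\bm{T}\bm{\xi}$, where $\bm{T}$ is block-diagonal with $l$-th diagonal block $\bm{v}\bm{u}^*\bm{A}_{kl}\in\mathbb{C}^{K\times K}$. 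Setting $R:=\|m^{-1}\sum_{l}\bm{A}_{kl}\bm{A}_{kl}^*\|^{1/2}$, the relevant norms satisfy $\|\bm{T}\|_F^2=\bm{u}^*\bigl(\sum_{l}\bm{A}_{kl}\bm{A}_{kl}^*\bigr)\bm{u}\leq mR^2$ and $\|\bm{T}\|=\max_{l}\|\bm{A}_{kl}^*\bm{u}\|_2\leq\sqrt{m}\,R$. Applying the bilinear Hanson--Wright inequality for independent complex Gaussian vectors with $t=m\theta R$ and $\theta\in(0,1)$ then produces
\[
\mathbb{P}\bigl(|\bm{u}^*\bm{M}\bm{v}|\geq\theta R\bigr)\leq 2\exp\bigl(-c\min(\theta^2 m,\theta\sqrt{m})\bigr)\leq 2\exp(-\theta^2 m/C).
\]

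A union bound over $\mathcal{N}_1\times\mathcal{N}_2$ combined with the covering estimate then yields $\mathbb{P}(\|\bm{M}\|\geq 2\theta R)\leq\exp(c(N+K)-\theta^2 m/C)$, which is the claimed tail. The main technical step is the Hanson--Wright bound above: one has to track the trade-off between $\|\bm{T}\|_F$ and $\|\bm{T}\|$ carefully so that the sub-Gaussian branch of the Bernstein-type tail is the operative one in the regime where $\theta^2 m$ dominates the combinatorial penalty $c(N+K)$. The remaining pieces, namely covering the unit spheres, identifying the bilinear form, and the concluding union bound, are entirely routine.
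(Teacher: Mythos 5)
Your overall route is the same as the paper's: reduce the spectral norm to a bilinear form at fixed unit vectors, exploit that $k\neq i$ makes $\{\bm{a}_{kl}\}$ and $\{\bm{a}_{il}\}$ independent so each summand is mean zero, apply a concentration bound for the resulting Gaussian chaos, and finish with an $\varepsilon$-net union bound costing $e^{c(N+K)}$ (the paper compresses the concentration step into a citation of the technique in \cite[Lemma 58]{ma2017implicit}). However, your final concentration step has a genuine gap. With your own estimates $\|\bm{T}\|_F\leq\sqrt{m}\,R$ and $\|\bm{T}\|\leq\sqrt{m}\,R$, Hanson--Wright at level $t=m\theta R$ gives
\[
\mathbb{P}\bigl(|\bm{u}^*\bm{M}\bm{v}|\geq\theta R\bigr)\leq 2\exp\bigl(-c\min\{\theta^2 m,\ \theta\sqrt{m}\}\bigr),
\]
and the asserted domination $\min\{\theta^2 m,\theta\sqrt{m}\}\gtrsim\theta^2 m$ holds only when $\theta\lesssim 1/\sqrt{m}$, i.e.\ exactly when $\theta^2 m\lesssim 1$ and the bound is vacuous. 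In the regime the lemma is actually used --- where $\theta^2 m$ must exceed $c(N+K)$, hence $\theta\gg 1/\sqrt{m}$ --- the sub-exponential branch $\theta\sqrt{m}$ is the minimum, and your chain delivers only $\exp(-c\theta\sqrt{m})$, not the claimed $\exp(-\theta^2 m/C)$. This is not an artifact of loose bookkeeping: if all the mass sits on a single index (say only $\bm{A}_{k1}\neq\bm{0}$), then $\bm{T}$ is effectively rank one with $\|\bm{T}\|=\|\bm{T}\|_F=\sqrt{m}\,R$, the chaos is a product of two independent scalar Gaussians, and its tail at level $m\theta R$ really is of order $\exp(-c\theta\sqrt{m})$, so Hanson--Wright cannot be sharpened here.

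To close the gap you need additional structure beyond ``fixed matrices'': either a hypothesis controlling $\max_l\|\bm{A}_{kl}\|$ (equivalently $\max_l\|\bm{A}_{kl}^*\bm{u}\|_2$) relative to the averaged quantity $R$, so that $\|\bm{T}\|\ll\sqrt{m}\,R$ and the sub-Gaussian branch genuinely dominates for all $\theta\in(0,1)$, or a restriction on the admissible range of $\theta$, or a tail stated with both branches. In the setting where this lemma is invoked (Lemma \ref{newset}), the matrices have the form $\bm{A}_{kji}=\bm{b}_j\bm{b}_j^*\bm{\Gamma}_{ik}$ with $\|\bm{b}_j\|_2^2=K/m$, which supplies exactly this kind of uniform smallness; your proof should make that dependence explicit (or flag it as an assumption), since as written the last inequality in your display is false for general $\{\bm{A}_{kl}\}$ and $\theta\in(0,1)$.
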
  
        \begin{proof}
        	For simplicity, we define $\bm{Q} =\sum_{l=1}^{m}\bm{A}_{kl}\bm{a}_{kl}\bm{a}_{il}^*$, where $k\neq i$. We are going to show that 
        	\begin{align}\label{bound2}
        	&\mathbb{P}\left(\frac{1}{m}|\bm{u}^*\bm{Q}\bm{v}|\geq \theta\left\|\frac{1}{m}\sum_{l=1}^m\bm{A}_{kl}\bm{A}_{kl}^*\right\|^{1/2}\right)\notag\\
        	\leq&\mbox{exp}(1-\theta^2m/C),~\forall\theta\in(0,1),
        	\end{align} 
        	holds for any fixed $\bm{u}\in\mathbb{C}^N$, $\bm{v}\in\mathbb{C}^K$ with $\|\bm{u}\|_2 = \|\bm{v}\|_2 = 1$. To achieve this goal, we denote a zero-mean random variable as
        	$
        	w_l = \bm{u}^*\bm{A}_{kl}\bm{a}_{kl}\bm{a}_{il}^*\bm{v},
        	$ where $k\neq i$.
        	Based on the technique provided in  \cite[Lemma 58]{ma2017implicit}, we  accomplish the proof.
        \end{proof} 
        
        The following lemma derives the supremum of the spectral norm of random matrices over an ``incoherence" region. 
        \begin{lemma}\label{newset}
        	Suppose that $\{\bm{A}_{kji}(\bm{h},\bm{x})\}_{1\leq j\leq m}$, where $1\leq k,i\leq s$ and $k\neq i$, is a set of $\mathbb{ C}^{N\times K}$-valued function defined on $\mathbb{C}^{sN}\times \mathbb{C}^{sK}$, such that for all $(\bm{h},\bm{x})$, $(\bm{h}^{\prime},\bm{x}^{\prime})$, $(\bm{h}^{\prime\prime},\bm{h}^{\prime\prime})\in\mathcal{ C}(\frac{\delta}{\kappa\sqrt{s}},\alpha)$ the following conditions hold:
        	$
        	\|\frac{1}{m}\sum_{j=1}^{m}\bm{A}_{kji}(\bm{h},\bm{x})\bm{A}_{kji}^*(\bm{h},\bm{x})\|^{1/2}\leq  M_1$, and $
        	\max_{1\leq j\leq m}\|\bm{A}_{kji}(\bm{h}^{\prime\prime},\bm{x}^{\prime\prime}) - \bm{A}_{kji}(\bm{h}^{\prime},\bm{x}^{\prime})\|
        	\leq M_2$ $\max\{\|\bm{h}_{k}^{\prime\prime} - \bm{h}_{k}^{\prime}\|_2,\|\bm{x}_{k}^{\prime\prime} - \bm{x}_{k}^{\prime}\|_2\}.
        	$
        	
        	Define $P_k(\bm{h},\bm{x}):=\sum_{j=1}^m\bm{A}_{kji}(\bm{h},\bm{x})\bm{a}_{kj}\bm{a}_{ij}^*$, where $k\neq i$. If the parameters $\delta$, $M_1$ and $M_2$ hold that
        	$
        	(\min\{\frac{\delta}{smM_1},1\})^2m\gg (K+N)\log m$ and $ m\gg \kappa\sqrt{s}M_2K,
        	$
        	then with probability exceeding $1-O(m^{-10})$, there is 
        	$
        	\sup_{(\bm{h}_k,\bm{x}_k)\in\mathcal{C}_k(\frac{\delta}{\kappa\sqrt{s}},\alpha)}\|P_k(\bm{h},\bm{x})\|\leq \frac{4\delta}{s}.
        	$
        \end{lemma}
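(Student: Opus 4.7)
The plan is to combine a pointwise high-probability bound (namely Lemma \ref{L58}) with a standard $\epsilon$-net argument, and then use the Lipschitz hypothesis on $(\bm{h},\bm{x})\mapsto\bm{A}_{kji}(\bm{h},\bm{x})$ to pass from the finite net to the continuous region $\mathcal{C}_k(\tfrac{\delta}{\kappa\sqrt{s}},\alpha)$. This is essentially the strategy used in \cite[Lemma 59]{ma2017implicit} for the single-source case; the extra ingredient here is that, because $k\neq i$, the two factors $\bm{a}_{kj}$ and $\bm{a}_{ij}$ flanking $\bm{A}_{kji}$ are statistically independent, which is what makes Lemma \ref{L58} directly applicable with $\bm{A}_{kji}$ treated as deterministic at a fixed point.

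First, I would fix an arbitrary $(\bm{h},\bm{x})\in\mathcal{C}_k(\tfrac{\delta}{\kappa\sqrt{s}},\alpha)$ and apply Lemma \ref{L58} with $\bm{A}_{kl}\leftarrow\bm{A}_{kji}(\bm{h},\bm{x})$. Choosing the slack parameter $\theta\asymp\min\{\tfrac{\delta}{smM_1},1\}$ and using the first hypothesis $\|\tfrac{1}{m}\sum_j\bm{A}_{kji}\bm{A}_{kji}^*\|^{1/2}\le M_1$ yields
\[
\|P_k(\bm{h},\bm{x})\|\;\le\;2\theta m M_1\;\le\;\tfrac{2\delta}{s}
\]
with pointwise failure probability at most $\exp\!\bigl(c(N+K)-\theta^2 m/C\bigr)$.

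Next, I would build a volumetric $\epsilon$-net $\mathcal{N}_\epsilon$ of $\mathcal{C}_k$ of cardinality at most $(C'/\epsilon)^{4K}$ and take a union bound of the pointwise estimate over $\mathcal{N}_\epsilon$. So long as $\epsilon$ is only polynomially small in $m$, the exponent $\theta^2 m/C-4K\log(1/\epsilon)$ still dominates $(K+N)\log m$ under the assumption $(\min\{\tfrac{\delta}{smM_1},1\})^2 m\gg (K+N)\log m$, driving the net failure probability down to $O(m^{-10})$. Finally, for arbitrary $(\bm{h},\bm{x})\in\mathcal{C}_k$ I would pick the closest net point $(\bm{h}',\bm{x}')$ and combine the Lipschitz hypothesis with the standard bound $\max_j\|\bm{a}_{kj}\|_2\le 3\sqrt{K}$ (which holds with probability at least $1-Cme^{-cK}$) to get, by triangle inequality on the rank-one summands,
\[
\|P_k(\bm{h},\bm{x})-P_k(\bm{h}',\bm{x}')\|\;\le\;9mM_2K\,\epsilon.
\]
Picking $\epsilon\asymp\delta/(smKM_2)$ bounds this by $2\delta/s$, and the second hypothesis $m\gg\kappa\sqrt{s}M_2K$ guarantees that this $\epsilon$ is still polynomially small in $m$, so the covering step is unaffected. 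Triangle inequality then gives the claimed uniform bound $\|P_k(\bm{h},\bm{x})\|\le 4\delta/s$.

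The main obstacle I expect is the simultaneous balancing of three parameters: $\theta$ must be small enough that the pointwise estimate lands below the target $2\delta/s$, yet large enough that the exponential tail in Lemma \ref{L58} overwhelms both the covering-number logarithm and the $(K+N)$ term; meanwhile, the net granularity $\epsilon$ has to be fine enough to render the Lipschitz discretization error negligible without inflating $\log(1/\epsilon)$ beyond what the concentration tail can absorb. Matching this triple constraint exactly against the stated hypotheses is the delicate part, but once the scaling is identified it is a routine net-plus-Lipschitz lifting of the pointwise estimate and does not require any new probabilistic input beyond Lemma \ref{L58}.
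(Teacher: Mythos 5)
Your proposal is correct and follows essentially the same route as the paper, which proves Lemma \ref{newset} by invoking the technique of \cite[Lemma 59]{ma2017implicit}: the pointwise concentration bound of Lemma \ref{L58} (exploiting the independence of $\bm{a}_{kj}$ and $\bm{a}_{ij}$ for $k\neq i$), a union bound over an $\epsilon$-net of $\mathcal{C}_k(\frac{\delta}{\kappa\sqrt{s}},\alpha)$, and the Lipschitz hypothesis together with $\max_j\|\bm{a}_{kj}\|_2\leq 3\sqrt{K}$ to pass from the net to the whole region. Your parameter choices ($\theta\asymp\min\{\frac{\delta}{smM_1},1\}$, $\epsilon\asymp\delta/(smKM_2)$) are consistent with the stated sample-size conditions, so no new idea beyond the paper's cited argument is needed.
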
    
        \begin{proof}
        	The proof follows the technical method provided in \cite[Lemma 59]{ma2017implicit}.
        \end{proof}
   \section{Proof of Lemma \ref{L1}}\label{proof_L1}
      Combining Lemma \ref{L1.2} and Lemma \ref{L1.1} in Appendix \ref{TL}, we can see that for $\bm{z}\in\mathcal{S}$,
    \begin{align}
    \left\|\nabla^2f_{\mathrm{clean}}(\bm{z})\right\|&\leq\left\|\nabla^2F(\bm{z}^\natural)\right\|+\left\|\nabla^2f_{\mathrm{clean}}(\bm{z})-\nabla^2F(\bm{z}^\natural)\right\|\notag\\
    &\leq 1+s+1/4\leq 2+s,
    \end{align}
    which identifies the upper bound of level of smoothness. We further have
    \begin{align}
    &\bm{u}^*\left[\bm{D}\nabla^2f_{\mathrm{clean}}(\bm{z})+\nabla^2f_{\mathrm{clean}}(\bm{z})\bm{D}\right]\bm{u}\notag\\
    \overset{(\text{i})}{\geq}&\bm{u}^*\left[\bm{D}\nabla^2F(\bm{z}^\natural)+\nabla^2F(\bm{z}^\natural)\bm{D}\right]\bm{u}-2\|\bm{D}\|\cdot\notag\\
    &\left\|\nabla^2f_{\mathrm{clean}}(\bm{z})-\nabla^2F(\bm{z}^\natural)\right\|\|\bm{u}\|_2^2\notag\\
    \overset{(\text{ii})}{\geq}&\frac{1}{\kappa}\|\bm{u}\|_2^2-2(\frac{1}{\kappa}+\frac{\delta}{\kappa\sqrt{s}})\cdot\frac{1}{4}\|\bm{u}\|_2^2\notag\\
    \overset{(\text{iii})}{\geq}&\frac{1}{4\kappa}\|\bm{u}\|_2^2,
    \end{align}
    where $(\mbox{i})$ uses proper reformulation and triangle inequality, $(\mbox{ii})$ is derived from Lemma \ref{L1.1} and the fact that $\|\bm{D}\|\leq \frac{1}{\kappa}+\frac{\delta}{\kappa\sqrt{s}}$, and $(\mbox{iii})$ holds if $\delta\leq \frac{\sqrt{s}}{2}$. Thus, we finish establishing the restricted strong convexity and smoothness in the region of incoherence and contraction.
    \section{Proof of Lemma \ref{L1.2}}
    \label{appda}
    We first provide the  expressions of $
    \bm{C} = \left[~\begin{matrix}
    \bm{C}_1 &\bm{C}_2\\
    \bm{C}_2^*&\bm{C}_3
    \end{matrix}~\right]
 $
    where
    \begin{subequations}
        \begin{align}
        \bm{C}_1& = \sum_{j=1}^{m}|\bm{a}^*_{ij}\bm{x}_i|^2\bm{b}_{j}\bm{b}_j^*,\\
        \bm{C}_2& = \sum_{j=1}^m\left(\sum_{k=1}^{ s}\bm{b}_j^*\left(\bm{h}_k\bm{x}_k^{\ast}-\bm{h}_k^{\natural}\bm{x}_k^{\natural\ast}\right)\bm{a}_{kj}\right)\bm{b}_j\bm{a}_{ij}^*,\\
        \bm{C}_3& = \sum_{j=1}^{m}|\bm{b}^*_{j}\bm{h}_i|^2\bm{a}_{ij}\bm{a}_{ij}^*,
        \end{align}
    \end{subequations}
    and $
    \bm{E} = \left[~\begin{matrix}
    \bm{0 }&\bm{E}_1\\
    \bm{E}_2&\bm{0}
    \end{matrix}~\right]
   $
    where
    \begin{subequations}
        \begin{align}
        \bm{E}_1 &= \sum_{j=1}^m  \bm{b}_{j}\bm{b}_{j}^*\bm{h}_i(\bm{a}_{ij}\bm{a}_{ij}^*\bm{x}_i)^\top,\\
        \bm{E}_2 & = \sum_{j=1}^m  \bm{a}_{ij}\bm{a}_{ij}^*\bm{x}_i(\bm{b}_j\bm{b}_{j}^*\bm{h}_i)^\top.
        \end{align}
    \end{subequations}
    We first prove the identity  $\left\|\nabla^2F(\bm{z}^\natural)\right\| = 1+s$. For $i = 1,\cdots, s$, let $\bm{v}_{i1} = \frac{1}{\sqrt{2}}[~
    \bm{q}~
    \bm{h}_i^\natural~
    \bm{0}~
    \bm{0}~
    \overline{\bm{x}_i^\natural}~
    \bm{w}
    ~]^\top, \bm{v}_{i2} = \frac{1}{\sqrt{2}}[~
    \bm{q}~
    \bm{0}~
    \bm{x}_i^\natural~
    \overline{\bm{h}_i^\natural}\notag~
    \bm{0}~
    \bm{w}
    ~]^\top,\bm{v}_{i3}= \frac{1}{\sqrt{2}}[~
    \bm{q}~
    \bm{h}_i^\natural~
    \bm{0}~
    \bm{0}~
    -\overline{\bm{x}_i^\natural}~
    \bm{w}~]^\top,
    \bm{v}_{i4} = \frac{1}{\sqrt{2}}[~
    \bm{q}~
    \bm{0}~
    \bm{x}_i^\natural~
    -\overline{\bm{h}_i^\natural}~
    \bm{0}~
    \bm{w}~
    ]^\top,
   $
    where $\bm{a}^{\top}$ denote the transpose of the complex vector $\bm{a}$, $\bm{v}_{i1},\bm{v}_{i2},\bm{v}_{i3},\bm{v}_{i4}\in\mathbb{ C}^{4s}$ as well as $\bm{q}\in\mathbb{ R}^{4(i-1)}$ and $\bm{w}\in\mathbb{ R}^{4(s-i)}$ are zero vectors.
    Based on the assumption that $\|\bm{h}_i^\natural\|_2 =\|\bm{x}_i^\natural\|_2$ for $i=1,\cdots,s$, we check that these vectors are from an orthonormal set of size $4s$. Via simple calculations, there is
    $
    \nabla^2 F (\bm{z}^\natural) = \bm{I}_{4sK}+\sum_{i=1}^s (\bm{v}_{i1}\bm{v}_{i1}^* + \bm{v}_{i2}\bm{v}_{i2}^*-\bm{v}_{i3}\bm{v}_{i3}^*-\bm{v}_{i4}\bm{v}_{i4}^*),
   $
    which implies that
    $
    \left\|\nabla^2F(\bm{z}^\natural)\right\| \leq 1+s.
$
    Based on Lemma 26 in \cite{ma2017implicit} and the definition of $\bm{u}_i$ in Lemma 1, for $i = 1,\cdots, s$, there is 
   $
    \bm{u}_i^*\left[\bm{M}_i\nabla_{\bm{z}_i}^2F(\bm{z}^\natural)  + \nabla^2_{\bm{z}_i}F(\bm{z}^\natural)\bm{M}_i\right]\bm{u}_i\geq 1/\kappa\|\bm{u}_i\|_2^2,
  $
    as long as $\delta$ defined in Lemma 1 is small enough, which implies that 
    \begin{align}
    &\bm{u}^*\left[\bm{D}\nabla^2F(\bm{z}^\natural)  + \nabla^2F(\bm{z}^\natural)\bm{D}\right]\bm{u}\notag
    \\=&\sum_{i=1}^{s}\bm{u}_i^*\left[\bm{M}_i\nabla_{\bm{z}_i}^2F(\bm{z}^\natural)  + \nabla^2_{\bm{z}_i}F(\bm{z}^\natural)\bm{M}_i\right]\bm{u}_i\notag
    \\\geq &\frac{1}{\kappa}\sum_{i=1}^{s}\|\bm{u}_i\|_2^2 = \frac{1}{\kappa}\|\bm{u}\|_2^2.
    \end{align}
    
    \section{Proof of Lemma \ref{L1.1}}
    \label{appdb}
    Based on the expression of $\nabla^2f_{\mathrm{clean}}(\bm{z})$ (\ref{chess}) and $\nabla^2F(\bm{z}^\natural)$ (\ref{DF}) and the triangle inequality, we have 
    \begin{align}
    \left\|\nabla^2f_{\mathrm{clean}}(\bm{z})-\nabla^2F(\bm{z}^\natural)\right\|
    \leq \max_{1\leq i\leq s}(\alpha_{i1}+2\alpha_{i2}+4\alpha_{i3}+ 4\alpha_{i4} )
    \end{align}
    where the four terms on the right hand side are defined as follows
    \begin{subequations}
        \begin{align}
        \alpha_{i1} &= \left\| \sum_{j=1}^{m}|\bm{a}^*_{ij}\bm{x}_i|^2\bm{b}_{j}\bm{b}_j^*-\bm{I}_K \right\|,\\
        \alpha_{i2} &= \left\|\sum_{j=1}^{m}|\bm{b}^*_{j}\bm{h}_i|^2\bm{a}_{ij}\bm{a}_{ij}^*-\bm{I}_K \right\|,\\
        \alpha_{i3}& = \left\|  \sum_{j=1}^m\bigg(\sum_{k=1}^{ s}\bm{b}_j^*\left(\bm{h}_k\bm{x}_k^{\ast}-\bm{h}_k^{\natural}\bm{x}_k^{\natural\ast}\right)\bm{a}_{kj}\bigg)\bm{b}_j\bm{a}_{ij}^*  \right\|,\\
        \alpha_{i4}& =  \left\|\sum_{j=1}^m  \bm{b}_{j}\bm{b}_{j}^*\bm{h}_i(\bm{a}_{ij}\bm{a}_{ij}^*\bm{x}_i)^\top-\bm{h}_i^\natural\bm{x}_i^{\natural\top}\right\|.
        \end{align}
    \end{subequations}
    \begin{enumerate}[1)]
        \item   Here, $\alpha_{i1},\alpha_{i2},\alpha_{i4}$ can be bounded through \cite[Lemma 27]{ma2017implicit}. In particular, with probability $1-O(m^{-10})$,
        \begin{align}
        \max_{1\leq i \leq s}\sup_{\bm{z}\in\mathcal{S}}\alpha_{i1}\lesssim\sqrt{\frac{K}{m}\log m} + C_3\frac{1}{\log m}.
        \end{align}
        In addition, with probability at least $1-O(m^{-10})$, we have 
        \begin{align}
        \max_{1\leq i \leq s}\sup_{\bm{z}\in\mathcal{S}} \alpha_{i2}&\leq 7\frac{\delta}{\kappa\sqrt{s}},\\
        \max_{1\leq i \leq s}\sup_{\bm{z}\in\mathcal{S}} \alpha_{i4}&\leq 11\frac{\delta}{\kappa\sqrt{s}}
        \end{align}
        as long as $m\gg (\mu^2/\delta)s\kappa^2K\log^5m$.

        \item To control $\alpha_{i3}$, similar to the set defined in \cite{ma2017implicit}, we define a new set for $(\bm{h},\bm{x})\in\mathbb{ C}^{sK}\times \mathbb{ C}^{sK}$
        \begin{align}
        \mathcal{C}(\xi,\zeta): = \bigg\{&(\bm{h},\bm{x}):\max_{1\leq i \leq s}\max\left\{\|\bm{h}_i-\bm{h}_i^\natural\|_2,\|\bm{x}_i-\bm{x}_i^\natural\|_2\right\}\notag\\&\leq \xi~\notag\text{and} \max_{1\leq i \leq s,1\leq j \leq m}\left|\bm{b}_{j}^*{\bm{h}}_i\right|\cdot\|\bm{h}_i^\natural\|
        _2\leq \frac{\zeta}{\sqrt{m}}\bigg\},
        \end{align}
        where $\bm{h}$ is composed of $\bm{h}_1,\cdots,\bm{h}_s$ and $\bm{x}$ is composed of $\bm{x}_1,\cdots,\bm{x}_s$.
        Note that the set $\mathcal{S}$ defined in Lemma \ref{L1.1} satisfies $\mathcal{S}\subseteq  \mathcal{C}(\frac{\delta}{\kappa\sqrt{s}},2C_4\mu\log^2m)$, thus it suffices to specify $\sup_{\bm{z}\in\mathcal{C}(\frac{\delta}{\kappa\sqrt{s}},2C_4\mu\log^2m)} \alpha_{i3}$ in order to control $\alpha_{i3}$. We are going to exploit Lemma 59 in \cite{ma2017implicit} to derive that with probability at least $1-O(m^{-10})$ 
        \begin{align}\label{a3}
        \max_{1\leq i \leq s}\sup_{\bm{z}\in\mathcal{C}(\frac{\delta}{\kappa\sqrt{s}},2C_4\mu\log^2m)}\alpha_{i3}\leq 4\delta +\frac{7\delta}{\kappa\sqrt{s}}.
        \end{align}
        To achieve this goal, we define 
        $
        \bm{\Delta}_{ij}(\bm{h},\bm{x}):=\sum_{k\neq i}\left(\bm{h}_k\bm{x}^*_k - \bm{h}^\natural_k\bm{x}^{\natural*}_k
        \right)\bm{a}_{kj}
        $
        and 
    $
        R_i(\bm{\bm{h},\bm{x}}):=R_{i,\text{clean}}(\bm{h,\bm{x}})+\sum_{j=1}^m\bm{b}_j\bm{b}_j^*\bm{\Delta}_{ij}(\bm{h},\bm{x})\bm{a}_{ij}^*,
   $
        where the first term is denoted as $R_{i,\text{clean}}(\bm{h,\bm{x}}) =\sum_{j=1}^{m}\bm{b}_j\bm{b}_j^*(\bm{h}_i\bm{x}_i^*-\bm{h}_i^\natural\bm{x}_i^{\natural*})\bm{a}_{ij}\bm{a}_{ij}^*$.         
        The original inequality (\ref{a3}) can be represented as
        \begin{align}
        &\mathbb{P}\left(\sup_{\bm{z}\in\mathcal{C}(\frac{\delta}{\kappa\sqrt{s}},2C_4\mu\log^2m)}\left\|R_i(\bm{h},\bm{x})\right\|\geq 4\delta+7\frac{\delta}{\kappa\sqrt{s}}\right)\notag\\\lesssim& m^{-10}.
        \end{align}
        Note that one has $\mathbb{ E}[R_{i,\text{clean}}(\bm{h},\bm{x})] = \bm{h}_i\bm{x}_i^*-\bm{h}_i^\natural\bm{x}_i^{\natural*}$ and the spectral norm is bounded by $\|\mathbb{ E}[R_{i,\text{clean}}(\bm{h},\bm{x})]\|\leq {3\delta}/(\kappa{\sqrt{s}})$  \cite[Section C.1.2]{ma2017implicit} when $\bm{h}_i,\bm{x}_i$ are fixed. Based on the conclusion provided in \cite[Section C.1.2]{ma2017implicit}, for $(\bm{h},\bm{x})\in\mathcal{C}(\frac{\delta}{\kappa\sqrt{s}},2C_4\mu\log^2m)$, it yields
        \begin{align}
        &\mathbb{P}\left(\sup_{(\bm{h},\bm{x})}\left\|R_{i,\text{clean}}(\bm{h},\bm{x})-\mathbb{ E}[R_{i,\text{clean}}(\bm{h},\bm{x})]\right\|\geq 4\frac{\delta}{\kappa\sqrt{s}}\right)\notag\\\lesssim& ~m^{-10},
        \end{align}
        as long as $m\gg (\mu^2/\delta^2)s\kappa^2K\log^5m$.
        It thus suffices to show that 
        \begin{align}
        \mathbb{P}\left(\sup_{(\bm{h},\bm{x})}\left\|\sum_{j=1}^m\bm{b}_j\bm{b}_j^*\bm{\Delta}_{ij}(\bm{h},\bm{x})\bm{a}_{ij}^*\right\|\geq 4\delta\right)\lesssim m^{-10},
        \end{align}
        where ${(\bm{h},\bm{x})\in\mathcal{C}(\frac{\delta}{\kappa\sqrt{s}},2C_4\mu\log^2m)}$. We are positioned to invoke Lemma \ref{newset} to achieve the above result. Specifically, let $\bm{A}_{kji}(\bm{h},\bm{x}) = \bm{b}_j\bm{b}_j^*\bm{\Gamma}_{ik}$ where $\bm{\Gamma}_{ik} = \bm{h}_k\bm{x}_k^*-\bm{h}_k^\natural\bm{x}_k^{\natural*}$ with $k\neq i$. We further define 
        $
        \tau = \arg\max_{1\leq k\leq s,k\neq i }\|\bm{A}_{kji}(\bm{h},\bm{x})\bm{a}_{kj}\bm{a}_{ij}^*\|.
    $
        Hence, it suffices to show that 
        \begin{align}\label{55}
        \mathbb{P}\left(\sup_{(\bm{h},\bm{x})}\left\|\sum_{j=1}^m\bm{A}_{\tau ji}(\bm{h},\bm{x})\bm{a}_{\tau j}\bm{a}_{ij}^*\right\|\geq \frac{4\delta}{s}\right)\lesssim m^{-10},
        \end{align}
        By choosing $M_1\leq5C_4\mu\log^2m/m$ and $M_2\leq  4K/m$, we invoke Lemma \ref{newset} and finish the proof of inequality (\ref{55}).
        
        \item Based on the previous bounds, we deduced that with probability $1-O(m^{-10})$,
        \begin{align}
        &\left\|\nabla^2f_{\mathrm{clean}}(\bm{z})-\nabla^2F(\bm{z}^\natural)\right\|\notag\\
        \lesssim&\left(\sqrt{\frac{K}{m}\log m} + C_3\frac{1}{\log m}\right)+\delta\leq \frac{1}{4},
        \end{align}
        as long as $\delta>0$ is a small constant and $m\gg \mu^2s^2\kappa^2K\log^5m$, as desired.
    \end{enumerate}
    \section{Proof of Lemma \ref{2}}\label{proof2}
    Based on the definition of $\alpha^{t+1}_k(\ref{a}),~ k =1,\cdots,s$ , one has 
    \begin{align}
    &\mbox{dist}^2\left(\bm{z}^{t+1},\bm{z}^\natural\right)\leq  \sum_{k=1}^s\mbox{dist}^2\left(\bm{z}_k^{t+1},\bm{z}_k^\natural\right) \notag\\
    \leq & s\kappa^2\left\|\frac{1}{\overline{\alpha_{k}^{t+1}}}\bm{h}_k^{t}  - \bm{h}_k^\natural\right\|_2^2+s\kappa^2\left\|{{\alpha_{k}^{t}}\bm{x}_k^{t+1}}  - \bm{x}_k^\natural\right\|_2^2.
    \end{align}

    By denoting $\widetilde{\bm{h}}_k^{t} = \frac{1}{\overline{\alpha_{k}^{t}}}\bm{h}_k^{t},~\widetilde{\bm{x}}_k^t = \alpha_k^t\bm{x}_k^{t},~\widehat{\bm{h}}_k^{t+1} =\frac{1}{\overline{\alpha_{k}^{t}}}\bm{h}_k^{t+1} $ and $ \widehat{\bm{x}}_k^{t+1} = \alpha_k^t\bm{x}_k^{t+1}$, we have
    \begin{align}\label{grad}
    \left[~\begin{matrix}
    \widehat{\bm{h}}_k^{t+1}-\bm{h}_k^\natural\\
    \widehat{\bm{x}}_k^{t+1}-\bm{x}_k^\natural\\
    \overline{\widehat{\bm{h}}_k^{t+1}-\bm{h}_k^\natural}\\
    \overline{\widehat{\bm{x}}_k^{t+1}-\bm{x}_k^\natural}
    \end{matrix}~\right] &= \left[~\begin{matrix}
    \widetilde{\bm{h}}_k^{t}-\bm{h}_k^\natural\\
    \widetilde{\bm{x}}_k^{t}-\bm{x}_k^\natural\\
    \overline{\widetilde{\bm{h}}_k^{t}-\bm{h}_k^\natural}\\
    \overline{\widetilde{\bm{x}}_k^{t}-\bm{x}_k^\natural}
    \end{matrix}~\right]-\eta\bm{W}_k\left[~\begin{matrix}
    \nabla_{\bm{h}_k}f(\widetilde{\bm{z}}^t) \\
    \nabla_{\bm{x}_k}f(\widetilde{\bm{z}}^t) \\
    \overline{ \nabla_{\bm{h}_k}f(\widetilde{\bm{z}}^t) }\\
    \overline{ \nabla_{\bm{x}_k}f(\widetilde{\bm{z}}^t) }
    \end{matrix}~\right],
    \end{align}
    and $
    \bm{W}_k = \diagg\big(\big[
    \|\widetilde{\bm{x}}^t_k\|_2^{-2}\bm{I}_K,
    \|\widetilde{\bm{h}}^t_k\|_2^{-2}\bm{I}_K,
   \|\widetilde{\bm{x}}^t_k\|_2^{-2}\bm{I}_K,
    \|\widetilde{\bm{h}}^t_k\|_2^{-2}$ $\bm{I}_K
    \big]    \big).
   $
    According to the fundamental theorem of calculus provided in \cite{ma2017implicit} together with the definition of the noiseless objective function $f_{\text{clean}}$ and the noiseless Wirtinger Hessian $\nabla^2_{\bm{z}_k}f_{\text{clean}}$ (\ref{chess}), we get
    \begin{align}\label{grads}
    \left[~\begin{matrix}
    \nabla_{\bm{h}_k}f(\widetilde{\bm{z}}^t) \\
    \nabla_{\bm{x}_k}f(\widetilde{\bm{z}}^t) \\
    \overline{ \nabla_{\bm{h}_k}f(\widetilde{\bm{z}}^t) }\\
    \overline{ \nabla_{\bm{x}_i}f(\widetilde{\bm{z}}^t) }
    \end{matrix}~\right] &= \left[~\begin{matrix}
    \nabla_{\bm{h}_k}f_{\text{clean}}(\widetilde{\bm{z}}^t) \\
    \nabla_{\bm{x}_k}f_{\text{clean}}(\widetilde{\bm{z}}^t) \\
    \overline{ \nabla_{\bm{h}_k}f_{\text{clean}}(\widetilde{\bm{z}}^t) }\\
    \overline{ \nabla_{\bm{x}_k}f_{\text{clean}}(\widetilde{\bm{z}}^t) }
    \end{matrix}~\right] + \left[~\begin{matrix}
    \mathcal{A}_k(\bm{e})\bm{x}_k^t  \\
    \mathcal{A}_k^*(\bm{e})\bm{h}_k^t \\
    \overline{ \mathcal{A}_k(\bm{e})\bm{x}_k^t }\\
    \overline{ \mathcal{A}_k^*(\bm{e})\bm{h}_k^t }
    \end{matrix}~\right]\notag\\
    & = \bm{H}_k \left[~\begin{matrix}
    \widetilde{\bm{h}}_k^{t}-\bm{h}_k^\natural\\
    \widetilde{\bm{x}}_k^{t}-\bm{x}_k^\natural\\
    \overline{\widetilde{\bm{h}}_k^{t}-\bm{h}_k^\natural}\\
    \overline{\widetilde{\bm{x}}_k^{t}-\bm{x}_k^\natural}
    \end{matrix}~\right]+ \left[~\begin{matrix}
    \mathcal{A}_k(\bm{e})\bm{x}_k^t  \\
    \mathcal{A}_k^*(\bm{e})\bm{h}_k^t \\
    \overline{ \mathcal{A}_k(\bm{e})\bm{x}_k^t }\\
    \overline{ \mathcal{A}_k^*(\bm{e})\bm{h}_k^t }
    \end{matrix}~\right],
    \end{align}
    where $\bm{H}_k = \int_{0}^{1}\nabla^2_{\bm{z}_k}f_{\text{clean}}\left(\bm{z}(\tau)\right)d \tau$ with $\bm{z}(\tau):= \bm{z}^\natural+\tau\left(\widetilde{\bm{z}}^t - \bm{z}^\natural\right)$ and $\mathcal{A}_k(\bm{e})= \sum_{j=1}^m{e}_j\bm{b}_j\bm{a}_{kj}^*$ and $\mathcal{A}_k^*(\bm{e})= \sum_{j=1}^m\overline{{e}_j}\bm{a}_{kj}\bm{b}_j^*$. Since $\bm{z}(\tau)$ lies between $\widetilde{\bm{z}}^t$ and $\bm{z}^\natural$, for all $\tau\in[0,1]$, $\bm{z}(\tau)$ satisfies the assumption (\ref{con85}).

    For simplicity, we denote $\widehat{\bm{z}}_k^{t+1} = [\widehat{\bm{h}}_k^{t+1\ast} ~ \widehat{\bm{x}}_k^{t+1\ast}
    ]^\ast$. Substituting (\ref{grads}) to (\ref{grad}), one has
    \begin{align}\label{ida}
    \left[~\begin{matrix}
    \widehat{\bm{z}}_k^{t+1}-\bm{z}_k^\natural\\
    \overline{\widehat{\bm{z}}_k^{t+1}-\bm{z}_k^\natural}
    \end{matrix}~\right] = \bm{\varphi}_k^{t}+\bm{\psi}_k^{t},
    \end{align}
    where 
    \begin{align}
    \bm{\varphi}_k^t = (\bm{I} - \eta\bm{W}_k\bm{H}_k)\left[~\begin{matrix}
    \widetilde{\bm{z}}_k^{t}-\bm{z}_k^\natural\\
    \overline{\widetilde{\bm{z}}_k^{t}-\bm{z}_k^\natural}
    \end{matrix}~\right],\bm{\psi}_k^t = \left[~\begin{matrix}
    \mathcal{A}_k(\bm{e})\bm{x}_k^t  \\
    \mathcal{A}_k^*(\bm{e})\bm{h}_k^t \\
    \overline{ \mathcal{A}_k(\bm{e})\bm{x}_k^t }\\
    \overline{ \mathcal{A}_k^*(\bm{e})\bm{h}_k^t }
    \end{matrix}~\right].\notag
    \end{align} 
    Take the Euclidean norm of both sides of (\ref{ida}) to arrive
    \begin{align}\label{mme}
    \|\bm{\varphi}_{k}^t+\bm{\psi}_k^t\|_2 \leq \|\bm{\varphi}_k^t\|_2 + \|\bm{\psi}_k^t\|_2.
    \end{align}
    We first control the second Euclidean norm at the right-hand side of the equation (\ref{mme}):
    $
    \|\bm{\psi}_k^t\|_2^2 
    \leq 16\left\|\mathcal{A}_k(\bm{e})\right\|^2,
   $
    where we use the fact that $\max\{\|\bm{x}_k\|_2,\|\bm{h}_k\|_2\}\leq 2$ for $1\leq k\leq s$. Based on the paper \cite[Section C.2]{ma2017implicit}, the squared Euclidean norm of $\bm{\varphi}_k^t$ is bounded by
   $
    \|\bm{\varphi}_k^t\|_2^2\leq 2(1-\eta/(8\kappa))\|\widetilde{\bm{z}}_k^t - \bm{z}_k^\natural\|_2^2,
  $
    under the assumption (\ref{con85}).
    We thus conclude that 
    \begin{align}
    \|\bm{\varphi}_{k}^t+\bm{\psi}_k^t\|_2 \leq \sqrt{2}(1-\eta/(8\kappa))^{1/2}\|\widetilde{\bm{z}}_k^t - \bm{z}_k^\natural\|_2+4\left\|\mathcal{A}_k(\bm{e})\right\|,
    \end{align}
    and hence 
    \begin{align}\label{distz}
    \|\widetilde{\bm{z}}_k^{t+1}-\bm{z}_k^\natural\|_2&\leq\|\widehat{\bm{z}}_k^{t+1}-\bm{z}_k^\natural\|_2\leq\sqrt{2}/2\|\bm{\varphi}_{k}^t+\bm{\psi}_k^t\|_2\notag\\
    &\leq (1-\eta/(16\kappa))\|\widetilde{\bm{z}}_k^t - \bm{z}_k^\natural\|_2+3\left\|\mathcal{A}_k(\bm{e})\right\|.
    \end{align}
    Integrating the above inequality (\ref{distz}) for $i = 1,\cdots, s$, we further obtain 
  $
    \mathrm{dist}(\bm{z}^{t+1},\bm{z}^\natural)\leq (1-\eta/(16\kappa))\mathrm{dist}(\bm{z}^t,\bm{z}^\natural) + 3\sqrt{s}\kappa\max_{1\leq k \leq s}\left
    \|\mathcal{A}_k(\bm{e})\right\|.
  $

    \section{Proof of Lemma \ref{3}}\label{proof4}
    Define the alignment parameter between ${\bm{z}}_i^{t,(l)} = [{\bm{h}}_i^{t,(l)*}~~{\bm{x}}_i^{t,(l)*}]^*$ and $\widetilde{\bm{z}}_i^t = [\widetilde{\bm{h}}_i^{t*}~~\widetilde{\bm{x}}_i^{t*}]^*$ as
    \begin{align}
    \alpha_{i,\text{mutual}}^{t,(l)}: = \argmin_{\alpha\in\mathbb{C}}\left\| \frac{1}{\overline{\alpha}}\bm{h}_i^{t,(l)}-\frac{1}{\overline{\alpha_i^t}}\bm{h}_i^{t}\right\|_2^2+ \left\|\alpha\bm{x}_i^{t,(l)}-\alpha_i^t\bm{x}_i^{t}\right\|_2^2,
    \end{align}
    where $\widetilde{\bm{h}}^t_i = \frac{1}{\overline{\alpha_i^t}}\bm{h}_i^t$ and $\widetilde{\bm{x}} ^t_i= \alpha_i^t\bm{x}_i^t$ for $i = 1,\cdots, s$. In addition, we denote $\widehat{\bm{z}}_i^{t,(l)} = [\widehat{\bm{h}}_i^{t,(l)*}~\widehat{\bm{x}}_i^{t,(l)*}]^*$ where
    \begin{align}
    \widehat{\bm{h}}_i^{t,(l)}:=\frac{1}{\overline{\alpha_{i,\text{mutual}}^{t,(l)}}}\bm{h}_i^{t,(l)}~\text{and}~\bm{x}_i^{t,(l)}:={{\alpha_{i,\text{mutual}}^{t,(l)}}}\bm{x}_i^{t,(l)}.
    \end{align}
    In view of the above notions and technical methods in \cite[Section C.3]{ma2017implicit}, we have
    \begin{align}\label{78}
    \mathrm{dist}\left(\bm{z}^{t+1,(l)},\widetilde{\bm{z}}^{t+1}\right)
    \leq \kappa\sqrt{\sum_{k = 1}^s \max\left\{\left|\frac{\alpha_i^{t+1}}{\alpha_i^t}\right|,\left|\frac{\alpha_i^{t}}{\alpha_i^{t+1}}\right|\right\}^2\|\bm{J}_k\|^2},
    \end{align}
    where
    $
    \bm{J}_k = \left[\begin{matrix}
    \frac{1}{\overline{\alpha_{k,\text{mutual}}^{t,(l)}}}\bm{h}_k^{t+1,(l)} - \frac{1}{\overline{\alpha_k^t}}\bm{h}_k^{t+1}\\{{\alpha_{k,\text{mutual}}^{t,(l)}}}\bm{x}_k^{t+1,(l)} - {{\alpha_k^t}}\bm{x}_k^{t+1}
    \end{matrix}
    \right].
    $
    By further applying the update rule in Algorithm \ref{algo}, we get
    \begin{align}
    \bm{J}_{k} = \left[\begin{matrix}
    \widehat{\bm{h}}_k^{t,(l)} -\frac{\eta}{\|\widehat{\bm{x}}_k^{t,(l)}\|_2^2} \nabla_{\bm{h}_k}f^{(l)}(\widehat{\bm{h}}^{t,(l)},\widehat{\bm{x}}^{t,(l)}) - \bm{U}_k\\
    \widehat{\bm{x}}_k^{t,(l)} -\frac{\eta}{\|\widehat{\bm{h}}_k^{t,(l)}\|_2^2} \nabla_{\bm{x}_k}f^{(l)}(\widehat{\bm{h}}^{t,(l)},\widehat{\bm{x}}^{t,(l)}) - \bm{V}_k
    \end{matrix}\right]
    \end{align}
    where $\nabla_{\bm{h}_k}f^{(l)}(\bm{h},\bm{x})$ and $\nabla_{\bm{x}_k}f^{(l)}(\bm{h},\bm{x})$ are defined as
    \begin{align}
    \nabla_{\bm{h}_k}f^{(l)}(\bm{h},\bm{x})& = \nabla_{\bm{h}_k}f(\bm{h},\bm{x})-\bm{R}_l\bm{b}_l\bm{a}_{kl}^*\bm{x}_k,\notag\\
    \nabla_{\bm{x}_k}f^{(l)}(\bm{h},\bm{x})& = \nabla_{\bm{x}_k}f(\bm{h},\bm{x})-\overline{\bm{R}_l}\bm{a}_{kl}\bm{b}_l^*\bm{h}_k,\notag
    \end{align}
    with
    $
    \bm{R}_l = \sum_{i=1}^{ s}\bm{b}_l^*\bm{h}_i\bm{x}_i^{\ast}\bm{a}_{il}-{y}_l,  
   $
    and
    \begin{align}
    \bm{U}_k& = \widetilde{\bm{h}}_k^{t} -\frac{\eta}{\|\widetilde{\bm{x}}_k^{t}\|_2^2} \nabla_{\bm{h}_k}f( \widetilde{\bm{h}}^{t}, \widetilde{\bm{x}}^{t}),\notag\\
    \bm{V}_k& = \widetilde{\bm{x}}_k^{t} -\frac{\eta}{\|\widetilde{\bm{h}}_k^{t}\|_2^2} \nabla_{\bm{x}_k}f( \widetilde{\bm{h}}^{t}, \widetilde{\bm{x}}^{t}).\notag
    \end{align}
    Inspired by \cite[Section C.3]{ma2017implicit}, by further derivation, we obtain 
    \begin{align}\label{81}
    \bm{J}_k = \bm{J}_{k1}+\eta\bm{J}_{k2}-\eta\bm{J}_{k3},
    \end{align}
    where
    \begin{align}
    \bm{J}_{k1} = &\left[\begin{matrix}
    \widehat{\bm{h}}_k^{t,(l)} -\frac{\eta}{\|\widehat{\bm{x}}_k^{t,(l)}\|_2^2} \nabla_{\bm{h}_k}f(\widehat{\bm{h}}^{t,(l)},\widehat{\bm{x}}^{t,(l)}) \\
    \widehat{\bm{x}}_k^{t,(l)} -\frac{\eta}{\|\widehat{\bm{h}}_k^{t,(l)}\|_2^2} \nabla_{\bm{x}_k}f(\widehat{\bm{h}}^{t,(l)},\widehat{\bm{x}}^{t,(l)}) 
    \end{matrix}\right]-\notag\\
    &\left[\begin{matrix}
    \widetilde{\bm{h}}_k^{t} -\frac{\eta}{\|\widehat{\bm{x}}_k^{t,(l)}\|_2^2} \nabla_{\bm{h}_k}f(\widetilde{\bm{h}}^{t},\widetilde{\bm{x}}^{t}) \\
    \widetilde{\bm{x}}_k^{t} -\frac{\eta}{\|\widehat{\bm{h}}_k^{t,(l)}\|_2^2} \nabla_{\bm{x}_k}f(\widetilde{\bm{h}}^{t},\widetilde{\bm{x}}^{t}) 
    \end{matrix}\right],\notag\\
    \bm{J}_{k2} = &\left[\begin{matrix}
    \Big(\frac{1}{\|\widetilde{\bm{x}}_k^{t}\|_2^2}-\frac{1}{\|\widehat{\bm{x}}_k^{t,(l)}\|_2^2}\Big) \nabla_{\bm{h}_k}f(\widetilde{\bm{h}}^{t},\widetilde{\bm{x}}^{t}) \\
    \Big(\frac{1}{\|\widetilde{\bm{h}}_k^{t}\|_2^2}-\frac{1}{\|\widehat{\bm{h}}_k^{t,(l)}\|_2^2}\Big) \nabla_{\bm{x}_k}f(\widetilde{\bm{h}}^{t},\widetilde{\bm{x}}^{t}) 
    \end{matrix}\right],\notag\\
    \bm{J}_{k3} = &\left[\begin{matrix}
    \frac{1}{\|\widehat{\bm{x}}_k^{t,(l)}\|_2^2}\left(\sum_{i=1}^{ s}\bm{b}_l^*\widehat{\bm{h}}_i^{t,(l)}\widehat{\bm{x}}_i^{t,(l)\ast}\bm{a}_{il}-{y}_l\right)\bm{b}_l\bm{a}_{kl}^*\widehat{\bm{x}}_k^{t,(l)}\\
    \frac{1}{\|\widehat{\bm{h}}_k^{t,(l)}\|_2^2}\overline{\left(\sum_{i=1}^{ s}\bm{b}_l^*\widehat{\bm{h}}_i^{t,(l)}\widehat{\bm{x}}_i^{t,(l)\ast}\bm{a}_{il}-{y}_l\right)}\bm{a}_{kl}\bm{b}_l^*\widehat{\bm{h}}_k^{t,(l)}\notag
    \end{matrix}\right].
    \end{align}
    We shall control the three terms $\bm{J}_{k1}$, $\bm{J}_{k2}$ and $\bm{J}_{k3}$.
    \begin{enumerate}[1.]
        \item In terms of the first term $\bm{J}_{k1}$, we can exploit the same strategy as in Appendix \ref{proof2} and conclude that 
        \begin{align}\label{j1}
        \|\bm{J}_{k1}\|\leq \left(1-\frac{\eta}{16\kappa}+C_6\frac{1}{\log^2m}\right)\|\widehat{\bm{z}}_k^{t,(l)}-\widetilde{\bm{z}}_k^{t}\|_2,
        \end{align}
        provided that $m\gg (\mu^2+\sigma^2)s\kappa K\log^{13/2}m$ for the constant $C_6>0$.
        \item  Regarding to the second term $J_2$, based on \cite[Appendix C.3]{ma2017implicit} and the bound on $\left\|\mathcal{ A}_k(\bm{e})\right\|$ provided in \cite[Section 6.5]{ma2017implicit} that
        with probability at least $1-O(m^{-9})$, there holds
        $
        \max_{1\leq i \leq s}\left\|\mathcal{ A}_i(\bm{e})\right\|\leq C_0\sigma \sqrt{\frac{ 10sK\log^2 m }{m}},\notag
    $
        for some absolute constant $C_0>0$ and $\sigma$ is defined in Section \ref{form}, it yields that
        \begin{align}\label{j2}
        \|\bm{J}_2\|_2\lesssim C_7\frac{1}{\log^2m}\|\widehat{\bm{z}}_k^{t,(l)}-\widetilde{\bm{z}}_k^{t}\|_2.
        \end{align}
        \item In terms of the last term $\bm{J}_{k3}$, 
        based on the technical method used in \cite[Appendix C.3]{ma2017implicit} and the fact that $|e_j| \leq \sigma^2/m\ll 1$, we get 
        \begin{align}\label{j3}
        \|\bm{J}_{k3}\|_2\lesssim(C_4)^2\frac{\mu}{\sqrt{m}}\sqrt{\frac{\mu^2sK\log^9m}{m}},
        \end{align}
        provided that $m\gg (\mu+\sigma^2)s^2\kappa K\log^{5/2}m$.
    \end{enumerate}
    Combining the bounds (\ref{78}), (\ref{j1}), (\ref{j2}), (\ref{j3}) and the equation (\ref{81}), there exist a constant $C>0$ such that
    \begin{align}\label{last}
    &\mathrm{dist}\left(\bm{z}^{t+1,(l)},\widetilde{\bm{z}}^{t+1}\right)\notag\\
    \leq& \sqrt{s}\kappa\max\left\{\left|\frac{\alpha_i^{t+1}}{\alpha_i^t}\right|,\left|\frac{\alpha_i^{t}}{\alpha_i^{t+1}}\right|\right\}\Bigg\{\bigg(1-\frac{\eta}{16\kappa}+\frac{C_6}{\log^2m}+\frac{CC_7\eta}{\log^2m}\bigg)\cdot\notag\\
    &~~~~\bigg\|\widehat{\bm{z}}_k^{t,(l)}-\widetilde{\bm{z}}_k^{t}\bigg\|_2 + C(C_4)^2\eta\frac{\mu}{\sqrt{m}}\sqrt{\frac{\mu^2sK\log^9m}{m}}\Bigg\}\notag\\
     \leq& C_2\frac{s\kappa\mu}{\sqrt{m}}\sqrt{\frac{\mu^2K\log^9m}{m}},
    \end{align}
    with $m\gg (\mu^2+\sigma^2)s^{2}\kappa K\log^{13/2}m
    $, $C_2\gg (C_4)^2$ and the bound that
    $
    \max\{|{\alpha_i^{t+1}}/{\alpha_i^t}|,|{\alpha_i^{t}}/{\alpha_i^{t+1}}|\}\leq \frac{1-\eta/(21\kappa)}{1-\eta/(20\kappa)}
    $
    which is derived from Lemma \ref{L16}.
    Hence the inequality (\ref{last}) verifies the induction hypothesis (\ref{h1}) at $(t+1)$-iterate with sufficiently large $C_2$ and sufficiently large $m$.
    
    Finally, we establish the second claim in the lemma based on the technical methods in \cite[Section C.3]{ma2017implicit} and the induction hypothesis (\ref{h11}), we deduced that 
    \begin{align}
    \left\|\widetilde{\bm{z}}^{t+1,(l)}-\widetilde{\bm{z}}^{t+1}\right\|_2&\lesssim\left\|\widehat{\bm{z}}^{t+1,(l)}-\widetilde{\bm{z}}^{t+1}\right\|_2\notag\\
    &\lesssim C_2\frac{s\mu}{\sqrt{m}}\sqrt{\frac{\mu^2K\log^9m}{m}}.
    \end{align}

    \section{Proof of Lemma \ref{4}}\label{proof5}
    Similar to the strategy used in \cite[Section C.4]{ma2017implicit}, it suffices to control $|\bm{b}_l^*\frac{1}{\overline{\alpha_i^t}}\bm{h}_i^{t+1}|$ to finish the proof, as
    \begin{align}
    &\max_{1\leq i \leq s, 1\leq l\leq m}\left|\bm{b}_l^*\frac{1}{\overline{\alpha_{i}^{t+1}}}\bm{h}_i^{t+1}\right|\cdot\|\bm{h}_i^\natural\|_2^{-1}
    \notag\\
    \leq&  (1+\delta)\left|\bm{b}_l^*\frac{1}{\overline{\alpha_{i}^{t}}}\bm{h}_i^{t+1}\right|\cdot\|\bm{h}_i^\natural\|_2^{-1}
    \end{align}
    for some small $\delta\asymp 1/\log^2m$.
    The gradient update rule for $\bm{h}_i^{t+1}$ is written as
    \begin{align}\label{107}
    \frac{1}{\overline{\alpha_i^t}}\bm{h}_i^{t+1} = &\widetilde{\bm{h}}^t_i-\eta\xi_i \sum_{j=1}^m\sum_{k=1}^s\bm{b}_j\bm{b}_j^*(\widetilde{\bm{h}}^t_k\widetilde{\bm{x}}^{t*}_k - \bm{h}_k^\natural\bm{h}_k^{\natural*})\bm{a}_{kj}\bm{a}_{ij}^*\widetilde{\bm{x}}^t_i\notag\\
    &+\eta\xi_i\sum_{j=1}^m e_j\bm{b}_j\bm{a}_{ij}^*\widetilde{\bm{x}}^t_i,
    \end{align}
    where $\xi_i = \frac{1}{\|\widetilde{\bm{x}}_i^t\|_2^2}$ and $\widetilde{\bm{h}}^t_i = \frac{1}{\overline{\alpha_i^t}}\bm{h}_i^t$ and $\widetilde{\bm{x}} ^t_i= \alpha_i^t\bm{x}_i^t$ for $i = 1,\cdots, s$. The formula (\ref{107}) can be further decomposed into the following terms
    \begin{align}
    &\frac{1}{\overline{\alpha_i^t}}\bm{h}_i^{t+1} = \widetilde{\bm{h}}^t_i -\eta\xi_i\sum_{j=1}^m\sum_{k=1}^s\bm{b}_j\bm{b}_j^*\widetilde{\bm{h}}_k^t\widetilde{\bm{x}}_k^{t*}\bm{a}_{kj}\bm{a}_{ij}^*\widetilde{\bm{x}}_i^t +\notag\\
    &\eta\xi_i\sum_{j=1}^m\sum_{k=1}^s\bm{b}_j\bm{b}_j^*{\bm{h}}_k^\natural{\bm{x}}_k^{\natural*}\bm{a}_{kj}\bm{a}_{ij}^*\widetilde{\bm{x}}_i^t+\eta\xi_i\sum_{j=1}^m e_j\bm{b}_j\bm{a}_{ij}^*\widetilde{\bm{x}}^t_i\notag\\
    =& \widetilde{\bm{h}}_i^t-\eta\xi_i\sum_{k=1}^s\widetilde{\bm{h}}_k^t\|\bm{x}_k^\natural\|_2^2-\eta\xi_i\bm{v}_{i1}-\eta\xi_i\bm{v}_{i2}+\eta\xi_i\bm{v}_{i3} +\eta\xi_i\bm{v}_{i4},
    \end{align}
    where
    \[
    \begin{split}
    \bm{v}_{i1} &= \sum_{j=1}^m \sum_{k=1}^s\bm{b}_j\bm{b}_j^*\widetilde{\bm{h}}_k^t\left(\widetilde{\bm{x}}_k^{t*}\bm{a}_{kj}\bm{a}_{ij}^*\widetilde{\bm{x}}_i^t - {\bm{x}}_k^{\natural*}\bm{a}_{kj}\bm{a}_{ij}^*{\bm{x}}_i^\natural \right)\\
    \bm{v}_{i2}& =\sum_{j=1}^m\sum_{k=1}^s\bm{b}_j\bm{b}_j^*\widetilde{\bm{h}}_k^t\left({\bm{x}}_k^{\natural*}\bm{a}_{kj}\bm{a}_{ij}^*{\bm{x}}_i^\natural-\|\bm{x}_k^\natural\|_2^2\right)\\
    \bm{v}_{i3}& = \sum_{j=1}^m \sum_{k=1}^s\bm{b}_j\bm{b}_j^*\bm{h}_k^\natural\bm{x}_k^{\natural*}\bm{a}_{kj}\bm{a}_{ij}^*\widetilde{\bm{x}}_i^t\\
    \bm{v}_{i4}& = \sum_{j=1}^m e_j\bm{b}_j\bm{a}_{ij}^*\widetilde{\bm{x}}^t_i,
    \end{split}
    \]which is based on the fact that $\sum_{j=1}^m \bm{b}_j\bm{b}_j^* = \bm{I}_K$.
    In what follows, we bound the above four terms respectively.
    \begin{enumerate}[1.]
        \item  
        Based on the inductive hypothesis (\ref{h}), the incoherence inequality (\ref{inco}) and the concentration inequality  \cite{ma2017implicit}
        \begin{align}\label{121}
        \max_{1\leq i \leq s, 1\leq j\leq m}\left|\bm{a}_{ij}^*\bm{x}_i^\natural\right|\cdot\|\bm{x}_i^\natural\|_2^{-1}\leq 5 \sqrt{\log m},
        \end{align}
        with the probability at least $1-O(m^{-10})$, we have
        \begin{align}
        |\bm{b}_l^*\bm{v}_{i1}|\cdot\|\bm{h}_i^\natural\|_2^{-1}\leq0.1s\max_{1\leq k\leq s, 1\leq j \leq m}|\bm{b}_j^*\widetilde{\bm{h}}_k^t|\cdot\|\bm{h}_i^\natural\|_2^{-1},
        \end{align}
        as long as $C_3$ is sufficiently small,
        \begin{align}
    |\bm{b}_1^*\bm{v}_{i2}|\cdot\|\bm{h}_i^\natural\|_2^{-1}\leq&(0.1+0.1\sqrt{s})\max_{1 \leq k\leq s, 1\leq l\leq m}\left|\bm{b}_l^*\widetilde{ \bm{h}}_k^t\right|\cdot \notag\\
    &\|\bm{h}_i^\natural\|_2^{-1}+O(cC_4\frac{s\mu}{\sqrt{m}}\log^2 m),
    \end{align}
    as long as $m\gg s^2K\log^2 m$ with some sufficiently large constant $C_4>0$ and some sufficiently small constant $c>0$,
    \begin{align}
    |\bm{b}_l^*\bm{v}_{i3}|\cdot\|\bm{h}_i^\natural\|_2^{-1}
    \lesssim(1+C_3\sqrt{s})\frac{\mu}{\sqrt{m}},
    \end{align}
    as long as picking up sufficiently small $C_3>0$.
        \item We end the proof with controlling $|\bm{b}_l^*\bm{v}_{i4}|$:
        \begin{align}
        |\bm{b}_l^*\bm{v}_{i4}|\cdot\|\bm{h}_i^\natural\|_2^{-1}
        & \leq \sum_{j=1}^m |\bm{b}_l^*\bm{b}_j|\left\{\max_{1 \leq k\leq s, 1\leq j\leq m}\frac{|\bm{a}_{kj}^*\widetilde{\bm{x}}_k^t|}{\|\bm{x}_i^\natural\|_2}\right\}|e_j|\notag\\
        & \overset{(\text{i})}{\lesssim} \sigma^2\frac{\log^{3/2}m}{m}\leq \log m,
        \end{align}
        as long as $m\gg \sigma^2\sqrt{\log m}$. Here the step (i)  arises from the inequality that  with probability at least $1-O(m^{-10})$,
        \begin{align}\label{ax}
        &\max_{1\leq k\leq s, 1\leq j \leq m}\left|\bm{a}_{kj}^*\widetilde{\bm{x}}_k^t\right|\cdot\|\bm{x}_i^\natural\|_2^{-1}\notag\\
        &\leq\max_{1\leq k\leq s, 1\leq j \leq m}\frac{\left|\bm{a}_{kj}^*(\widetilde{\bm{x}}_k^t-\bm{x}_k^\natural)\right|}{\|\bm{x}_k^\natural\|_2}+\max_{1\leq k\leq s, 1\leq j \leq m}\frac{\left|\bm{a}_{kj}^*{\bm{x}}_k^\natural\right|}{\|\bm{x}_k^\natural\|_2}\notag\\
        &\leq 6\sqrt{\log m},
        \end{align}
        as long as $m$ is sufficiently large, the inequality that $
        \sum_{j=1}^m |\bm{b}_l^*\bm{b}_j|\leq 4\log m
        $ \cite[Lemma 48]{ma2017implicit}, and the assumption $|e_j| \leq \sigma^2/m\ll 1$ provided in Section \ref{form}.
    \end{enumerate}
     Putting the above results together, there exists some constant $C_8>0$ such that 
        \begin{align}
        \frac{\left|\bm{b}_l^*\widetilde{\bm{h}}_i^{t+1}\right|}{\|\bm{h}_i^\natural\|_2}&\leq(1+\delta) \Bigg\{\bigg(|\bm{b}_l^*\widetilde{ \bm{h}}_i^t|-\eta\xi_i\sum_{k=1}^s{|\bm{b}_l^*\widetilde{ \bm{h}}_k^t|}+(1+0.1\sqrt{s}\notag\\
        +0.1s)\!\!&\max_{1\leq k\leq s, 1\leq j \leq m}|\bm{b}_j^*\widetilde{\bm{h}}_k^t|\bigg)\cdot\|\bm{h}_i^\natural\|_2^{-1}+C_8(1+C_3\sqrt{s})\cdot\notag\\
        &\eta\xi_i\frac{\mu}{\sqrt{m}}+C_8cC_4\eta\xi_i\frac{s\mu}{\sqrt{m}}\log^2 m+C_8\eta\xi_i\log m\Bigg\} \notag\\
        &\leq C_4\frac{\mu}{\sqrt{m}}\log^2m.
        \end{align}
        The last step holds as long as $c>0$ is sufficiently small, i.e., $(1+\delta)C_8\eta\xi_ic\gg 1$, and the stepsize obeys $\eta>0$ and $\eta\asymp s^{-1}.$ To accomplish  the proof, we need to pick the sample size that
        $
        m\gg( \mu^2+\sigma^2)\tau K\log^4m,
    $
        where $\tau = c_{10}s^2\log^4m$ with some sufficiently large constant $c_{10}>0$.

    \section{Proof of Lemma \ref{L6}}\label{proof6}
    Recall that $\check{\bm{h}}_i^0$ and $\check{\bm{x}}_i^0$ are the leading left and right singular vectors of $\bm{M}_i$, $i=1,\cdots,s$, where
   $
    \bm{M}_i =  \sum_{j=1}^m\sum_{k=1}^s\bm{b}_j\bm{b}_j^*\bm{h}_k^\natural\bm{x}_k^{\natural*}\bm{a}_{kj}\bm{a}_{ij}^*+\sum_{j=1}^m e_j\bm{b}_j\bm{a}_{ij}^*
   $. By exploiting a variant of Wedin's sin$\Theta$ theorem \cite[Therorem 2.1]{dopico2000note}, we derive that
    \begin{align}\label{149}
    &\min_{\alpha_i\in\mathbb{C},|\alpha_i| =1}\left\| \alpha_i\check{\bm{h}}_i^0-\bm{h}_i^{\natural}\right\|_2+ \left\|\alpha_i\check{\bm{x}}_i^0-\bm{x}_i^{\natural}\right\|_2\notag\\
    \leq& \frac{c_1\|\bm{M}_i - \mathbb{ E}[\bm{M}_i]\|}{\sigma_1(\mathbb{ E}[\bm{M}_i]) - \sigma_2(\bm{M}_i)},
    \end{align}
    for some constant $c_1>0$, where $\sigma_1(\bm{A})$ and $\sigma_2(\bm{A})$ denote the largest eigenvalue and second largest eigenvalue of the matrix $\bm{A}$. In the view of the numerator of (\ref{149}), it has been specified in \cite[Lemma 6.16]{ling2018regularized} that for any $\xi>0$, 
    \begin{align}\label{150}
    \|\bm{M}_i-\mathbb{ E}[\bm{M}_i]\|\leq \frac{\xi}{\kappa\sqrt{s}},
    \end{align}
    with probability at least $1-O(m^{-10})$, provided that 
    $
    m \gg {c_2(\mu^2+\sigma^2)s\kappa^2K\log m}/{\xi^2},
  $
    for some constant $c_2>0$. Inspired by the technical method used in \cite[Section C.5]{ma2017implicit}. We further bound the denominator of (\ref{149}) via combining (\ref{150}) and Weyl's inequality, derived as
    $
    \sigma_1(\mathbb{ E}[\bm{M}_i]) - \sigma_2(\bm{M}_i)
    \geq 1-\frac{\xi}{\kappa\sqrt{s}}.
    $
     We then get  
    \begin{align}\label{153}
    \min_{\alpha_i\in\mathbb{C},|\alpha_i| =1}\left\| \alpha_i\check{\bm{h}}_i^0-\bm{h}_i^{\natural}\right\|_2+ \left\|\alpha_i\check{\bm{x}}_i^0-\bm{x}_i^{\natural}\right\|_2 \leq 2c_1\frac{\xi}{\kappa\sqrt{s}},
    \end{align}
    as long as $\xi < 1/2$. Moreover, we extend the bound (\ref{153}) to the inequality with the scaled singular vector $\bm{h}_i^0 = \sqrt{\sigma_1(\bm{M}_i)}\check{\bm{h}}_k^0$ and $\bm{x}_i^0 = \sqrt{\sigma_1(\bm{M}_i)}\check{\bm{x}}_k^0$ via using the inequality provided in \cite[Section C.5]{ma2017implicit}. It yields that
    \begin{align}
    &\left\| \alpha_i{\bm{h}}_i^0-\bm{h}_i^{\natural}\right\|_2+ \left\|\alpha_i{\bm{x}}_i^0-\bm{x}_i^{\natural}\right\|_2\notag\\
    \leq
    & \left\| \alpha_i\check{\bm{h}}_i^0-\bm{h}_i^{\natural}\right\|_2+ \left\|\alpha_i\check{\bm{x}}_i^0-\bm{x}_i^{\natural}\right\|_2 + 2\frac{\xi}{\kappa\sqrt{s}}.
    \end{align}
    We thus conclude that 
    \begin{align}
    &\min_{\alpha_i\in\mathbb{C}|\alpha_i| = 1}\left\{\left\|\alpha_i\bm{h}_i^0-\bm{h}_i^\natural\right\|_2+\left\|\alpha_i\bm{x}_i^0-\bm{x}_i^\natural\right\|_2\right\}\notag\\
    \leq& 2c_1\frac{\xi}{\kappa\sqrt{s}}+2\frac{\xi}{\kappa\sqrt{s}}.
    \end{align}
    Since $\xi$ is arbitrary, we accomplish the proof for (\ref{26}) by taking $m\gg (\mu^2+\sigma^2)s\kappa^2K\log m$. Under similar arguments, we can also establish (\ref{27}) in Lemma \ref{L6}, which is omitted here. We further obtain the last claim in Lemma \ref{L6} via combining the inequality (\ref{26}) and \cite[Lemma 54]{ma2017implicit}, given as
$
    ||\alpha_i^0|-1|\lesssim\frac{\xi}{\kappa\sqrt{s}}< 1/4,~ 1\leq i\leq s.
$
    \section{Proof of Lemma \ref{L7}}\label{proof7}
    With the similar strategy in \cite[Section C.6]{ma2017implicit}, we first show that the normalized singular vectors of $\bm{M}_i$ and $\bm{M}_i^{(l)}$, $i = 1,\cdots, s$ are close enough. We further extend this inequality to the scaled singular vectors, thereby converting the $\ell_2$ metric to the distance function defined in (\ref{dist}). We finally prove the incoherence of $\{\bm{h}_i\}_{i=1}^s$ with respect to $\{\bm{b}_j\}_{j=1}^m$.
    
    Recall that $\check{\bm{h}}_i^0$ and $\check{\bm{x}}_i^0$ are the leading left and right singular vectors of $\bm{M}_i$, $i=1,\cdots,s$, and $\check{\bm{h}}_i^{0,(l)}$ and $\check{\bm{x}}_i^{0,(l)}$ are the leading left and right singular vectors of $\bm{M}_i^{(l)}$, $i=1,\cdots,s$. By exploiting a variant of Wedin's sin$\Theta$ theorem \cite[Therorem 2.1]{dopico2000note}, we derive that
    \begin{align}\label{158}
    &\min_{\alpha_i\in\mathbb{C},|\alpha_i| =1}\left\| \alpha_i\check{\bm{h}}_i^0-\check{\bm{h}}_i^{0,(l)}\right\|_2+ \left\|\alpha_i\check{\bm{x}}_i^0-\check{\bm{x}}_i^{0,(l)}\right\|_2\notag\\
    \leq& \frac{c_1\left\|(\bm{M}_i-  \bm{M}_i^{(l)})\check{\bm{x}}_i^{0,(l)}\right\|_2 + c_1\left\|\check{\bm{h}}_i^{0,(l)*}(\bm{M}_i-  \bm{M}_i^{(l)})\right\|_2}{\sigma_1(\bm{M}_i^{(l)}) - \sigma_2(\bm{M}_i)},
    \end{align}
    for $i=1,\cdots,s$ with some constant $c_1>0$. According to \cite[Section C.6]{ma2017implicit}, for $i = 1,\cdots, s$, we have
    \begin{align}
    &\sigma_1(\bm{M}_i^{(l)}) - \sigma_2(\bm{M}_i)\notag\\
    \geq&~ 3/4 - \|\bm{M}_i^{(l)} - \mathbb{ E}[\bm{M}_i^{(l)}]\| - \|\bm{M}_i - \mathbb{ E}[\bm{M}_i]\|\geq 1/2,
    \end{align}
    where the last step comes from \cite[Lemma 6.16]{ling2018regularized} provided that $m\gg (\mu^2+\sigma^2)sK\log m$. As a result, we obtain that for $i=1,\cdots,s$
    \begin{align}\label{161}
    &\left\| \beta_i^{0,(l)}\check{\bm{h}}_i^0-\check{\bm{h}}_i^{0,(l)}\right\|_2+ \left\|\beta_i^{0,(l)}\check{\bm{x}}_i^0-\check{\bm{x}}_i^{0,(l)}\right\|_2\notag\\
    \leq&2c_1\left\{\left\|(\bm{M}_i-  \bm{M}_i^{(l)})\check{\bm{x}}_i^{0,(l)}\right\|_2 + \left\|\check{\bm{h}}_i^{0,(l)*}(\bm{M}_i-  \bm{M}_i^{(l)})\right\|_2\right\},
    \end{align}
    where 
    \begin{align}\label{162}
    \beta_i^{0,(l)}:=\argmin_{\alpha\in\mathbb{C},|\alpha| =1}\left\| \alpha\check{\bm{h}}_i^0-\check{\bm{h}}_i^{0,(l)}\right\|_2+ \left\|\alpha\check{\bm{x}}_i^0-\check{\bm{x}}_i^{0,(l)}\right\|_2.
    \end{align}
    It thus suffices to control the two terms on the right-hand side of (\ref{161}). Therein, 
    \begin{align}
    \bm{M}_i-\bm{M}_i^{(l)} = \bm{b}_l\bm{b}_l^*\sum_{k=1}^s\bm{h}_k^{\natural}\bm{x}_k^{\natural*}\bm{a}_{kl}\bm{a}_{il}^* + e_l\bm{b}_l\bm{a}_{il}^*.
    \end{align}
    Inspired the similar strategy used in \cite[Section C.6]{ma2017implicit}, we conclude that
        \begin{align}\label{1661}
    &\left\| \beta_i^{0,(l)}\check{\bm{h}}_i^0-\check{\bm{h}}_i^{0,(l)}\right\|_2+ \left\|\beta_i^{0,(l)}\check{\bm{x}}_i^0-\check{\bm{x}}_i^{0,(l)}\right\|_2\notag\\
    \leq & 2C_1\Bigg\{30\frac{\mu}{\sqrt{m}}\cdot \sqrt{\frac{s^2K\log^2 m}{m}}+ \frac{5\sigma^2}{m}\sqrt{\frac{K\log m}{m}}\notag\\
    &\left( 15\sqrt{\frac{\mu^2s^2K\log m}{m}}+3\sqrt{K}\frac{\sigma^2}{m}\right)|\bm{b}_l^*\check{\bm{h}}_i^0|\cdot\|\bm{h}_i^\natural\|_2^{-1}+\notag\\&\left(15\sqrt{\frac{\mu^2s^2K\log m}{m}}
    \sqrt{\frac{K}{m}}+3\sqrt{K}\frac{\sigma^2}{m}\right)\kappa\left\|\widetilde{\alpha}_i\check{\bm{h}}_i^0-\check{\bm{h}}_i^{0,(l)}\right\|_2
    \Bigg\}
    \end{align}
     via exploiting the fact that $\|\bm{b}_l\|_2 = \sqrt{K/m}$, the incoherence condition (\ref{inco}), the bound (\ref{121}), the assumption $|e_j|\leq \frac{\sigma^2}{m}\ll 1$ provided in Section \ref{form} and the condition that with probability exceeding $1-O(m^{-10})$,
    \begin{align}\label{165}
    \max_{1\leq l \leq m}|\bm{a}_{il}^*\check{\bm{x}}_i^{0,(l)}|\cdot\|\bm{x}_i^\natural\|_2^{-1}\leq 5\sqrt{\log m},
    \end{align}
    due to the independence between $\check{\bm{x}}_i^{0,(l)}$ and $\bm{a}_{il}$ \cite[Section C.6]{ma2017implicit}.
    
        Since the inequality (\ref{1661}) holds for any $|\widetilde{ \alpha}_i|=1$, we can pick up $\widetilde{ \alpha}_i = \beta^{0,(l)}$.
        With the assumption that $m\gg (\mu+\sigma^2) s\kappa K\log^{1/2} m$ such that $1-30c_1\kappa\sqrt{\frac{\mu^2s^2K\log m}{m}}\cdot
        \sqrt{\frac{K}{m}}-6\kappa\sqrt{K}\frac{\sigma^2}{m}\leq \frac{1}{2}$, we get 
        \begin{align}\label{169}
        &\max_{1 \leq i\leq s, 1\leq j\leq m}\left\| \beta_i^{0,(l)}\check{\bm{h}}_i^0-\check{\bm{h}}_i^{0,(l)}\right\|_2+ \left\|\beta_i^{0,(l)}\check{\bm{x}}_i^0-\check{\bm{x}}_i^{0,(l)}\right\|_2\notag\\
        \leq &120c_1\frac{\mu}{\sqrt{m}}\cdot \sqrt{\frac{s^2K\log^2 m}{m}}+ \frac{20c_1\sigma^2}{m}\sqrt{\frac{K\log m}{m}}\notag\\
        &+\left( 60c_1\sqrt{\frac{\mu^2s^2K\log m}{m}}+12c_1\sqrt{K}\frac{\sigma^2}{m}\right)\cdot\notag\\
        &~~\max_{1 \leq i\leq s, 1\leq j\leq m}|\bm{b}_l^*\check{\bm{h}}_i^0|\cdot\|\bm{h}_i^\natural\|_2^{-1}.
        \end{align}
        It thus suffices to control $\max_{1 \leq i\leq s, 1\leq j\leq m}|\bm{b}_l^*\check{\bm{h}}_i^0|\cdot\|\bm{h}_i^\natural\|_2^{-1}$. We further define that $\bm{M}_i\check{\bm{x}}^0 = \sigma_1(\bm{M}_i)\check{\bm{h}}_i^0$ and 
    $
        \bm{W}_i = \sum_{j=1}^m\bm{b}_j(\sum_{k\neq i}\bm{b}_j^*\bm{h}_k^\natural\bm{x}_k^{\natural*}\bm{a}_{kj}+e_j)\bm{a}_{ij}^*,
A    $
        which further leads to 
        \begin{align}\label{234}
        &\max_{1 \leq i\leq s}|\bm{b}_l^*\check{\bm{h}}_i^0|\cdot\|\bm{h}_i^\natural\|_2^{-1}\notag\\
        =& \frac{1}{\sigma_1(\bm{M}_i)\cdot\|\bm{h}_i^\natural\|_2}|\bm{b}_l^*\bm{M}_i\check{\bm{x}}_i^0|\notag\\
        \overset{(\text{i})}{\leq} &2\left(\sum_{j=1}^m|\bm{b}_l^*\bm{b}_j|\right)\max_{ 1\leq i\leq s, 1\leq j\leq m}\left\{|\bm{b}_j^*\bm{h}_i^\natural|\cdot|\bm{a}_{ij}^*\bm{x}_i^\natural|\cdot|\bm{a}_{ij}^*\check{\bm{x}}_i^0|\right\}\cdot\notag\\&\|\bm{h}_i^\natural\|_2^{-1}+2\|\bm{b}_l\|_2\cdot\|\bm{W}_i\|\cdot\|\check{\bm{x}}_i^0\|_2\cdot\|\bm{h}_i^\natural\|_2^{-1}\cdot\notag\\
        &\max_{ 1\leq i\leq s, 1\leq j\leq m}\bigg\{\left|\bm{a}_j^*\check{\bm{x}}_i^{0,(j)}\right|+\|\bm{a}_{ij}\|_2\left\|\beta_i^{0,(j)}\check{\bm{x}}_i^0-\check{\bm{x}}_i^{0,(j)}\right\|_2\bigg\}\notag\\
        \overset{(\text{ii})}{\leq}& \kappa\sqrt{\frac{K}{m\log m}}+200\frac{\mu\log^2m}{\sqrt{m}} +120\kappa\sqrt{\frac{\mu^2K\log^3m}{m}}\cdot\notag\\&\max_{ 1\leq i\leq s, 1\leq j\leq m}\left\|\beta_i^{0,(j)}\check{\bm{x}}_i^0-\check{\bm{x}}_i^{0,(j)}\right\|_2,
        \end{align}
        where $\beta_i^{0,(j)}$ is defined in (\ref{162}). Here, (i) arises from the low bound $\sigma_1(\bm{M}_i)\geq \frac{1}{2}$, the triangle inequality and the Cauchy-Schwarz inequality. The step (ii) comes from combining the assumption that $\|\bm{h}_i^\natural\|_2=\|\bm{x}_i^\natural\|_2$, for $i=1,\cdots,s $, $\max_{1\leq i \leq s}\|\bm{h}_i^\natural\|_2=1$, the incoherence condition (\ref{inco}), the bound (\ref{121}), the triangle inequality, the estimate: $\sum_{j=1}^m |\bm{b}_l^*\bm{b}_j|\leq 4\log m$ \cite[Lemma 48]{ma2017implicit}, $\|\bm{b}_l\| = \sqrt{{K}/{m}}$, $\|\check{\bm{x}}_i^0\|_2 = 1$, the inequality (\ref{165}) and the bound that with probability $1-O(m^{-9})$ \cite{ling2018regularized}, 
        \begin{align}\label{184}
        \|\bm{W}_i\|\leq \frac{\|\bm{h}_i^\natural\|_2\cdot\|\bm{x}_i^\natural\|_2}{2\sqrt{\log m}},
        \end{align}
        if $m\gg (\mu^2+\sigma^2)sK\log^2m$.
        Combining the bound (\ref{169}) and (\ref{234}) and the assumption $m\gg (\mu^2+\sigma^2)s^2\kappa K\log^2m$ such that
        $
        ( 60c_1\sqrt{\frac{\mu^2s^2K\log m}{m}}+12c_1\sqrt{K}\frac{\sigma^2}{m})\cdot120\kappa\sqrt{\frac{\mu^2K\log^3m}{m}}\leq 1/2,
    $
    we have
        \begin{align}\label{175}
        &\max_{1 \leq i\leq s,1\leq l\leq m}\left\| \beta_i^{0,(l)}\check{\bm{h}}_i^0-\check{\bm{h}}_i^{0,(l)}\right\|_2+ \left\|\beta_i^{0,(l)}\check{\bm{x}}_i^0-\check{\bm{x}}_i^{0,(l)}\right\|_2\notag\\
        \leq & C_4\frac{\mu}{\sqrt{m}}\sqrt{\frac{\mu^2s^2K\log^5m}{m}},
        \end{align}
        for some constant $C_4>0$. Taking the bound (\ref{175}) together with (\ref{234}), it yields 
    $
        \max_{1 \leq i\leq s, 1\leq l\leq m}|\bm{b}_l^*\check{\bm{h}}_i^0|\|\bm{h}_i^\natural\|_2^{-1}
        \leq~c_2 \frac{\mu\log^2m}{\sqrt{m}},
  $
        for some constant $c_2>0$, as long as $m\gg (\mu^2+\sigma^2)s\kappa^2K \log^2m$.
         
        We further scaled the preceding bounds to the final version. Based on \cite[Section C.6]{ma2017implicit}, one has
        \begin{align}\label{scale}
        &\left\|\alpha\bm{h}^0 - \bm{h}^{0,(l)}\right\|_2 + \left\|\alpha\bm{x}^0 - \bm{x}^{0,(l)}\right\|_2\notag\\
        \leq&\left\|(\bm{M}_i-  \bm{M}_i^{(l)})\check{\bm{x}}_i^{0,(l)}\right\|_2+
        6\bigg\{\left\| \alpha\check{\bm{h}}_i^0-\check{\bm{h}}_i^{0,(l)}\right\|_2\notag\\&+ \left\|\alpha\check{\bm{x}}_i^0-\check{\bm{x}}_i^{0,(l)}\right\|_2\bigg\}.
        \end{align}
    Taking the bounds (\ref{175}) and (\ref{scale}) collectively yields
    \begin{align}
    &\min_{\alpha_i\in\mathbb{C},|\alpha_i| =1}\left\| \alpha_i{\bm{h}}_i^0-\bm{h}_i^{0,(l)}\right\|_2+ \left\|\alpha_i{\bm{x}}_i^0-\bm{x}_i^{0,(l)}\right\|_2\notag\\
    \leq&~c_5\frac{\mu}{\sqrt{m}}\sqrt{\frac{\mu^2s^2K\log^5m}{m}},
    \end{align}
    for some constant $c_5>0$, as long as $m\gg (\mu^2+\sigma^2)s^2K\log^2m$.
    
    Furthermore, by exploiting the technical methods provided in \cite[Section C.6]{ma2017implicit}, we have
   $
    \mathrm{dist}\left(\bm{z}^{0,(l)},\widetilde{\bm{z}}^0\right)
    \leq 4c_5\frac{s\kappa\mu}{\sqrt{m}}\sqrt{\frac{\mu^2sK\log^5m}{m}}.
 $
         This accomplishes the proof for the claim (\ref{28}). We further move to the proof for the claim (\ref{29}). In terms of $|\bm{b}_l^*\widetilde{ \bm{h}}_i^0|$, one has
    \begin{align}
    \frac{|\bm{b}_l^*\widetilde{ \bm{h}}_i^0|}{\|\bm{h}_i^\natural\|_2}&\leq \frac{\left|\bm{b}_l^*\frac{1}{\overline{\alpha_i^0}}\bm{h}_i^0\right|}{\|\bm{h}_i^\natural\|_2}\leq\left|\frac{1}{\overline{\alpha_i^0}}\right|\frac{|\bm{b}_l^*{ \bm{h}}_i^0|}{\|\bm{h}_i^\natural\|_2}\leq 2 \frac{\left|\sqrt{\sigma_1(\bm{M}_i)}\bm{b}_l^*\check{\bm{h}}_i^0\right|}{\|\bm{h}_i^\natural\|_2}\notag\\
    &\leq2\sqrt{2}c_2 \frac{\mu\log^2m}{\sqrt{m}},
    \end{align}
    based on fact that 
    $
    \frac{1}{2}\leq  \sigma_1(\bm{M}_i) \leq 2.
  $

 \bibliographystyle{ieeetr}

\end{document}